\newcommand{\mel}{\end{eqnarray*}}
\def\fr{\begin{align*}}
\newcommand{\supp}{\operatorname{supp}}
\newcommand{\p}{\hspace{.05cm}}
\newcommand{\pl}{\hspace{.1cm}}
\newcommand{\pla}{\hspace{1.5cm}}
\renewcommand{\L}{{\mathcal L}}
\newcommand{\E}{{\mathcal E}}
\newcommand{\A}{{\mathcal A}}
\newcommand{\B}{{\mathcal B}}
\newcommand{\M}{{\mathcal M}}
\renewcommand{\S}{{\mathcal S}}
\newcommand{\T}{{\mathcal T}}
\newcommand{\N}{{\mathcal N}}
\newcommand{\G}{\Gamma}
\newtheorem{lemma}{Lemma}[section]
\newtheorem{prop}[lemma]{Proposition}
\newtheorem{theorem}[lemma]{Theorem}
\newtheorem{cor}[lemma]{Corollary}
\newtheorem{rem}[lemma]{Remark}
\newtheorem{definition}[lemma]{Definition}
\newcommand{\re}{\begin{rem}\rm}
\newcommand{\mar}{\end{rem}}
\newtheorem{exam}[lemma]{Example}
\newcommand{\bra}[1]{\langle{#1}|}
\newcommand{\ket}[1]{|{#1}\rangle}
\newcommand{\ketbra}[1]{|{#1}\rangle\langle{#1}|}
\newcommand{\qd}{\end{proof}\vspace{0.5ex}}
\newcommand{\prf}{\begin{proof}[\bf Proof:]}
\newcommand{\xspace}{\hbox{\kern-2.5pt}}
\newtheorem*{theorem*}{Theorem}
\DeclareMathOperator{\tr}{tr}
	\newcommand{\RR}{\mathbb{R}}
	\newcommand{\CC}{\mathbb{C}}
	\newcommand{\NN}{\mathbb{N}}
	\newcommand{\MM}{\mathbb{M}}
	\newcommand{\BB}{\mathbb{B}}
	\renewcommand{\L}{\mathcal{L}}
	\newcommand{\id}{\hat{1}}
	\renewcommand{\G}{\mathcal{G}}
\newcommand{\citen}[2][]{\ifthenelse { \equal {#1} {} } {\cite{#2}}{\cite[#1]{#2}}}
\title[Multiplicative Comparison of Relative Entropy]{Quasi-factorization and Multiplicative Comparison of Subalgebra-Relative Entropy}
\author{Nicholas LaRacuente}
\address{University of Chicago, Chicago, IL 60637, USA} \email[Nicholas LaRacuente]{nlaracuente@uchicago.edu}
\begin{document}
\begin{abstract}
Purely multiplicative comparisons of quantum relative entropy are desirable but challenging to prove. We show such comparisons for relative entropies between comparable densities, including the relative entropy of a density with respect to its subalgebraic restriction. These inequalities are asymptotically tight in approaching known, tight inequalities as perturbation size approaches zero. Based on these results, we obtain a kind of inequality known as quasi-factorization or approximate tensorization of relative entropy. Quasi-factorization lower bounds the sum of a density's relative entropies to several subalgebraic restrictions in terms of its relative entropy to their intersection's subalgebraic restriction. As applications, quasi-factorization implies uncertainty-like relations, and with an iteration trick, it yields decay estimates of optimal asymptotic order on mixing processes described by finite, connected, undirected graphs.
\end{abstract}
\maketitle

\section{Introduction}
As relative entropy is at the core of quantum information theory, bounds and comparisons often yield results across many operational contexts. General bounds on relative entropy are however difficult to prove, as this quantity can be infinite and involves nonlinear functions of noncommuting operators. A motivation for such inequalities is to derive a strong form of decay estimate known as a modified logarithmic Sobolev inequality (MLSI) for quantum Markov semigroups. Decay inequalities characterize decoherence, noise, thermal relaxation, and related processes. To prove MLSI, however, requires inequalities that are non-trivial for arbitrarily small relative entropies - any density-independent, additive constant that holds in general will cause the inequality to reduce to positivity when involved relative entropies approach zero. In this paper, we derive asymptotically tight relative entropy inequalities without additive constants. We use these inequalities to prove a tightening of the entropic uncertainty principle and asymptotically optimal MLSI inequalities for processes described by finite graphs.

The strong subadditivity (SSA) of von Neumann entropy is a quintessential entropy inequality. SSA's impacts range from quantum Shannon theory \cite{wilde_quantum_2013} to holographic spacetime \cite{caceres_strong_2014}. Lieb and Ruskai proved SSA in 1973 \cite{lieb_proof_1973}. A later form by Petz in 1991 \cite{petz_certain_1991} generalizes from subsystems to subalgebras. Let $\M$ be a von Neumann algebra and $\N$ be a subalgebra. Associated with $\N$ is a unique conditional expectation $\E_\N$ that projects an operator in $\M$ onto $\N$. In tracial algebras, we denote by $\E_\N$ the unique conditional expectation from $\M$ to $\N$ that is self-adjoint with respect to the trace. We call subalgebras $\S, \T \subseteq \M$ a commuting square if $\E_\S \E_\T = \E_\T \E_\S = \E_{\S \cap \T}$, where $\E_{\S \cap \T}$ is the conditional expectation onto their intersection. The Umegaki relative entropy for matrices is defined by $D(\rho \| \sigma) = tr(\rho(\ln \rho - \ln \sigma))$ when $\text{supp}(\rho) \subseteq \supp(\sigma)$ and is infinite otherwise. In terms of Umegaki's relative entropy, SSA becomes:
\begin{theorem}[Petz's Conditional Expectation SSA] \label{thm:petzssa}
Let $\E_\S, \E_\T$ be conditional expectations to subalgebras $\S, \T \subseteq \M$. If $\S$ and $\T$ form a commuting square, then for all densities $\rho$,
\begin{equation} \label{eq:relcondexpsubadd}
D(\rho \| \E_\S(\rho)) + D(\rho \| \E_\T(\rho)) \geq D(\rho \| \E_{\S \cap \T}(\rho))
\end{equation}
\end{theorem}
As noted in \citen{gao_unifying_2017}, theorem \ref{thm:petzssa} implies SSA and an uncertainty relation for mutually unbiased bases, joining these concepts. In \citen{gao_unifying_2017}, it is shown that the condition $[\E_\S, \E_\T] = 0$ is necessary as well as sufficient for SSA in finite dimensions. That
\begin{equation} \label{eq:addcorr}
 D(\rho \| \E_\S(\rho)) + D(\rho \| \E_\T(\rho)) \geq D(\rho \| \E_{\S \cap \T}(\rho)) - c
\end{equation}
can only hold for all $\rho$ when $[\E_\S, \E_\T] \neq 0$ if $c > 0$. When conditional expectations don't commute, there are nonetheless inequalities approximating strong subadditivity. Perturbations of entropy and its inequalities often take additive forms \cite{berta_uncertainty_2010, winter_tight_2016, gao_capacity_2018, gao_uncertainty_2018}. If entropies are sufficiently small, then equation \eqref{eq:addcorr} becomes trivial and weaker than the statement that the left hand side is non-negative. Multiplicative bounds on relative entropy exist \cite{audenaert_continuity_2005, audenaert_continuity_2011, vershynina_upper_2019}, but these are constrained by the infinite divergence of entropy relative to states with smaller support.

The special case of $D(\rho \| \id/d)$, however, has range $[0, \ln d]$ on a system of dimension $d$. More generally, for a doubly stochastic conditional expectation $\E$, we refer to the function on a density $\rho$ given by $D(\rho \| \E(\rho))$ as \textit{subalgebra-relative entropy}. Having bounded range, subalgebra-relative entropy may support much stronger inequalities than general relative entropy. $D(\rho \| \id/d)$ is a special case of subalgebra-relative entropy, in which the algebra is $\CC 1$, that of the complex scalars. Subalgebra-relative entropy appears as measures of resources such as quantum coherence \cite{winter_operational_2016} and reference frame asymmetry \cite{vaccaro_tradeoff_2008, gour_measuring_2009, marvian_extending_2014}. One may write the conditional mutual information as $D(\rho \| \E_\S(\rho)) + D(\rho \| \E_\T(\rho)) - D(\rho \| \E_{\S \cap \T}(\rho))$ for subsystem-restricted algebras $\S$ and $\T$, which then applies to derived entanglement measures \cite{tucci_quantum_1999, tucci_entanglement_2002, christandl_squashed_2004}. Subalgebra-relative entropy is a natural measure of decoherence from processes that are self-adjoint with respect to the Hilbert-Schmidt inner product \cite{bardet_hypercontractivity_2018, gao_fisher_2020}. The maximum subalgebra-relative entropy for a given conditional expectation connects closely to the theory of subalgebra indices \cite{gao_relative_2020}. Hence subalgebra-relative entropy is fundamental to quantum information, motivating inequalities on this form.

More broadly, relative entropy should be comparable between densities that are comparable up to constants in the Loewner order, where these constants determine the strength of such comparisons. Again, there are simple ways to obtain bounds with additive corrections of this form. Multiplicative comparisons are more challenging but more powerful in many settings, especially when all of the relative entropies involved could be arbitrarily small. As an example in Section \ref{sec:ucr}, we show an uncertainty-like relation for incompatible projective measurements that remains non-trivial even for states approaching complete mixture. We contrast this with conventional entropic uncertainty relations, which usually reduce to positivity of relative entropy in these circumstances.

The notion of quasi-factorization for classical entropies was introduced and shown in \citen{cesi_quasi-factorization_2001}, yielding a multiplicative generalization of strong subadditivity for non-commuting conditional expectations. Several works consider quantum generalizations or related properties in a variety of settings \cite{capel_quantum_2018, bardet_modified_2021, bluhm_weak_2021}. This form of inequality is also known as approximate tensorization \cite{caputo_approximate_2015, bardet_approximate_2022} \footnote{We thank the authors of  \citen{bardet_approximate_2022} for access to an early draft that considered such an inequality in parallel with the writing of this manuscript. Their present version derives a comparable, multiplicative form in some cases.}, including a ``strong" form that is fully multiplicative and a ``weak" form that includes an additive correction term. In this paper, we refer primarily to the multiplicative form, which we generalize to any finite number of subalgebras:
\begin{definition}[Multiplicative Quasi-factorization]
Let $\{\E_j : j \in 1...J \in \NN\}$ be a set of conditional expectations and $\E$ the conditional expectation to their intersection algebra. We say that this set satisfies a strong quasi-factorization (SQF, or specifically $(\alpha_j)$-CSQF) if
\[ \sum_j \alpha_j D(\rho \| \E_j(\rho)) \geq D(\rho \| \E(\rho)) \pl. \]
for some $(\alpha_j > 0)_{j=1}^J$. We say that it satisfies complete, strong quasi-factorization (CSQF) if $\{\E_j \otimes \id\}$ has SQF for any finite-dimensional extension by an auxiliary system, where $\id$ acts as the identity on that auxiliary system. When $\alpha_j = \alpha_l$ for all $l,j \in 1...J$, we may write $\alpha$-(C)SQF.
\end{definition}
The original version of this paper, \citen[v1]{laracuente_quasi-factorization_2021}, showed such a bound for subalgebras with scalar intersection. In that version, it was left as a conjecture that the dimension could be replaced by a subalgebra index, the result generalized to arbitrary sets of subalgebras, and the inequality made tensor-stable.

Later, a new technique developed by Gao and Rouz\'e \cite{gao_complete_2021} showed general, tensor-stable, multiplicative quasi-factorization with constant determined by a subalgebra index $C$, rather than the system's dimension. Their result shows the existence of quasi-factorization for all finite-dimensional quantum systems. Incorporating one of the techniques of that result, we find a strengthened quasi-factorization that is asymptotically tight in the following sense: for a pair of conditional expectations $\E_1, \E_2$ with intersection conditional expectation $\E$ such that $\|\E_1 \E_2 - \E\|_{\Diamond} \rightarrow 0$, our bound approaches strong subadditivity. Our results have an asymptotic $\alpha \sim O(\ln C)$ dependence on the index for large $C$, improving on the asymptotic dependence from preceding versions of \citen{gao_complete_2021}. We find a strong, complete quasi-factorization like that in \citen{gao_complete_2021}, which also has asymptotic tightness like in \citen{bardet_approximate_2022} or \citen[v1]{laracuente_quasi-factorization_2021}, and logarithmic index scaling like that in \citen[v1]{laracuente_quasi-factorization_2021}. Furthermore, we explicitly show this for any finite number of conditional expectations.

Decay and decoherence are some of the most vexing challenges to quantum technology. We say that a semigroup $(\Phi^t)_{t=0}^\infty$ with fixed point conditional expectation $\Phi^\infty$ has MLSI with constant $\lambda$ ($\lambda$-MLSI) if for all $t \in \RR^+$,
\[ D(\Phi^t(\rho) \| \Phi^\infty(\rho)) \leq \exp(- \lambda t) D(\rho \| \Phi^\infty(\rho)) \pl. \]
MLSIs were introduced for classical systems in \citen{arnold_logarithmic_1998, bobkov_modified_2003} and for quantum systems in \citen{kastoryano_quantum_2013}, then recalled in \citen{bardet_estimating_2017}. MLSI was inspired by the earlier notion of the logarithmic Sobelev inequality \cite{gross_logarithmic_1975, gross_hypercontractivity_1975}, which does not hold as generally \cite{bardet_hypercontractivity_2018}. As defined in \citen{gao_fisher_2020}, a semigroup has $\lambda$-CMLSI if for all extensions by an auxiliary system $B$ and joint densities $\rho$ on the original system and $B$,
\[ D((\Phi^t \otimes \id^B)(\rho) \| (\Phi^\infty \otimes \id^B)(\rho))
	\leq \exp(- \lambda t) D(\rho \| (\Phi^\infty \otimes \id^B)(\rho)) \pl. \]
As a primary application, (C)SQF allows us to derive concrete (C)MLSI constants for quantum Markov semigroups. These do not follow from additive perturbations of strong subadditivity such as Equation \eqref{eq:addcorr}. It is shown in \citen{bardet_group_2021} that CMLSI upper bounds capacities of quantum channels, which are famously difficult to calculate due to superadditivity and hardness of numerics for high-dimensional quantum entropies. Furthermore, we note in that work that CMLSI implies tensor-stable decoherence time estimates, an important problem for quantum computing and memory.

The culminating result of this paper, Theorem \ref{graphthm}, derives CMLSI for semigroups described by finite, undirected graphs as represented on a basis in Hilbert space. This Theorem addresses an open problem, \citen[Remark 7.5]{gao_fisher_2020}. This example illustrates a broader principle known as transference, in which bounds on mixing rates of classical channels imply relative entropy decay rate bounds for quantum channels with related structure. Transference is used previously in \citen{gao_fisher_2020, bardet_group_2021}. This current work extends the idea to imply tensor-stable relative entropy inequalities based on order inequalities from classical vector spaces. The same principle applies to channels constructed from finite subgroups of the unitary group.

\subsection{Primary Contributions}
A quantum channel is a completely positive, trace-preserving map. By $\E$ we may denote a channel or a conditional expectation. By $\E_{\sigma}, \E_{\sigma *}$ we respectively denote a conditional expectation weighted by state $\sigma$ as in Section \ref{sec:asa} and its predual with respect to the trace. We write $\E_{\N, \sigma}$ and $\E_{\N, \sigma *}$ for a weighted conditional expectation to subalgebra $\N$ in order to explicitly emphasize the subalgebra. By $D(\cdot \| \cdot)$ we denote the relative entropy and by $H(\cdot)$ the von Neumann entropy. By $\id$ we denote the identity matrix. For systems $A,B,C,...$ or von Neumann algebras $\M,\N,...$ we denote by $|A|$ or $|\M|$ the dimension. The subsystem entropy is denoted $H(A)_\rho := H(\rho^A)$, where $\rho^A$ denotes the restriction to subsystem $A$ of a multipartite state $\rho$ on $A \otimes B \otimes C \otimes ...$. The state $\id/|A|$ or $\id/|\M|$ is the respective complete mixture on $A$ or $\M$. For a pair of densities $\rho,\sigma$, we use $\rho \geq \sigma$ (respective $\leq$, $>$, $<$) to denote the Loewner order. For a pair of channels $\Phi, \Psi$, we write $\Phi \geq \Psi$ if $\Phi(\rho) \geq \Psi(\rho)$ for all input densities $\rho$, and $\Phi \geq_{cp} \Psi$ if $(\Phi \otimes \id^B) \geq (\Psi \otimes \id^B)$ for all extensions via an auxiliary system $B$. All results of this paper assume finite-dimensional densities. Entropies should be read as using the natural logarithm, though when an inequality multiplicatively relates entropies to other entropies and logarithm-containing quantities, the inequality holds as long as the same base is taken for all logarithms.

As Proposition \ref{lem:worstsig}, we prove that for any densities $\rho, \sigma$ of the same dimension such that $\rho \succ \sigma$ ($\rho$ majorizes $\sigma$) and any $\zeta \in [0,1]$,
\[ D(\rho \| (1 - \zeta) \id/d + \zeta \sigma) \geq D(\rho \| (1 - \zeta) \id/d + \zeta \rho) \pl.\]
Combining Proposition \ref{lem:worstsig} with estimates of the derivatives of relative entropy with respect to complete mixture and the iteration technique of Section \ref{sec:asa}, we obtain a multiplicative bound on relative entropy to complete mixture as Theorem \ref{thm:relent}:
\begin{equation*}
D(\rho \| (1 - \zeta) \id/d + \zeta \sigma) \geq (1-a) D(\rho \| \id/d)
\end{equation*}
for any $a \in [0,1], b \in (0,1)$, and $\zeta \leq a \min \{(1-b)/(d + a(1-b) + 1), b/((1 - a b) d + a b + 1) \}$. This result as asymptotically tight in that we may take $a \rightarrow 1$ as $\zeta \rightarrow 0$.

In section \ref{sec:relent2}, we use the functional calculus as in \citen{gao_complete_2021} to generalize the relative entropy comparisons of Section \ref{sec:relent} from complete mixture to arbitrary conditional expectations. Rather than directly using the integral form of relative entropy for desired inequalities as in \citen{gao_complete_2021}, we use similar techniques to derive a perturbation result comparing relative entropy of related densities. This perturbative result, Theorem \ref{revconv}, is reminiscent of the triangle inequality for norms, allowing one to upper bound $D(\rho \| (1-\zeta) \sigma + \zeta \eta)$ in terms of $D(\rho \| \sigma)$ and $D(\rho \| \eta)$. Indeed, the primary idea of this proof is that by comparing the relative entropy to a weighted 2-norm, we may transfer the triangle inequality from the norm to entropy up to some constant factors.
\begin{theorem}[Triangle-like Relative Entropy Comparison]  \label{revconv}
Let $\rho, \sigma, \omega$ be densities such that $(1-\zeta) \sigma \leq \omega \leq (1 + \zeta(c-1)) \sigma$ for constants $\zeta \in (0,1)$ and $c \geq 1$. Let $\eta := (\omega - (1 - \zeta) \sigma)/\zeta$, so that $\omega = \zeta \eta + (1 - \zeta) \sigma$. Assume that $\rho \in \text{supp}(\sigma)$. Then $\eta$ is a density,
\begin{equation*}
\begin{split}
& (1 - \zeta)^2 D(\rho \| \sigma) \leq (1 + \zeta c + \zeta^2 (c-1) ) D(\rho \| \omega)
	+ \zeta (1 + \zeta) c D(\rho \| \eta) \text{, and}
\end{split}
\end{equation*}
\[ (1- 4 \zeta - \zeta^2) D(\rho \| \sigma) \leq (1 + \zeta c + 2 \zeta^2 (c-1) ) D(\rho \| \omega)
	+ 2 \zeta(1 + \zeta) \frac{(c-1)^2}{c(\ln c - 1) + 1} D(\eta \| \sigma)  \pl. \]
\end{theorem}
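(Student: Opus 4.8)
The plan is to treat both displayed inequalities as quantitative \emph{reverse} convexity estimates. First I would record the preliminaries. Since $\omega \ge (1-\zeta)\sigma$, the operator $\omega - (1-\zeta)\sigma$ is positive, so $\eta = (\omega - (1-\zeta)\sigma)/\zeta \ge 0$; and $\tr\eta = (\tr\omega - (1-\zeta)\tr\sigma)/\zeta = (1-(1-\zeta))/\zeta = 1$, so $\eta$ is a density. The upper comparison $\omega \le (1+\zeta(c-1))\sigma$ rearranges to the Loewner inequality $\eta \le c\sigma$, which is the only route by which the constant $c$ enters the estimates.

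The observation driving the argument is that $\omega = (1-\zeta)\sigma + \zeta\eta$ is a convex combination, so joint convexity of relative entropy gives the \emph{easy} direction $D(\rho\|\omega) \le (1-\zeta)D(\rho\|\sigma) + \zeta D(\rho\|\eta)$; the theorem is the reverse, an upper bound on $D(\rho\|\sigma)$. I would start from the identity $D(\rho\|\sigma) - D(\rho\|\omega) = \tr[\rho(\log\omega - \log\sigma)]$ and control the right-hand side along the affine path $\gamma(t) = (1-t)\sigma + t\omega$, using the integral representation $\log\omega - \log\sigma = \int_0^\infty (\sigma+s)^{-1}(\omega-\sigma)(\omega+s)^{-1}\,ds$ together with convexity of $t \mapsto D(\rho\|\gamma(t))$ to produce the $(1-\zeta)^2$ and the second-order $\zeta^2$ coefficients. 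The naive estimate, using only operator monotonicity of $\log$ on $(1-\zeta)\sigma \le \omega \le (1+\zeta(c-1))\sigma$, yields merely the \emph{additive} bound $D(\rho\|\sigma) \le D(\rho\|\omega) + \log(1+\zeta(c-1))$; the content of the theorem is to upgrade this additive term into something proportional to a relative entropy. For this I would invoke the operator comparison lemmas of \cite{gao_complete_2021}, which convert the comparability of $\sigma,\omega,\eta$ into corrections proportional to a divergence rather than to an additive constant.

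The two displayed inequalities then correspond to two ways of bounding the same perturbation term. Estimating it against $D(\rho\|\eta)$, with the factor $c$ supplied by $\eta \le c\sigma$, produces the first inequality with coefficient $\zeta(2+\zeta)c$. Estimating it instead against $D(\eta\|\sigma)$ produces the genuinely triangle-like second form, in which the perturbation carrying $\sigma$ to $\omega$ is measured by the divergence of its ``direction'' $\eta$. Here the factor $1 + \frac{(c-1)^2}{c(\ln c - 1)+1}$ is precisely a ratio of a chi-squared-type quantity $(c-1)^2$ to the Kullback--Leibler-type quantity $c\ln c - c + 1$ (the Bregman divergence of $x\ln x$ at $c$ relative to $1$), exactly what arises when one compares these two divergences at the scalar comparison factor $c$; I would verify this scalar ratio and then lift it to the noncommutative setting using $\eta \le c\sigma$. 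Finally I would check the $\zeta \to 0$ asymptotics: both the left coefficient and the $D(\rho\|\omega)$ coefficient tend to $1$ while the correction coefficients vanish, recovering the tight strong-subadditivity-type comparison.

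The main obstacle is exactly the passage from additive to multiplicative control: bounding the perturbation integral by a constant multiple of $D(\rho\|\eta)$ or $D(\eta\|\sigma)$, with constants sharp enough to be asymptotically tight as $\zeta \to 0$, rather than by the cruder additive $\log(1+\zeta(c-1))$. Establishing the noncommutative chi-squared-to-KL comparison with the stated constant, where $\eta$ and $\sigma$ need not commute, and carefully tracking the first- and second-order-in-$\zeta$ bookkeeping through the comparison lemmas, is where the real work lies.
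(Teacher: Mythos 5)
You correctly establish the preliminaries ($\eta \geq 0$ and $\tr \eta = 1$, so $\eta$ is a density, and $\eta \leq c\sigma$ from the upper Loewner bound), you correctly identify the factor $(c-1)^2/(c(\ln c - 1)+1)$ as the $\chi^2$-to-KL comparison supplied by Lemma~\ref{keylem} (Lemma 2 of \cite{gao_complete_2021}), and you correctly locate the difficulty in upgrading additive to multiplicative control. But at exactly that point the proposal defers to ``invoke the operator comparison lemmas,'' which is where the entire content of the theorem lies, so no proof is actually given. Worse, the concrete route you do propose --- starting from $D(\rho\|\sigma) - D(\rho\|\omega) = \tr[\rho(\log\omega - \log\sigma)]$ and the resolvent representation of $\log\omega - \log\sigma$ --- is structurally mismatched with the statement. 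It casts the theorem as $D(\rho\|\sigma) \leq D(\rho\|\omega) + (\text{correction})$, whereas the theorem's coefficient of $D(\rho\|\omega)$ is $(1+\zeta(c+1)+2\zeta^2(c-1))/(1-\zeta)^2 > 1$: that excess prefactor is not a correction term to be absorbed but must be generated by a re-weighting mechanism, which your route lacks. Moreover, your trace term is linear in $\omega - \sigma$ and sandwiched between two \emph{different} resolvents $(\sigma+s)^{-1}$ and $(\omega+s)^{-1}$ with $\rho$ in between; bounding such an object by relative entropies requires a Cauchy--Schwarz step in weighted inner products that you do not supply and that does not obviously close with the stated constants.

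The paper's proof never forms that first-order difference. It works with the second-order integral representation of Lemma~\ref{normform}, $D(\rho\|\sigma) = \int_0^1\int_0^s \|\rho - \sigma\|^2_{(\rho,\sigma)_t^{-1}}\,dt\,ds$ with $(\rho,\sigma)_t = (1-t)\sigma + t\rho$, decomposes $\rho - \sigma = \tfrac{1}{1-\zeta}(\rho - \omega) + \tfrac{\zeta}{1-\zeta}(\eta - \rho)$, and applies the triangle inequality for the inverse-weighted norm (Lemma~\ref{wnorm}) together with the elementary estimate of Remark~\ref{inttrick}, leaving $(1+2\zeta)\|\rho-\omega\|^2_{(\rho,\sigma)_t^{-1}} + \zeta(2+\zeta)\|\rho-\eta\|^2_{(\rho,\sigma)_t^{-1}}$ inside the double integral. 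The prefactors then come from Lemma~\ref{normcomp} applied along the whole interpolation path: $(1-\zeta)(\rho,\sigma)_t \leq (\rho,\omega)_t \leq (1+\zeta(c-1))(\rho,\sigma)_t$ converts the first term into $(1+\zeta(c-1))\|\rho-\omega\|^2_{(\rho,\omega)_t^{-1}}$, which re-integrates to $D(\rho\|\omega)$, while $(\rho,\eta)_t \leq c(\rho,\sigma)_t$ handles the second term and yields the first inequality. The second inequality replaces that last step by a further triangle-inequality split $\|\rho-\eta\| \leq \|\rho-\sigma\| + \|\eta-\sigma\|$, the bound $\|\eta-\sigma\|^2_{(\rho,\sigma)_t^{-1}} \leq (1-t)^{-1}\|\eta-\sigma\|^2_{\sigma^{-1}}$, Lemma~\ref{keylem} to reach $D(\eta\|\sigma)$ (the only place the $\chi^2$/KL ratio enters, via $\eta \leq c\sigma$ exactly as you guessed), and the computation $\int_0^1\int_0^s (1-t)^{-1}\,dt\,ds = 1$. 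The missing idea in your proposal is precisely this: all bookkeeping must happen at the level of the integrand $\|\cdot\|^2_{(\rho,\sigma)_t^{-1}}$, where differences can be split by a genuine norm triangle inequality and weights swapped by Loewner comparison of interpolating densities --- neither operation is available once you have collapsed everything to the scalar $\tr[\rho(\log\omega - \log\sigma)]$.
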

Theorem \ref{revconv} is asymptotically tight in approaching the equality $D(\rho \| \sigma) = D(\rho \| \omega)$ as $\zeta \rightarrow 0$. When $D(\rho \| \sigma) \geq D(\eta \| \sigma)$, and the Theorem's conditions are satisfied,
\[ \Big (1 - 2 \zeta(1 + \zeta) \frac{(c-1)^2}{c(\ln c - 1) + 1} - 4 \zeta - \zeta^2 \Big ) D(\rho \| \sigma) 
	\leq (1 + \zeta c + 2 \zeta^2 (c-1) ) D(\rho \| \omega) \pl. \] Theorem \ref{revconv}'s connection to quasi-factorization and subalgebra-relative entropy is apparent via Corollary \ref{cor:simpleG}: for any density $\rho$, quantum channels $\E,\Phi$ such that $\Phi \E = \E$, and constants $\zeta \in (0,1)$, $c \geq 1$ such that $(1-\zeta) \E(\rho) + \zeta \Phi(\rho) \leq (1 + \zeta(c-1)) \E(\rho)$, 
\begin{equation*}
D(\rho \| (1-\zeta) \E(\rho) + \zeta \Phi(\rho)) \geq \beta_{c, \zeta} D(\rho \| \E(\rho)) \text{ such that } \beta_{c,\zeta} = 1 - O(c \zeta) \pl.
\end{equation*}
As with the Theorem, Corollary \ref{cor:simpleG} is asymptotically tight in that $\beta_{c, \zeta} \rightarrow 1$ as $\zeta \rightarrow 0$. The more commonly stated criterion,
\begin{equation}
(1-\epsilon) \E \leq_{cp} \Phi \leq_{cp} (1+\epsilon) \E \pl,
\end{equation}
implies the conditions of Corollary \ref{cor:simpleG} with $\zeta = \epsilon, c = 2$. Conversely, we show as Proposition \ref{ephi} that with additional assumptions, $D(\rho \| \Phi(\rho))$ can be upper bounded in terms of $D(\rho \| \E(\rho))$.

Combining Corollary \ref{cor:simpleG} with the iterative technique of Section \ref{sec:asa} forms the base of a quasi-factorization result:
\begin{theorem}[Quasi-factorization] \footnote{After a version of Theorem \ref{asa2} appeared in v3 of this paper, \citen{gao_complete_2021} added comparable results (theorems 5.3 \& 5.4, corollary 5.5 in that paper). Nonetheless, the techniques of our Section \ref{sec:asa} originally appeared in v1 of this paper. It is these techniques that yield both the logarithmic dependence on $c$ described in Remark \ref{tightness} and the extension from two to many conditional expectations. Furthermore, Theorem \ref{asa2} has the advantage of approaching strong subadditivity in the appropriate commuting square limits while yielding a logarithmic (or no) index dependence otherwise.} \label{asa2}
Let $\{(\N_j, \E_{j *} ) : j = 1...J \in \NN, \N_j \subseteq \M\}$ be a set of $J \in \NN$ von Neumann algebras and associated (predual) conditional expectations within von Neumann algebra $\M$ and weighted respectively by densities $(\sigma_j)$. Let $\E$ be a channel such that $\E \E_{j *} = \E_{j *} \E = \E$ for each $\E_{j *}$.

Let $S = \cup_{m \in \NN} \{1...J\}^{\otimes m}$ be the set of finite sequences of indices. For any $s \in S$, let $\E^s$ denote the composition $\E_{j_1 *} ... \E_{j_m *}$ for $s = (j_1, ..., j_m)$. Let $\mu : S  \rightarrow [0,1]$ be a probability measure on $S$ and $k_{j,s}$ upper bound the number of times $\E_{j *}$ appears in each sequence $s$. If
\begin{equation} \label{eq:kzetacond}
 (1-\zeta) \E \leq_{cp} \sum_{s \in S} \mu(s) \E^s \leq_{cp} (1 + \zeta(c-1)) \E  \pl,
\end{equation}
then $\E$ is a projection, and for $\beta_{c,\zeta}$ given in Corollary \ref{cor:simpleG} and all input densities $\rho$ (including those with arbitrary extensions to auxiliary systems),
\[ \sum_{s \in S} \mu(s) \sum_j k_{s,j} D(\rho \| \E_{j *}(\rho)) \geq \beta_{c, \zeta} D(\rho \| \E(\rho)) \pl. \]
\end{theorem}
Recall the subalgebra indices
\begin{equation} \label{eq:indices}
\begin{split}
C(\M : \N) & = \inf \{c > 0 | \rho \leq c \E_\N(\rho) \forall \rho \in \M_* \} \\
C_{cb}(\M : \N) & = \sup_{n \in \NN} C(\M \otimes \MM_n : \N \otimes \MM_n)
\end{split}
\end{equation}
as considered in \citen{gao_relative_2020, gao_complete_2021} and originally by Pimsner and Popa \cite{pimsner_entropy_1986} as a finite-dimensional analog of the Jones index \cite{jones_index_1983}. When $\E$ is the (doubly stochastic) conditional expectation from $\M$ to $\N$, and $\Phi$ is a channel that leaves $\N$ invariant, $\rho, \Phi(\rho) \leq C(\M : \N) \E(\rho)$, and the bound holds up to arbitrary extensions with $C_{cb}(\M : \N)$ replacing $C(\M : \N)$. As Corollary \ref{index}, we show explicitly index-based bounds following Theorem \ref{asa2}. Furthermore, this Corollary shows how to obtain an $\alpha$-(C)SQF constant scaling logarithmically with the index.

Though $c$ can be upper bounded by the index as in Corollary \ref{index}, sometimes there is a better upper bound based on specific knowledge of the channels involved. As explained in Section \ref{sec:groups}, Theorems \ref{revconv} and \ref{asa2} are naturally strong in contexts reminiscent of transference, an idea present in \citen{gao_capacity_2018, gao_fisher_2020, li_graph_2020, bardet_group_2021}. Transference may compare quantum channels through analogous classical channels. When several quantum channels are weighted averages of the same unitary conjugations, we may often derive Loewner order inequalities by studying how operations and compositions affect the weights. These inequalities naturally map to the conditions of Theorems \ref{revconv} and \ref{asa2}.

Though it might not be obvious that all sets of subalgebras satisfy equation \eqref{eq:kzetacond} for some finite $k$ and $\zeta > 0$, Proposition \ref{near} shows that when the operator norm distance between a unital channel $\Phi$ and a conditional expectation $\E$ acting on half a Bell pair is sufficiently small, and $\E \Phi = \E$, $\Phi$ must be a convex combination of $\E$ with another channel $\Psi$ such that $\E \Psi = \E$. As shown in \citen{junge_noncommutative_2007}, it is always possible to find a convex combination of chains of conditional expectations from a finite set that approaches the conditional expectation to their intersection. Hence:
\begin{rem} \label{tightness}
For any set of finite-dimensional, doubly-stochastic conditional expectations $\{\E_j\}_{j=1}^J$, a quasi-factorization inequality in the form of Theorem \ref{asa2} holds.

Quasi-factorization is asymptotically tight. In particular, consider a set of continuously parameterized (predual) conditional expectations $\E_{1 *}^{(\theta)}, ..., \E_{J *}^{(\theta)}$ with intersection conditional expectation $\E$. If $\E_{1 *}^{(\theta)} ... \E_{J *}^{(\theta)}(\rho) \rightarrow \E(\rho)$ in diamond norm for all input densities $\rho$ as $\theta \rightarrow 0$, then $\{\E_j\}$ has $\alpha_\theta$-(C)SQF with $\alpha_\theta \rightarrow 1$. When $J=2$, such an arrangement approaches strong subadditivity as the conditional expectations approach a commuting square.
Proposition \ref{near} and Theorem \ref{asa2} yield a concrete continuity bound on the convergence rate.
\end{rem}

As a primary application, quasi-factorization allows us to combine MLSI estimates. Let $\Phi^t$ be a family of quantum channels in dimension $d$ parameterized by $t \in \RR^+$, such that $\Phi^s \circ \Phi^t = \Phi^{s+t}$ for all $s,t \in \RR^+$. This family of channels is thereby a semigroup under composition, and there exists a Lindbladian generator given by $\L = \lim_{t \rightarrow 0} (\id - \Phi^t)$ such that $\Phi^t = e^{- t \L}$. As introduced , we say that $\L$ has $\lambda$-MLSI if for any input density $\rho$,
\begin{equation}
D(\Phi^t(\rho) \| \Phi^\infty(\rho)) \leq e^{-\lambda t} D(\rho \| \Phi^\infty(\rho)) \pl.
\end{equation}
A Lindbladian $\L$ has $\lambda$-CMLSI (complete, modified logarithmic Sobolev inequality) if for any bipartite density $\rho^{AB}$,
\begin{equation}
D((\Phi^t \otimes \id^B)(\rho) \| (\Phi^\infty \otimes \id^B)(\rho)) \leq e^{-\lambda t} D(\rho \| (\Phi^\infty \otimes \id^B)(\rho)) \pl.
\end{equation}
We (re)prove:
\begin{prop} \label{prop:mlsimerge}
Let $\{\Phi^t_j : j \in 1...J \in \NN\}$ be self-adjoint quantum Markov semigroups such that $\Phi_j^t = \exp(- \L_j t)$ with fixed point conditional expectation $\E_{j *} = \lim_{t \rightarrow \infty} \Phi_j^t$ for each $j$ weighted respectively by $(\sigma_j)$. Let $\E_{\sigma *}$ be the weighted intersection fixed point conditional expectation, assuming $\E_{j *}$ are compatibly weighted so that it exists. Let $\Phi^t$ be the semigroup generated by $\L = \sum_j \alpha_j \L_j + \L_0$, where $\L_0$ generates $\Phi_0^t$ such that $\Phi_0^t \E_{\sigma *} = \E_{\sigma *} \Phi_0^t = \E_{\sigma *}$. If $\{\Phi_j^t\}$ has $\{ \alpha_j \}$-(C)SQF, and $\Phi^t_j$ has $\lambda$-(C)MLSI for each $j$, then $\Phi^t$ has $\lambda$-(C)MLSI.
\end{prop}
Proposition \ref{prop:mlsimerge} is not surprising, and the historical use of quasi-factorization in proving modified log Sobolev inequalities relies on essentially equivalent results. A simple proof of results like Proposition \ref{prop:mlsimerge} with $\Phi_0 = \id$ emerges from the Fisher information formulation of MLSI detailed in \citen{gao_fisher_2020}, as the Fisher information of a sum of Lindbladians is equal to the sum of their respective Fisher informations. An alternate form of proof appears in Appendix \ref{sec:approxrelent}. Proposition \ref{prop:mlsimerge} can also be useful when conditional expectations do commute, in which case quasi-factorization reduces to SSA. When multiple subsets of constituent conditional expectations lead to the same intersection algebra, $\alpha < 1$ is possible. The use of bipartite quasi-factorization to prove MLSI appears in \citen{bardet_approximate_2022} and \citen{capel_modified_2020}, so similar methods exist in the literature.
\begin{rem}
As shown in \citen[Section 3.2]{bardet_approximate_2022}, one can convert (C)SQF inequalities from the doubly stochastic setting to non-trivially weighted conditional expectations. The results of \citen{junge_stability_2019} compare (C)MLSI constants of Lindbladians with non-tracial invariant states to those with tracial invariant states, and the methods therein underlie the comparison in \citen[Section 3.2]{bardet_approximate_2022}. One may also use Lemma \ref{ephi} together with Theorem \ref{thm:relent} or \ref{revconv} similarly.
\end{rem}

We demonstrate uses of quasi-factorization in two primary examples. First, we show asymptotically tight, uncertainty-like relations for pairs of measurement bases. In particular, when $A$ is a $d$-dimensional subsystem of bipartite system $A B$ with respective matrix algebras $\A$ and $\B$, and $\S, \T \subseteq \A$ correspond respectively to measurement bases $\{\ket{i_S} : i = 1...d \}$ and $\{\ket{i_T} : i = 1...d\}$ such that $\xi = \min_{i,j} |\braket{i_S | j_T}|^2 > 0$, quasi-factorization implies that
\begin{equation*}
H(\E_{\S \otimes \B}(\rho)) + H(\E_{\T \otimes \B}(\rho)) \geq 2 H(\rho) + \frac{\beta_{d, \epsilon}}{\lceil \ln_{1 - d \xi} \epsilon \rceil} (\ln d + H(B)_\rho - H(\rho))
\end{equation*}
for any $\epsilon \leq 1 - d \xi$, where $\beta_{d, \epsilon}$ is as in Corollary \ref{cor:simpleG}. This form of inequality, detailed in Subsection \ref{sec:ucr}, strengthens the usual, entropic uncertainty principle for highly mixed states. More broadly, quasi-factorization yields uncertainty-like inequalities between relative entropies to the invariant subalgebras of finite groups.

Second we use quasi-factorization to show new entropy inequalities and decay estimates for mixing channels described by finite groups and graphs. As detailed in Subsection \ref{sec:graphs}, a finite, undirected graph $G$ with $n$ vertices can be represented on $n$-dimensional densities by conditional expectations given by
\begin{equation} \label{eq:graphcondexpintro}
\begin{split}
\E_{i,j}(\rho) = \frac{1}{2} (& \ket{i}\bra{i} \rho \ket{i}\bra{i} + \ket{j}\bra{j} \rho \ket{j}\bra{j} + \ket{i}\bra{j} \rho \ket{j}\bra{i} + \ket{j}\bra{i} \rho \ket{i} \bra{j}) \\ + & \Big ( \sum_l \ketbra{l} \Big ) \rho \Big ( \sum_r \ketbra{r} \Big )
\end{split}
\end{equation}
for each pair $(i,j)$ in $G$'s edges. Using big-$\Omega$ notation to denote asymptotic order:
\begin{theorem} \label{graphthm}
Let an $m$-regular, connected graph with $n$ vertices $G$ have subleading normalized adjacency matrix eigenvalue (also known as spectral gap) $\gamma$ as defined in Theorem \ref{expanderthm}. Consider the conditional expectations of Equation \eqref{eq:graphcondexpintro} possibly in tensor product with an arbitrary, finite-dimensional auxiliary system. Let $\E_G$ denote the conditional expectation to the invariant subspace of these for all $i,j$. Then
\begin{equation*}
\sum_{(i,j) \in E} D(\rho \| \E_{i,j}(\rho)) \geq \Omega \Big ( \frac{\ln (1/\gamma)}{\ln n} \Big ) D(\rho \| \E_G(\rho)) \pl.
\end{equation*}
This inequality is stable under tensor extensions by auxiliary systems. The Lindbladian
\[ \L_G(\rho) := \sum_{(i,j) \in E} (\rho - \E_{i,j}(\rho)) \]
has CMLSI with the same constant.
\end{theorem}
The technical version of Theorem \ref{graphthm} appears as Theorem \ref{graphthmtech}.
\begin{rem} \label{graphclassical}
Let $L_G$ be the degree-normalized Laplacian matrix corresponding to a finite, connected, undirected graph $G$ with $n$ vertices, and let $\L_G$ be the Lindbladian as constructed in Theorem \ref{graphthm}. $L_G$ generates a semigroup on $l_1^n$. For any probability vector $\vec{p} \in l_1^n$, let density $\rho \in S_1^n$ be such that $\vec{p} = \text{diag}(\rho)$, $L(\vec{p}) = \text{diag}(\L_G(\rho))$, and $\text{diag} : S_1^n \rightarrow l_1^n$ denotes restriction to the diagonal. Furthermore, if a density $\tilde{\rho} \in S_1^n \otimes B$ has the partially diagonal form
\[ \tilde{\rho} = \sum_{x \in 1...n} p_x \ketbra{x} \otimes \rho_x^B \pl, \]
so will $(\L \otimes \id^B)(\rho)$ for any finite-dimensional extension $B$, so restriction to the diagonal remains bijective. In this way, Theorem \ref{graphthm} bounds CMLSI constants for finite graphs in a sense compatible with that of \citen{diaconis_logarithmic_1996, gao_fisher_2020, li_graph_2020}. A similar argument holds for finite groups.
\end{rem}
Theorem \ref{graphthm} answers a problem left open by \citen[Remark 7.5]{gao_fisher_2020}, showing that the fastest, regular expander graphs have CMLSI with constant no worse than one over logarithmic in $n$, in line with expectations based on classical mixing times of expanders and classical MLSI \cite{bobkov_modified_2006}. It also yields expected decay times for cyclic graphs. For any graph with $\gamma(n)$ constant in $n$, Theorem \ref{graphthm} shows convergence in $O(\ln_{\gamma(n)}(1/n))$ time. This convergence time is believed to be of optimal asymptotic order in $n$, matching the best classical bounds in known cases.

Subsection \ref{sec:relent} proves Theorem \ref{thm:relent}, a special case of the more general Theorem \ref{revconv} proven in Subsection \ref{sec:relent2}. Section \ref{sec:asa} proves Theorem \ref{asa2}. Section \ref{sec:applications} describes applications that use quasi-factorization to tighten entropic uncertainty relations and to derive new inequalities on graphs and groups. Section \ref{sec:conclusions} concludes with some open problems.

\section{Multiplicative Perturbations of Relative Entropy}
Since relative entropy is biconvex,
\[ D(\rho \| (1-\zeta) \omega + \zeta \sigma) \leq (1-\zeta) D(\rho \| \omega) + \zeta D(\rho \| \sigma) \]
for any densities $\rho, \omega, \sigma$ and $\zeta \in [0,1]$. In the other direction, there are examples of densities $\rho, \omega, \sigma$ for which $D(\rho \| (1-\zeta) \omega + \zeta \sigma) = 0$, but
\[ D(\rho \|  \omega), D(\rho \| \sigma) > 0 \pl. \]
Similarly, it is possible that $D(\rho \| (1-\zeta) \omega + \zeta \sigma)$ is finite when $D(\rho \|  \omega)$ or $D(\rho \| \sigma)$ is infinite. Hence in full generality, there is no way to multiplicatively restrict the extent of non-concavity of relative entropy.

In Subsection \ref{sec:relent}, we show that when $\omega = \id/d$, and $\rho$ majorizes $\sigma$, there is a multiplicatively adjusted form of concavity-like relation. Recall that $\rho$ majorizes $\sigma$ if
\[ \sum_{i=1}^n \rho_i \geq \sum_{i=1}^n \sigma_i \text{ for any } n \in 1...d \pl, \]
where $(\rho_i)_{i=1}^d$ and $(\sigma_i)_{i=1}^d$ are the respective eigenvalues of $\rho$ and $\sigma$ in non-increasing order. In Subsection \ref{sec:relent2}, we show an analogous bound when $\omega = \E(\rho)$ and $\sigma = \Phi(\rho)$ for channels $\Phi$ and $\E$ under certain conditions. These conditions are satisfied when $\E$ is a conditional expectation to an invariant subspace of $\Phi$.

\subsection{Perturbation of Relative Entropy to Complete Mixture} \label{sec:relent}
If we restrict to $\omega = \id/d$, then $D(\rho \| (1-\zeta) \id/d + \zeta \sigma)$ and $D(\rho \| \id/d)$ are both finite for any $\zeta \in [0,1)$. These conditions are still insufficient to lower bound $D(\rho \| (1-\zeta) \id/d + \zeta \sigma)$ in terms of $D(\rho \| \id/d)$. We may for instance take $\sigma$ pure, and $\rho = (1-\zeta) \id/d + \zeta \sigma$. The final condition we add is that $\rho$ majorizes $\sigma$ ($\rho \succ \sigma$), in which case we obtain Theorem \ref{thm:relent}.

The results of this subsection precede \citen{gao_complete_2021} and use a different approach from that of section \ref{sec:relent2}. Many of the results herein are subsumed by results of that section at least up to constants. Nonetheless, we include this subsection as illustrative of a more computationally inspired line of proof. Furthermore, this method yields intermediate results of potentially independent interest and gives more intuition for the subsequent generalization.
\begin{lemma}[Flattening] \label{lem:adjust2}
Let $\rho, \omega$ be simultaneously diagonalizable densities of dimension $d$. Let $i \neq j \in 1...d$ such that $\rho_i \geq \rho_j$, and $\rho_i \omega_j \geq \rho_j \omega_i$. Let $\omega_i \geq \delta > 0$. Let $\omega \rightarrow \tilde{\omega}$ under the replacement $\omega_i \rightarrow \tilde{\omega_i} = \omega_i - \delta, \omega_j \rightarrow \tilde{\omega}_j = \omega_j + \delta$. Then $D(\rho \| \omega) \leq D(\rho \| \tilde{\omega})$.
\end{lemma}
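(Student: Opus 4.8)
Prove the Flattening Lemma: for simultaneously diagonal densities $\rho, \omega$ with $\rho_i \geq \rho_j$ and $\rho_i \omega_j \geq \rho_j \omega_i$, moving mass from coordinate $i$ to coordinate $j$ in $\omega$ (decreasing $\omega_i$, increasing $\omega_j$) cannot decrease $D(\rho\|\omega)$.

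**Plan.** Since $\rho$ and $\omega$ are simultaneously diagonal, $D(\rho\|\omega) = \sum_k \rho_k(\ln\rho_k - \ln\omega_k)$ is a function of the eigenvalues alone, and the perturbation $\omega \to \tilde\omega$ touches only coordinates $i$ and $j$. So the plan is to reduce everything to a single-variable computation. First I would define $f(\delta) = D(\rho\|\tilde\omega) - D(\rho\|\omega)$, where $\tilde\omega_i = \omega_i - \delta$ and $\tilde\omega_j = \omega_j + \delta$, and note that only the $i$ and $j$ terms in the sum change. Concretely,
\[
f(\delta) = -\rho_i \ln\frac{\omega_i - \delta}{\omega_i} - \rho_j \ln\frac{\omega_j + \delta}{\omega_j} = -\rho_i\ln(\omega_i - \delta) - \rho_j\ln(\omega_j + \delta) + \rho_i\ln\omega_i + \rho_j\ln\omega_j\pl.
\]

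**Main step.** I would show $f(\delta) \geq 0$ for $0 \leq \delta < \omega_i$ by checking that $f(0) = 0$ and that $f$ is nondecreasing in $\delta$, i.e. $f'(\delta) \geq 0$. Differentiating,
\[
f'(\delta) = \frac{\rho_i}{\omega_i - \delta} - \frac{\rho_j}{\omega_j + \delta}\pl.
\]
This is nonnegative exactly when $\rho_i(\omega_j + \delta) \geq \rho_j(\omega_i - \delta)$, i.e. $\rho_i\omega_j - \rho_j\omega_i \geq -\delta(\rho_i + \rho_j)$. The right-hand side is $\leq 0$ for $\delta \geq 0$, and the left-hand side is $\geq 0$ by the hypothesis $\rho_i\omega_j \geq \rho_j\omega_i$, so $f'(\delta) \geq 0$ throughout the admissible range. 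Hence $f(\delta) \geq f(0) = 0$, which is exactly $D(\rho\|\omega) \leq D(\rho\|\tilde\omega)$.

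**Expected obstacle.** The computation itself is routine; the only real subtleties are bookkeeping. One must confirm $\tilde\omega$ remains a valid density (nonnegativity needs $\delta \leq \omega_i$, and the trace is preserved since the two changes cancel), and one must be careful about the edge case $\rho_j = 0$ or $\omega_j = 0$, where a logarithm could blow up but the associated $\rho$-weight vanishes. Note the hypothesis $\rho_i \geq \rho_j$ is not actually needed for this monotonicity argument — the condition $\rho_i\omega_j \geq \rho_j\omega_i$ alone drives the sign of $f'$ — so I would remark that $\rho_i \geq \rho_j$ is presumably recorded to match the intended application (flattening toward the uniform distribution while respecting the majorization ordering of $\rho$). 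The genuinely interesting content is recognizing that the single condition $\rho_i\omega_j \geq \rho_j\omega_i$ controls the entire derivative; everything else is a one-dimensional convexity check.
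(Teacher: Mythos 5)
Your proof is correct and takes essentially the same approach as the paper: the paper lower-bounds $\rho_i(\ln\omega_i - \ln(\omega_i-\delta)) + \rho_j(\ln\omega_j - \ln(\omega_j+\delta))$ by $(\rho_i/\omega_i - \rho_j/\omega_j)\delta \geq 0$ via the inequality $(a-b)/a \leq \ln a - \ln b \leq (a-b)/b$, which is precisely the integrated form of your derivative bound $f'(\delta) = \rho_i/(\omega_i-\delta) - \rho_j/(\omega_j+\delta) \geq 0$. Your side remarks are also accurate: the paper's proof likewise never uses $\rho_i \geq \rho_j$, only $\rho_i\omega_j \geq \rho_j\omega_i$, and it leaves the admissibility constraint $\delta \leq \omega_i$ implicit.
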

\begin{proof}
For any $a > b \in \RR^+$, $(a - b)/a \leq \ln a - \ln b \leq (a-b)/b$, as one can verify from $(d/dx)(\ln x) = 1/x$. Hence
\begin{equation*}
\begin{split}
D(\rho \| \tilde{\omega}) - D(\rho \| \omega) & = \rho_i(\ln \omega_i - \ln (\omega_i - \delta)) + \rho_j (\ln \omega_j - \ln (\omega_j + \delta)) \\
& \geq (\rho_i / \omega_i  - \rho_j / \omega_j) \delta \geq 0 \pl.
\end{split}
\end{equation*}
\end{proof}
\begin{figure}[h!] \scriptsize \centering
	\begin{subfigure}[b]{0.22\textwidth}
		\includegraphics[width=0.95\textwidth]{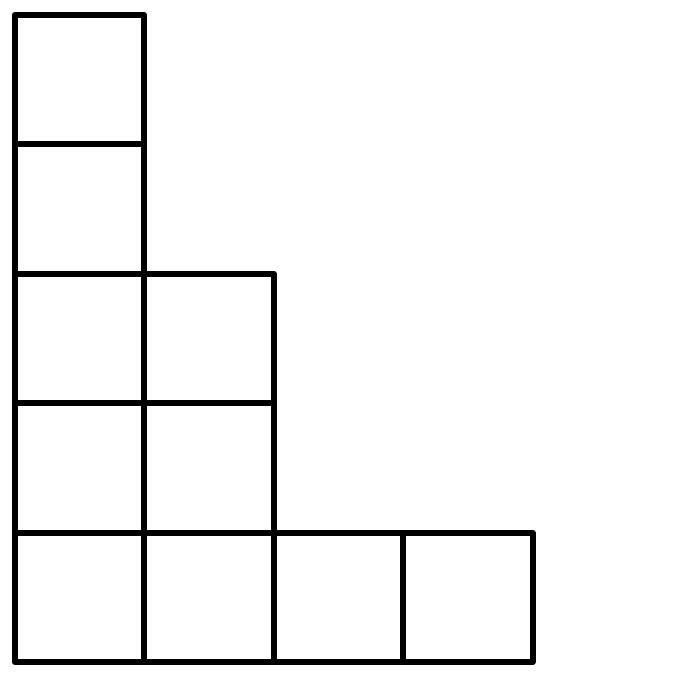}
	\end{subfigure}
	\begin{subfigure}[b]{0.22\textwidth}
		\includegraphics[width=0.95\textwidth]{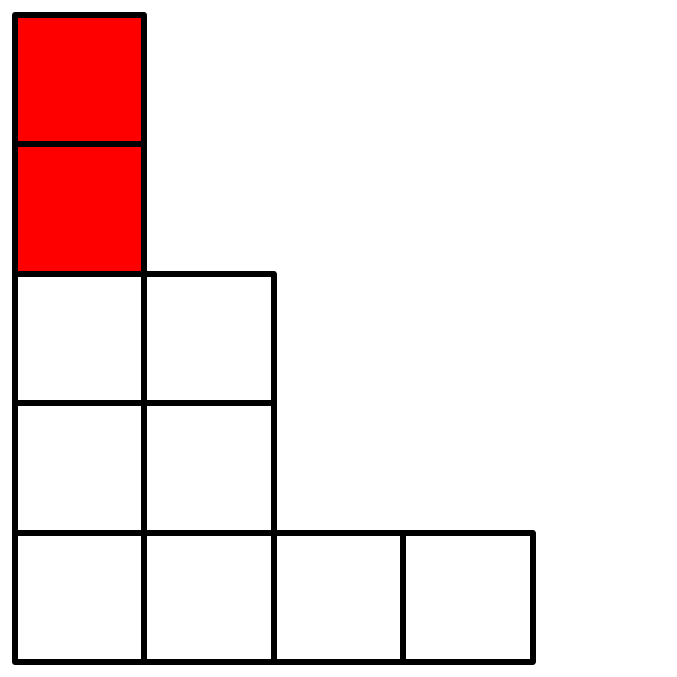}
	\end{subfigure}
	\begin{subfigure}[b]{0.22\textwidth}
		\includegraphics[width=0.95\textwidth]{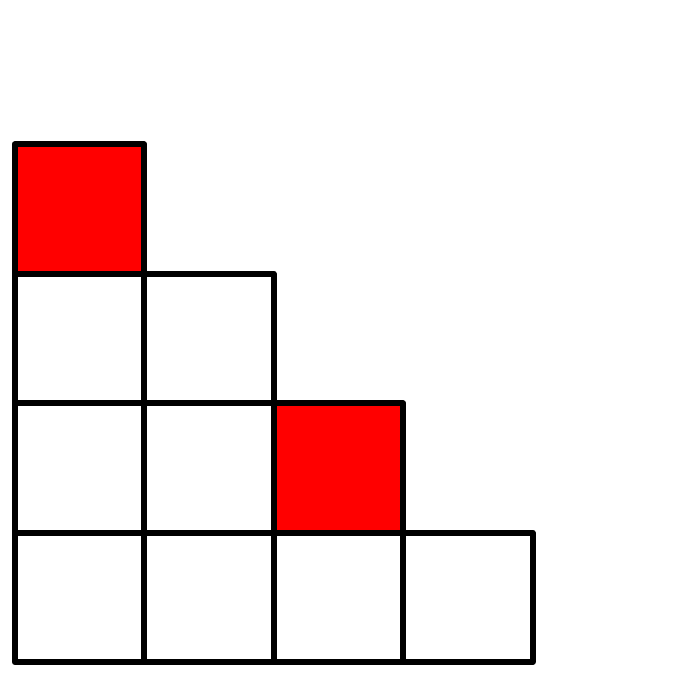}
	\end{subfigure}
	\begin{subfigure}[b]{0.22\textwidth}
		\includegraphics[width=0.95\textwidth]{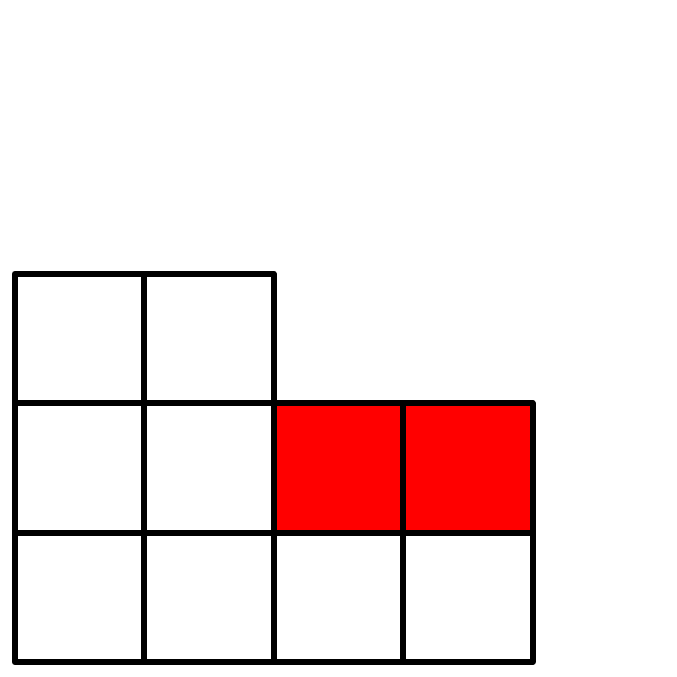}
	\end{subfigure}
	\begin{subfigure}[b]{0.22\textwidth}
		\includegraphics[width=0.95\textwidth]{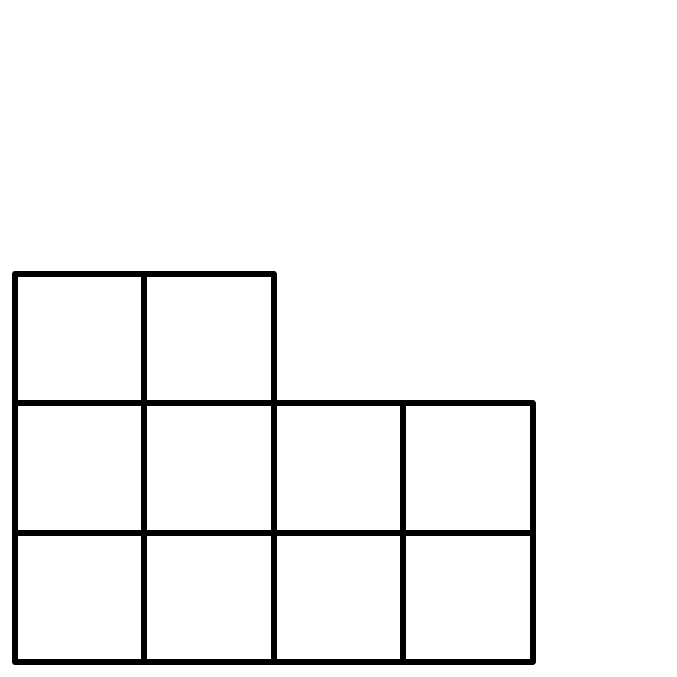}
	\end{subfigure}
	\begin{subfigure}[b]{0.22\textwidth}
		\includegraphics[width=0.95\textwidth]{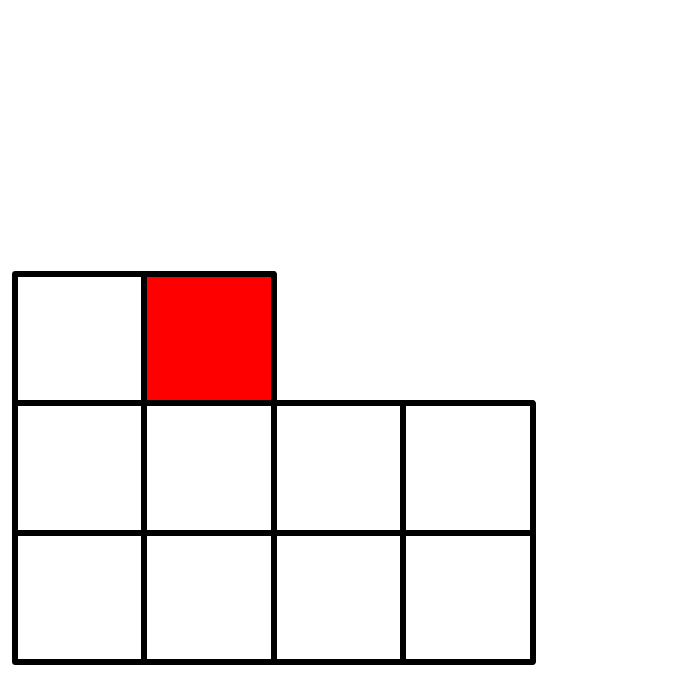}
	\end{subfigure}
	\begin{subfigure}[b]{0.22\textwidth}
		\includegraphics[width=0.95\textwidth]{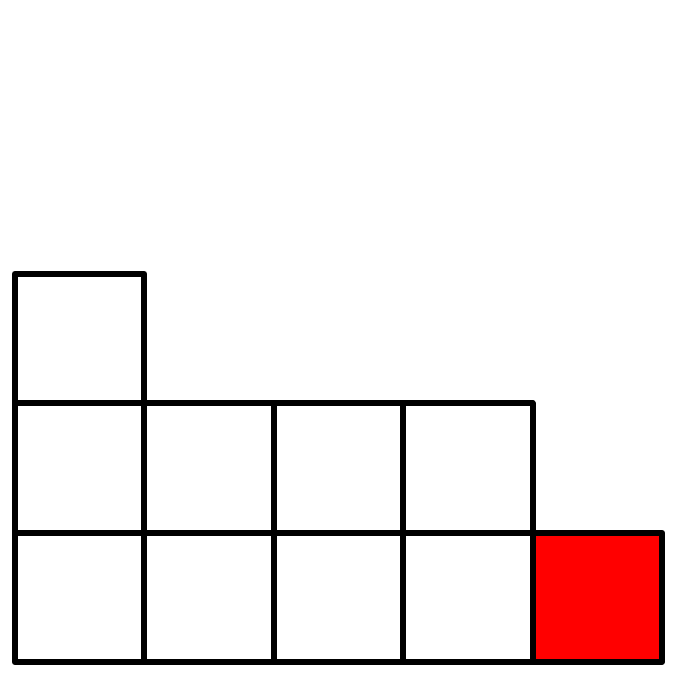}
	\end{subfigure}
	\begin{subfigure}[b]{0.22\textwidth}
		\includegraphics[width=0.95\textwidth]{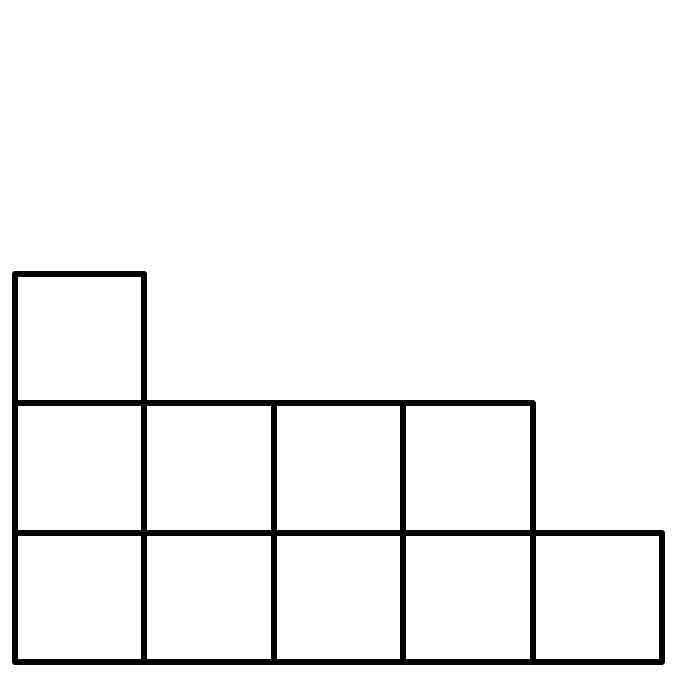}
	\end{subfigure}
	\caption{Visualization of a cascading redistribution, which converts the distribution on the top-left to that on the bottom-right. Here we visualize a density on Hilbert space dimension $d=5$. Each subfigure represents a density, where each block corresponds to a unit of probability equal to 0.1, and densities should be read from left to right starting with $\ketbra{0}$. The starting, top-left configuration corresponds to the density $\rho = 0.5 \ketbra{0} + 0.3 \ketbra{1} + 0.1 \ketbra{2} + 0.1 \ketbra{3}$. In the top row, probability is redistributed from the largest component to smaller components as in the algorithm from Proposition \ref{lem:worstsig}, yielding the density $0.3 \ketbra{0} + 0.3 \ketbra{1} + 0.2 \ketbra{2} + 0.2 \ketbra{3}$. The red blocks highlight the redistributed units of probability. In the bottom row, the algorithm continues, now subtracting from the $\ketbra{1}$ component and adding to $\ketbra{4}$. The ending configuration corresponds to the density $\sigma = 0.3 \ketbra{0} + 0.2 \ketbra{1} + 0.2 \ketbra{2} + 0.2 \ketbra{3} + 0.1 \ketbra{4}$. \label{fig:cascade}}
\end{figure}

\begin{prop} \label{lem:worstsig}
Let $\rho$ and $\sigma$ be two densities such that $\rho \succ \sigma$ ($\rho$ majorizes $\sigma$), and $\zeta \in [0,1]$. Then
\[ D(\rho \| (1 - \zeta) \id/d + \zeta \sigma) \geq D(\rho \| (1 - \zeta) \id/d + \zeta \rho) \pl.\]
\end{prop}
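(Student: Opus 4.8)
The plan is to cancel the density-only term and reduce the claim to a statement about a single concave functional. Writing $D(\rho\|\omega) = \tr(\rho\ln\rho) - \tr(\rho\ln\omega)$, the first summand is independent of the second argument, so the asserted inequality is equivalent to $g(\rho) \geq g(\sigma)$, where $g(\sigma) := \tr(\rho\ln((1-\zeta)\id/d + \zeta\sigma))$. In other words, I want to show that among all $\sigma$ majorized by $\rho$, the choice $\sigma = \rho$ maximizes $g$.

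First I would record that $g$ is concave in $\sigma$: the map $\sigma \mapsto \ln((1-\zeta)\id/d + \zeta\sigma)$ is operator concave because $\ln$ is operator concave and its argument is affine in $\sigma$, and composing with the positive linear functional $\tr(\rho\,\cdot\,)$ preserves concavity. Setting $h(t) := g((1-t)\sigma + t\rho)$ on $[0,1]$, the function $h$ is therefore concave, so the tangent inequality at $t=1$ gives $h(0) \leq h(1) - h'(1)$. Hence it suffices to prove $h'(1) \geq 0$, where $h'(1) = \zeta\,\tr(\rho\, D\!\ln(\omega_\rho)[\rho - \sigma])$ and $\omega_\rho := (1-\zeta)\id/d + \zeta\rho$.

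The key simplification exploits that $\omega_\rho$ is a function of $\rho$ and therefore commutes with it. Using the integral form of the Fréchet derivative, $D\!\ln(A)[X] = \int_0^\infty (A+s)^{-1} X (A+s)^{-1}\,ds$, cyclicity of the trace together with $[\omega_\rho, \rho] = 0$ collapses $\int_0^\infty (\omega_\rho + s)^{-2}\,ds = \omega_\rho^{-1}$, yielding $h'(1) = \zeta\,\tr(\phi(\rho)(\rho - \sigma))$ with $\phi(x) = x/((1-\zeta)/d + \zeta x)$. It remains to show $\tr(\phi(\rho)\rho) \geq \tr(\phi(\rho)\sigma)$. Since $\phi$ is strictly increasing on $[0,\infty)$, the Hermitian matrix $\phi(\rho)$ and $\rho$ share an eigenbasis with eigenvalues sorted in the same order; von Neumann's trace inequality bounds $\tr(\phi(\rho)\sigma) \leq \sum_i \phi(p_i^\downarrow)\, q_i^\downarrow$ in terms of the ordered spectra $p^\downarrow, q^\downarrow$ of $\rho, \sigma$, and the hypothesis $\sigma \prec \rho$ combined with the standard majorization fact that $\langle c^\downarrow, a^\downarrow\rangle$ increases as $a^\downarrow$ moves up in majorization order (for fixed decreasing $c$) upgrades this to $\sum_i \phi(p_i^\downarrow) q_i^\downarrow \leq \sum_i \phi(p_i^\downarrow) p_i^\downarrow = \tr(\phi(\rho)\rho)$.

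I expect the main technical obstacle to be the derivative computation in the noncommutative setting: justifying the collapse of the double-resolvent integral to $\omega_\rho^{-1}$ relies essentially on the commutation $[\omega_\rho,\rho]=0$, and this is precisely why evaluating the directional derivative at the endpoint $\sigma = \rho$ (rather than at a generic $\sigma$, where $\omega_\sigma$ and $\rho$ need not commute) is what makes the argument tractable. The remaining majorization estimate is routine once the functional has been reduced to a pairing of ordered spectra.
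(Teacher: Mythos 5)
Your proof is correct, but it follows a genuinely different route from the paper's. The paper first reduces to the commuting case via a pinching argument: data processing under the conditional expectation onto the algebra commuting with $\rho$ replaces $\sigma$ by its pinched version (which is still majorized by $\rho$), so one may assume $[\rho,\sigma]=0$. It then runs a discrete ``cascading redistribution'' algorithm that transforms the eigenvalue vector of $(1-\zeta)\id/d+\zeta\rho$ into that of $(1-\zeta)\id/d+\zeta\sigma$ by repeatedly moving mass from larger to smaller entries, invoking at each step an elementary two-coordinate monotonicity lemma ($D(\rho\|\omega)\le D(\rho\|\tilde\omega)$ when $\tilde\omega$ is obtained from $\omega$ by such a transfer, Lemma \ref{lem:adjust2}). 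Your argument instead stays fully noncommutative: concavity of $\sigma\mapsto\tr\bigl(\rho\ln((1-\zeta)\id/d+\zeta\sigma)\bigr)$ (operator concavity of $\ln$ plus a positive linear functional), the tangent inequality at the endpoint $t=1$, the collapse of the double-resolvent Fr\'echet derivative using $[\omega_\rho,\rho]=0$, and then von Neumann's trace inequality together with the Abel-summation pairing fact for majorized spectra. What your route buys is conceptual economy: it never needs the pinching reduction (the non-commutativity of $\sigma$ with $\rho$ is absorbed entirely by von Neumann's trace inequality), and it replaces the algorithmic bookkeeping by a one-line endpoint-derivative computation. What the paper's route buys is elementarity — it uses nothing beyond data processing and the scalar estimate $\ln a-\ln b\le (a-b)/b$ — and the flattening/redistribution picture is reused as intuition for the later, more computational results of that section. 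One small point you should patch: your derivative step needs $\omega_\rho$ invertible, which holds for $\zeta<1$ but can fail at $\zeta=1$; that endpoint is trivial anyway, since there the right-hand side is $D(\rho\|\rho)=0$ and relative entropy is nonnegative, so a single sentence disposes of it.
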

\begin{proof}
The main idea of this proof is that if $\rho \succ \sigma$, then flattening $\rho$ until it becomes $\sigma$ only increases the value of $D(\rho \| (1 - \zeta) \id/d + \zeta \rho)$. First,
\begin{equation*}
D(\rho \| (1 - \zeta) \id/d + \zeta \sigma) \geq D(\rho \| (1 - \zeta) \id/d + \zeta \E_{\braket{\rho}}(\sigma)) \pl,
\end{equation*}
by data processing under $\E_{\braket{\rho}}$, the conditional expectation onto the subalgebra generated by $\rho$. It is obvious that $\E_{\braket{\rho}}(\rho) = \rho$ and that $\E_{\braket{\rho}}(\sigma)$ commutes with $\rho$. We hence assume for the rest of the proof that that $[\rho, \sigma] = 0$. Let $\rho^\zeta = (1 - \zeta) \id/d + \zeta \rho$, and define $\sigma^\zeta$ analogously. Let $\vec{\rho}^\zeta$ and $\vec{\sigma}^\zeta$ be $d$-dimensional vectors of the eigenvalues of $\rho^\zeta$ and $\sigma^\zeta$ respectively, each in non-increasing order.

Let $\vec{\omega} = \vec{\rho}^\zeta$. We alter $\vec{\omega}$ via a cascading probability redistribution procedure consisting of the following steps, which transform it into a copy of $\vec{\sigma}^\zeta$:
\begin{enumerate}
	\item Start with the index $i$ set to 1.
	\item Let $\Delta = \vec{\omega}_i - \vec{\sigma}_i^\zeta$. If $\Delta \leq 0$, then continue to step (3). If $\Delta > 0$, then
	\begin{enumerate}
		\item Subtract $\Delta$ from $\vec{\omega}_i$. Let $j = i+1$. 
		\item If $\vec{\omega}_j < \vec{\sigma}_j^\zeta$, then let $\delta = \min\{\Delta, \vec{\sigma}^\zeta_j - \vec{\omega}_j\}$. Add $\delta$ to $\vec{\omega}_j$ and subtract it from $\Delta$.
		\item If $\Delta = 0$ (which must happen at or before $j=d$ for normalized densities), go to step (3). Otherwise, increment $j \rightarrow j+1$, and return to substep (2b).
	\end{enumerate}
	\item If $i < d$, increment $i \rightarrow i+1$ and return to step (2). Otherwise, the procedure is done.
\end{enumerate}
See Figure \ref{fig:cascade}. Since this procedure only subtracts from larger eigenvalues and adds to smaller ones, we apply Lemma \ref{lem:adjust2} at each step that transfers probability mass from one index to another. If $\vec{\rho}_i \geq \vec{\rho}_j$, then $\vec{\rho}_i / \vec{\rho}^\zeta_i \geq \vec{\rho}_j / \vec{\rho}^\zeta_j$. Furthermore, if $\vec{\rho}_i \geq \vec{\rho}_j$ for any $i$ and $j$, then it is always the case that $\vec{\omega}_i \leq \vec{\rho}_i^\zeta$ even as $\vec{\omega}$ changes throughout the algorithm, since we move probability mass out of $\vec{\omega}_i$ and into $\vec{\omega}_j \geq \vec{\rho}^\zeta_j$. Hence $\vec{\rho}_i / \vec{\rho}_j \geq \vec{\omega}_i / \vec{\omega}_j$ for all $i$ and $j$ such that $\vec{\rho}_i \geq \vec{\rho}_j$. Since each step of the flattening algorithm can only increase the relative entropy via Lemma \ref{lem:adjust2}, $D(\rho \| \rho^\zeta) = D(\vec{\rho} \| \vec{\rho}^\zeta) \leq D(\vec{\rho} \| \vec{\sigma}^\zeta)$. Finally, using the simultaneous diagonalizability of $\rho$ and $\sigma$, it is easy to see that $D(\vec{\rho} \| \vec{\sigma}^\zeta) \leq D(\rho \| \sigma^\zeta)$.
\end{proof}
\begin{lemma} \label{lem:logcomp1}
Let $b \geq 0$, and $a \in [0,1]$. Then for $\zeta \leq a/(1+b)$,
\begin{equation*}
a \ln (1 + b) \geq \ln (1 + \zeta b) \pl.
\end{equation*}
\end{lemma}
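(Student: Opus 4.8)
The plan is to reduce the claim to a single boundary case and then to an elementary one-variable monotonicity argument. First I would rewrite the inequality in exponential form. Since $a \log(1+b) = \log\big((1+b)^a\big)$ and the logarithm is strictly increasing (the base is immaterial), the assertion $a \log(1+b) \geq \log(1 + \zeta b)$ is equivalent to
\[ (1+b)^a \geq 1 + \zeta b \pl. \]
Because $b \geq 0$, the right-hand side $1 + \zeta b$ is nondecreasing in $\zeta$, so it suffices to verify the inequality at the largest admissible value $\zeta = a/(1+b)$. Thus the entire statement reduces to
\[ (1+b)^a \geq 1 + \frac{ab}{1+b} \pl. \]

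Next I would establish this last inequality by monotonicity in $b$. Set $h(b) = (1+b)^a - 1 - ab/(1+b)$, noting $h(0) = 0$. Differentiating gives $h'(b) = a(1+b)^{a-1} - a(1+b)^{-2}$, which is nonnegative precisely when $(1+b)^{a+1} \geq 1$. This holds for every $b \geq 0$ and $a \in [0,1]$, since then $1 + b \geq 1$ and $a + 1 > 0$. Hence $h$ is nondecreasing on $[0,\infty)$, so $h(b) \geq h(0) = 0$, which is exactly the desired bound.

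The argument needs nothing deeper than the sign of a single derivative, so there is no serious obstacle; the only point requiring care is the reduction step, namely the observation that monotonicity of $1 + \zeta b$ in $\zeta$ lets one replace the whole range $\zeta \leq a/(1+b)$ by the single endpoint $\zeta = a/(1+b)$. The degenerate cases $a = 0$ and $b = 0$ are absorbed automatically by the same computation, since the constraint then forces $1 + \zeta b$ down to a value at which both sides coincide.
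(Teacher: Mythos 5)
Your proof is correct, and its skeleton coincides with the paper's: both exponentiate the claim to $(1+b)^a \geq 1 + \zeta b$ and reduce to the extreme value of $\zeta$ (you do this by monotonicity of $1+\zeta b$ in $\zeta$; the paper equivalently solves for $\zeta$, reducing to the lower bound $((1+b)^a - 1)/b \geq a/(1+b)$). The two arguments diverge only at the final verification. The paper rewrites $((1+b)^a-1)/b$ with denominator $b(1+b)$ and invokes Bernoulli's inequality, $(1+b)^{1+a} \geq 1+(1+a)b$ for exponent $1+a \geq 1$. You instead prove the endpoint inequality $(1+b)^a \geq 1 + ab/(1+b)$ by showing that $h(b) = (1+b)^a - 1 - ab/(1+b)$ satisfies $h(0)=0$ and $h'(b) = a\big((1+b)^{a-1} - (1+b)^{-2}\big) \geq 0$; your derivative computation and the sign analysis via $(1+b)^{a+1}\geq 1$ are correct. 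Note that after clearing the denominator $1+b$, your target inequality is literally the same Bernoulli instance the paper cites, so your calculus step amounts to a self-contained proof of that inequality. What your route buys is independence from any named inequality; what the paper's buys is brevity, since citing Bernoulli disposes of the endpoint bound in one line.
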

\begin{proof}
We exponentiate both sides and solve
\begin{equation*}
\begin{split}
& (1+b)^a \geq 1 + \zeta b \pl,
\end{split}
\end{equation*}
yielding
\begin{equation*}
\zeta \leq \frac{(1+b)^a - 1}{b} \pl.
\end{equation*}
We then estimate
\[ \frac{(1+b)^a - 1}{b} = \frac{(1+b)^{1+a} - 1 - b}{b(1+b)} \geq \frac{1 + (1+a) b - 1 - b}{b(1+b)} = \frac{a}{1+b} \]
by Bernoulli's inequality.
\end{proof}

\begin{lemma} \label{lem:towardident}
Let $\rho$ be given in a diagonal basis by $(\rho_i)_{i=1}^n$, where $n$ is the dimension of the system. Let $a, \beta \in (0,1)$, $i,j \in 1...n$ such that $\rho_i \geq 1/n \geq \rho_j$, and let $\zeta \in \RR^+$ such that
\[ 0 < \zeta \leq a \min \Big \{\frac{1-\beta}{n + a(1-\beta) + 1}, \frac{\beta}{(1 - a \beta) n + a \beta + 1} \Big \} \pl. \]
If $\rho_j > 0$, then
\[ \Big ( \frac{\partial}{\partial \rho_i} - \frac{\partial}{\partial \rho_j} \Big ) \tr(\rho (a \ln (n \rho) - \ln ((1 - \zeta) \id + \zeta n \rho))) \geq 0 \pl. \]
If $\rho_j = 0$, then letting $\tilde{\rho} = \rho - \epsilon \hat{i} + \epsilon \hat{j}$,
\[ \tr(\rho (a \ln (n \tilde{\rho}) - \ln ((1 - \zeta) \id + \zeta n \tilde{\rho}))) > 
	\tr(\rho (a \ln (n \rho) - \ln ((1 - \zeta) \id + \zeta n \rho))) \]
for sufficiently small $\epsilon$, where $\hat{i}$ and $\hat{j}$ denote the rank 1 unit densities according to the respective $i$th and $j$th basis vectors in the chosen diagonal basis of $\rho$.
\end{lemma}
Since the proof of Lemma \ref{lem:towardident} is technical and the Lemma a more specific alternative to methods in section \ref{sec:relent2}, we defer proof to Appendix \ref{sec:approxrelent}.

\begin{theorem} \label{thm:relent}
Given two densities $\rho,\sigma$ in dimension $d$ such that $\rho \succ \sigma$ ($\rho$ majorizes $\sigma$), 
\begin{equation*}
D(\rho \| (1 - \zeta) \id/d + \zeta \sigma) \geq (1-a) D(\rho \| \id/d)
\end{equation*}
for any $a \in [0,1], b \in (0,1)$ and 
\[\zeta \leq a \min \Big \{\frac{1-b}{d + a(1-b) + 1}, \frac{b}{(1 - a b) d + a b + 1} \Big \} \pl.\]
\end{theorem}
\begin{proof}
Let $n$ be the dimension of $\rho$, since $d$ may be confused with a derivative. The goal is to show that given some $a \in [0,1]$ and densities $\rho,\sigma$ such that $\rho \succ \sigma$,
\[ D(\rho \| (1 - \zeta) \id/n + \zeta \sigma) - (1-a) D(\rho \| \id/n) \geq 0 \]
for an appropriate value of $\zeta \in [0,1]$. We apply Lemma \ref{lem:worstsig} to replace $\sigma$ by $\rho$. For any states $\rho$ and $\omega$,
\[ D(n \rho \| n \omega) = \tr((n \rho) (\ln \rho + \ln n - \ln \omega - \ln n)) = n D(\rho \| \omega) \pl. \]
Hence it is sufficient to prove that
\begin{equation} \label{eq:ddiffderiv}
D(n \rho \| (1 - \zeta) \id + \zeta n \rho) - (1-a) D(n \rho \| \id) \geq 0 \pl,
\end{equation}
which expands as
\begin{equation*}
 ... = n \tr(\rho (a \ln (n \rho) - \ln((1 - \zeta) \id + \zeta n \rho))) \geq 0 \pl.
\end{equation*}
The main insight behind this proof is Lemma \ref{lem:towardident}. If $\rho = \id/n$, then both terms are 0, and the proof is trivially complete. If $\rho \neq \id/n$, then the total probability mass above $\id/n$ must equal that below $\id/n$ to maintain normalization. Hence we apply Lemma \ref{lem:towardident} to successive pairs of $i,j$ such that $\rho_i \geq 1/n \geq \rho_j$, flattening $\rho$ until we transform the second argument of the relative entropy in Equation \eqref{eq:ddiffderiv} from  $(1 - \zeta) \id + \zeta n \rho$ to $\id$ without increasing the relative entropy.
\end{proof}
We may optimize $a$ and $b$ in Theorem \ref{thm:relent} for given values of $d$ and $\zeta$. If we wish to avoid optimization, $b = 1/2$ is a reasonable value, and one may use the calculated bound of Corollary \ref{cor:approx}. We further see from this Corollary that \ref{thm:relent} is asymptotically tight: as $\zeta \rightarrow 0$ for fixed $d$,we may choose $a$ and $b$ such that $(1-a) \rightarrow 1$. Theorem \ref{thm:relent} relies on a comparison using telescopic relative entropy as introduced in \citen{audenaert_telescopic_2014}.
\begin{cor} \label{cor:approx}
Given $a \in [0,1]$ and two densities $\rho,\sigma$ in dimension $d$ such that $\rho \succ \sigma$ ($\rho$ majorizes $\sigma$), 
\begin{equation*}
D(\rho \| (1 - \zeta) \id/d + \zeta \sigma) - \Big (1-\frac{32(d+1) \zeta}{15 + (7 d - 24) \zeta} \Big ) D(\rho \| \id/d) \geq 0
\end{equation*}
is achieved whenever
\begin{equation*}
\zeta < \frac{15}{25 d + 56} \pl.
\end{equation*}
With $\zeta \rightarrow 0$ as $d$ is held fixed, we see that this expression is asymptotically tight. We may choose $a = 1 - 1/d$ and
\begin{equation*}
\zeta \leq \frac{15 d - 1}{32d^2 - 7 d^2 + 7 d + 32 d + 24 d - 24} = \frac{15 d - 1}{25 d^2 + 63 d - 24}
\end{equation*}
or $a = 1/2$ and
\begin{equation*}
\zeta \leq \frac{15}{57d + 88} \pl.
\end{equation*}
\end{cor}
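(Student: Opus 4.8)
The plan is to treat Corollary \ref{cor:approx} purely as the parameter-eliminated form of Theorem \ref{thm:relent}. That theorem already supplies the inequality $D(\rho\|(1-\zeta)\id/d+\zeta\sigma)\ge(1-a)D(\rho\|\id/d)$ under the admissibility condition $\zeta\le a\min\{t_1,t_2\}$, where $t_1=\frac{1-b}{d+a(1-b)+1}$, $t_2=\frac{b}{(1-ab)d+ab+1}$, and $a\in[0,1]$, $b\in(0,1)$ are free. So no new entropic content is needed: the whole task is to choose $b$ near-optimally, solve the admissibility condition for the best coefficient $a$ as a function of $(\zeta,d)$, and read off $\tilde\beta=1-a$.

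First I would eliminate $b$. Since $t_1$ is decreasing and $t_2$ increasing in $b$ (both derivatives reduce to $(d+1)/(\cdot)^2>0$), the quantity $\max_b\min\{t_1,t_2\}$ is attained at the crossover $t_1=t_2$, i.e. at the root in $(0,1)$ of $ad\,b^2-(ad+2d+2)b+(d+1)=0$. The key simplification is that the discriminant of this quadratic collapses to $a^2d^2+4(d+1)^2$; feeding the crossover back in (adding the two relations $1-b=t^\ast\cdot\mathrm{denom}_1$ and $b=t^\ast\cdot\mathrm{denom}_2$) gives the closed form $\max_b\min\{t_1,t_2\}=\frac{2}{2(d+a+1)-ad+\sqrt{a^2d^2+4(d+1)^2}}=:t^\ast$, so the sharp admissibility condition is $\zeta\le a\,t^\ast$.

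Next I would convert this irrational condition into the clean rational one stated in the Corollary. Majorizing $\sqrt{a^2d^2+4(d+1)^2}\le\frac{1}{15}\big(34(d+1)+a(d+18)\big)$ enlarges the denominator and yields the sufficient condition $\zeta\le\frac{15a}{32(d+1)+a(24-7d)}$; the majorant is validated by squaring, which reduces to a quadratic in $a$ that is concave with nonnegative values at $a=0$ and $a=1$, hence nonnegative throughout $[0,1]$. Because the right-hand side is increasing in $a$, inverting at equality gives the largest admissible coefficient $a=\frac{32(d+1)\zeta}{15+(7d-24)\zeta}$, so $\tilde\beta=1-a$ is exactly the Corollary's prefactor; the constraint $a\le1$ is equivalent to $(25d+56)\zeta\le15$, and $a\to0$ as $\zeta\to0$ gives the asymptotic tightness $\tilde\beta\to1$ automatically. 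The two displayed cases then follow by substitution: $a=1/2$ gives $\zeta\le\frac{15}{57d+88}$, and $a=1-1/d$ gives $\zeta\le\frac{15(d-1)}{25d^2+63d-24}$.

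The main obstacle is not conceptual but one of controlled estimation: the genuinely sharp condition is irrational, so everything depends on producing a rational majorant of the square root that is both provable by an elementary polynomial inequality and uniformly valid over all $a\in[0,1]$ and $d\ge1$. Alongside this I must confirm the routine side conditions that make the reduction legitimate — that the crossover root $b^\ast$ really lies in $(0,1)$, that the relevant denominators stay positive, and hence that the monotonicity-in-$b$ and the inversion-in-$a$ steps are valid throughout the parameter range.
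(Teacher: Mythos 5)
Your proposal is correct and shares the paper's overall skeleton --- reduce everything to Theorem \ref{thm:relent}, maximize the admissible $\zeta$ over $b$ at the crossover $t_1=t_2$ (you get the same quadratic $ad\,b^2-(ad+2d+2)b+(d+1)=0$ with discriminant $a^2d^2+4(d+1)^2$), arrive at the rational sufficient condition $\zeta\le \frac{15a}{32(d+1)+a(24-7d)}$, then exploit monotonicity in $a$ and invert at equality --- but your middle estimation step is genuinely different. The paper never works with the exact optimum: it uses the binomial approximation only to motivate the explicit choice $b=\tfrac12(1-a/4)$, substitutes that $b$ into both denominators, and then coarsens ($16-a^2\ge 15$, $na^3\le na$, dropping $-a^3$) to reach the same rational bound; since any fixed $b\in(0,1)$ gives a valid sufficient condition, this is rigorous despite the ``$\approx$'' symbols. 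You instead keep the exact optimum, observing that summing the two crossover relations yields the closed form $t^\ast = 2\big/\bigl(2(d+a+1)-ad+\sqrt{a^2d^2+4(d+1)^2}\bigr)$, and then majorize the square root by the linear function $\tfrac{1}{15}\bigl(34(d+1)+a(d+18)\bigr)$. Both routes land on the identical $\zeta_+(a)$; yours is arguably cleaner in that no step is an uncontrolled approximation, while the paper's only ever manipulates rational functions. One caveat in your verification: the quadratic $g(a)=256(d+1)^2+68a(d+1)(d+18)+a^2\bigl((d+18)^2-225d^2\bigr)$ is concave only for $d\ge 2$; at $d=1$ its leading coefficient is $+136$, but then all three coefficients are positive, so nonnegativity on $[0,1]$ holds trivially --- state that case separately.

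Your computation also exposes what is almost certainly a typo in the statement itself: substituting $a=1-1/d$ into $\zeta_+(a)$ gives numerator $15(d-1)=15d-15$, as you found, not the stated $15d-1$. The stated bound is not justified by the method: for $d=10$ the exact admissible threshold with $a=0.9$ is $a\,t^\ast\approx 0.0467$, whereas $\frac{15d-1}{25d^2+63d-24}=\frac{149}{3106}\approx 0.0480$ exceeds it. Note that the paper's own displayed denominator $32d^2-7d^2+7d+32d+24d-24$ matches your expansion exactly, so the discrepancy is confined to the numerator; your proof establishes the corollary with the corrected numerator $15(d-1)$, which is what the paper's derivation actually yields.
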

The proof of this Corollary is contained in Appendix \ref{sec:approxrelent}. The proof is essentially a basic calculation with linear approximations.

\begin{rem}
For a bipartite classical-quantum state $\rho^{XB}$ with classical system $X$, $\rho = \sum_x p_x \vec{e}_x \otimes \rho_x^B$, where $\vec{e}_x$ is a classical basis vector. One can thereby expand $D(\rho \| \E(\rho)) = \sum_{x \in X} p_x D(\rho_x \| \E(\rho_x))$. Hence Theorem \ref{thm:relent} may include classical auxiliary systems.
\end{rem}

\subsection{Perturbation of Relative Entropy to a Subalgebra} \label{sec:relent2}
The primary result of this section is the proof of Theorem \ref{revconv}, generalizing and strengthening Theorem \ref{thm:relent} using the methods of \citen{gao_complete_2021}. For this proof, we recall 4 useful results of \citen{gao_complete_2021} and preceding works. First, as noted in \citen{gao_fisher_2020} or inferred from the form of weighted inner product constructed in \citen{temme_chi2_2010, carlen_gradient_2017}:
\begin{lemma}[Inverse-Weighted Norm] \label{wnorm}
The function
\begin{equation*}
\|X\|_{\rho^{-1}} := \sqrt{\int_0^\infty tr(X^\dagger (\rho + r)^{-1} X (\rho + r)^{-1} ) dr}
\end{equation*}
is a norm for strictly positive $\rho$ on spaces with finite trace and Hilbert-Schmidt norm.
\end{lemma}
Though knowing how $\|X\|_{\rho^{-1}}$ is induced by an inner product is in principle sufficient to deduce geometrically that it must be a norm, we here give an elementary proof:
\begin{proof}[Proof]
Let
\[ \Gamma^{-\alpha}_{\rho, r}(X) = (\rho + r)^{-\alpha} X (\rho + r)^{-\alpha} \]
as a more parameterized version of $\Gamma^{-1}$ as in \citen{gao_complete_2021}. Then as in \citen{gao_complete_2021},
\begin{equation*}
\|X\|_{\rho^{-1}}^2 = \int_0^\infty \braket{X, \Gamma^{-1}_{r, \rho} (X)} dr
	= \int \tr(X^\dagger (\rho + r)^{-1} X (\rho + r)^{-1} ) dr \pl.
\end{equation*}
Via the cyclic property of the trace,
\[ \|X\|_{\rho^{-1}}^2
	= \int \braket{\Gamma^{-1/2}_{r, \rho}(X), \Gamma^{-1/2}_{r, \rho} (X)} dr
	= \int \| \Gamma^{-1/2}_{r, \rho} (X) \|_2^2 dr \pl, \]
where $\|\cdot\|_2$ is the usual Schatten or Hilbert-Schmidt 2-norm. This is already enough to show positivity. By inspecting the form of $\Gamma^{-1/2}_{r,\rho}(X)$, we can also see that $\|a X\|_{\rho^{-1}} = |a| \|X\|_{\rho^{-1}}$ for all $a \in \CC$, and that $\|X\|_{\rho^{-1}}^2 = 0 \iff X = \hat{0}$, the zero matrix. Expanding and using the triangle inequality for the Schatten 2-norm,
\[ \|X + Y\|_{\rho^{-1}}^2 = \|X\|_{\rho^{-1}}^2 + \|Y\|_{\rho^{-1}}^2 + 
	2 \int  \| \Gamma^{-1/2}_{r, \rho} (X) \|_2 \| \Gamma^{-1/2}_{r, \rho} (Y) \|_2 dr \pl. \]
Proving the triangle inequality for the weighted norm then reduces to showing that
\[ \int  \| \Gamma^{-1/2}_{r, \rho} (X) \|_2 \| \Gamma^{-1/2}_{r, \rho} (Y) \|_2 dr
	\leq \sqrt{\Big (\int \| \Gamma^{-1/2}_{s, \rho} (X) \|_2^2 ds \Big ) \Big ( \int \| \Gamma^{-1/2}_{r, \rho} (Y) \|_2^2 dr \Big )} \pl. \]
The Hilbert-Schmidt norms involved are obviously positive, and they are square integrable for strictly positive $\rho$. Hence we may interpret these norms as the absolute values of complex-valued, strictly positive functions of $r$. The Cauchy-Schwarz inequality finishes the proof.
\end{proof}

Shown explicitly as \citen[Lemma 1]{gao_complete_2021} and implicit from Lemma \ref{wnorm} or from the methods of \citen{junge_stability_2019} is a comparison property for inverse-weighted norms:
\begin{lemma} \label{normcomp}
For any positive operator $\rho$, strictly positive $\sigma$, and $c \in \RR^+$ such that $\rho \leq c \sigma$, $\|X\|_{\sigma^{-1}}^2 \leq c \|X\|_{\rho^{-1}}^2$.
\end{lemma}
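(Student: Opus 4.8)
The plan is to work straight from the integral representation in Lemma \ref{wnorm},
\[ \|X\|_{\tau^{-1}}^2 = \int_0^\infty \tr\big(X^\dagger (\tau+r)^{-1} X (\tau+r)^{-1}\big)\, dr \pl, \]
and to reduce the claim to two elementary facts: a scaling identity in the weighting density, and monotonicity of the integrand in that weight under the Loewner order.

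First I would record the scaling identity: for any $\lambda > 0$, $\|X\|_{(\lambda\tau)^{-1}}^2 = \lambda^{-1}\|X\|_{\tau^{-1}}^2$. This follows from the substitution $r \mapsto \lambda r$ in the integral, using $(\lambda\tau + \lambda r)^{-1} = \lambda^{-1}(\tau + r)^{-1}$, so that the two inverse factors contribute $\lambda^{-2}$ and the Jacobian contributes $\lambda$. Taking $\lambda = 1/c$ gives $c\,\|X\|_{\rho^{-1}}^2 = \|X\|_{(\rho/c)^{-1}}^2$, so it suffices to prove $\|X\|_{\sigma^{-1}}^2 \le \|X\|_{(\rho/c)^{-1}}^2$.

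Second, since densities have unit trace, $\rho \le c\sigma$ forces $c \ge 1$ and, more usefully, $\rho/c \le \sigma$. Hence for every $r \ge 0$ we have $\rho/c + r \le \sigma + r$, and antitonicity of the matrix inverse gives $(\sigma+r)^{-1} \le (\rho/c + r)^{-1}$. The statement then reduces to a pointwise (in $r$) comparison of the integrand, namely that for positive semidefinite $A \le B$,
\[ \tr(X^\dagger A X A) \le \tr(X^\dagger B X B) \pl. \]
I would prove this by the telescoping identity $BXB - AXA = BX(B-A) + (B-A)XA$, whence, after cyclic permutation under the trace, $\tr\big(X^\dagger(BXB-AXA)\big) = \tr\big((B-A)\,X^\dagger B X\big) + \tr\big((B-A)\,X A X^\dagger\big)$. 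Each summand is the trace of a product of two positive semidefinite matrices — $B - A \ge 0$ paired with $X^\dagger B X \ge 0$ and with $X A X^\dagger \ge 0$ respectively — hence nonnegative. Integrating this pointwise inequality over $r$ with $A = (\sigma+r)^{-1}$ and $B = (\rho/c+r)^{-1}$ yields $\|X\|_{\sigma^{-1}}^2 \le \|X\|_{(\rho/c)^{-1}}^2 = c\,\|X\|_{\rho^{-1}}^2$.

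The step I expect to need the most care is the quadratic-form monotonicity lemma together with getting the sharp constant: the crude bound $(\sigma+r)^{-1} \le c(\rho+r)^{-1}$ fed directly into that lemma would cost a factor $c^2$, so recovering the exact constant $c$ hinges on first absorbing one power of $c$ through the scaling identity. A secondary, routine point is non-invertibility: if $\rho,\sigma$ are singular, $\rho \le c\sigma$ still gives $\supp\rho \subseteq \supp\sigma$, and I would either restrict to $\supp\sigma$ or pass to the limit from $\rho+\epsilon\,\id$, $\sigma+\epsilon\,\id$, invoking continuity of the integral representation.
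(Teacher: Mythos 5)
Your proof is correct. Note, however, that the paper itself contains no proof of Lemma \ref{normcomp}: the statement is imported from the literature (it is cited as Lemma 1 of Gao--Rouz\'e and described as ``implicit'' in Lemma \ref{wnorm} and in the methods of Junge et al.), so your argument is a genuine, self-contained alternative rather than a parallel to an in-paper derivation. Your route has three elementary ingredients: (i) the homogeneity $\|X\|_{(\lambda\tau)^{-1}}^2 = \lambda^{-1}\|X\|_{\tau^{-1}}^2$ obtained from the substitution $r \mapsto \lambda r$; (ii) antitonicity of the matrix inverse applied to $\rho/c + r \leq \sigma + r$ (valid for every $r>0$ even when $\rho,\sigma$ are singular, so your closing regularization remark is safe but not strictly needed); and (iii) the telescoping identity $BXB - AXA = BX(B-A) + (B-A)XA$, which reduces monotonicity of $A \mapsto \tr(X^\dagger A X A)$ to the fact that the trace of a product of two positive semidefinite matrices is nonnegative. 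Your observation about the constant is exactly right and is the nontrivial point: plugging the crude bound $(\sigma+r)^{-1} \leq c(\rho+r)^{-1}$ into (iii) loses a factor, giving $c^2$, and it is the scaling identity (i) that absorbs one power of $c$ and recovers the sharp constant. The proof the paper points to instead works at the superoperator level, comparing the maps $X \mapsto \int_0^\infty (\tau+r)^{-1}X(\tau+r)^{-1}\,dr$ as quadratic forms on Hilbert--Schmidt space via operator-monotonicity machinery; that packaging integrates with the $\chi^2$/Fisher-metric framework used elsewhere in those works, whereas yours buys transparency: everything is a pointwise matrix inequality under the integral, checkable by hand, with no appeal to operator means or cited lemmas.
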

Though the following Lemma appears in \citen{gao_complete_2021}, it follows directly from taking a well-known integral representation of the second derivative of relative entropy $D((\rho, \sigma)_t \| \sigma)$ with respect to $t$, where $(\rho, \sigma)_t$ is defined in the Lemma:
\begin{lemma} \label{normform}
Let $(\rho, \sigma)_t = (1-t) \sigma + t \rho$ for $t \in [0,1]$, and $\rho \in \text{supp}(\sigma)$. Then
\[ D(\rho \| \sigma) = \int_0^1 \int_0^s \|\rho - \sigma \|_{\rho_t^{-1}}^2 dt ds \]
\end{lemma}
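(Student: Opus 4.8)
The plan is to reduce the claim to a one-variable Taylor expansion of the scalar function $f(s) := D(\rho_s \| \sigma)$ along the segment $\rho_s = (1-s)\sigma + s\rho$, combined with the non-commutative integral formula for the derivative of the matrix logarithm. Writing $X := \rho - \sigma = \dot{\rho}_s$, which is constant in $s$, I would first record the boundary data. Since $\rho_0 = \sigma$, we have $f(0) = D(\sigma \| \sigma) = 0$. Using $\tfrac{d}{ds}\tr g(\rho_s) = \tr(g'(\rho_s)\dot\rho_s)$ with $g(x) = x\log x$ and the fact that $\tr X = \tr(\rho - \sigma) = 0$, a direct differentiation gives $f'(s) = \tr\big(X(\log\rho_s - \log\sigma)\big)$, so that $f'(0) = 0$ as well. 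The target value is $f(1) = D(\rho \| \sigma)$.

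The heart of the argument is computing $f''(s)$. Since $\tr(X\log\sigma)$ is constant in $s$, only the first term of $f'$ contributes, so $f''(s) = \tr\big(X\,\tfrac{d}{ds}\log\rho_s\big)$. Here I would invoke the standard representation $\log A = \int_0^\infty\big((1+r)^{-1} - (A+r)^{-1}\big)\,dr$ for positive $A$ and differentiate under the integral, using $\tfrac{d}{ds}(\rho_s + r)^{-1} = -(\rho_s + r)^{-1} X (\rho_s + r)^{-1}$, to obtain
\[ \frac{d}{ds}\log\rho_s = \int_0^\infty (\rho_s + r)^{-1} X (\rho_s + r)^{-1}\, dr. \]
Substituting this and using self-adjointness of $X$ (so $X^\dagger = X$) yields exactly
\[ f''(s) = \int_0^\infty \tr\big(X^\dagger (\rho_s + r)^{-1} X (\rho_s + r)^{-1}\big)\, dr = \|\rho - \sigma\|_{\rho_s^{-1}}^2, \]
the inverse-weighted norm of Lemma \ref{wnorm} evaluated at $\rho_s$.

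I would then assemble the pieces by a double application of the fundamental theorem of calculus: from $f'(s) - f'(0) = \int_0^s f''(t)\,dt$, a further integration over $s \in [0,1]$ gives
\[ f(1) - f(0) - f'(0) = \int_0^1\!\int_0^s f''(t)\,dt\,ds. \]
Because both $f(0) = 0$ and $f'(0) = 0$, the left side collapses to $D(\rho \| \sigma)$ while the right side is $\int_0^1\!\int_0^s \|\rho - \sigma\|_{\rho_t^{-1}}^2\,dt\,ds$, which is precisely the asserted identity. The vanishing of the two boundary terms is exactly what produces a clean nested integral with no leftover contributions.

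I expect the main obstacle to be a matter of rigor rather than of idea: justifying differentiation under the $r$-integral for the matrix logarithm (local uniform integrability of the integrand in a neighborhood of each $s$) and handling the case where $\sigma$ is not invertible. For the latter I would restrict to $\supp\sigma$, since finiteness of $D(\rho\|\sigma)$ forces $\supp\rho \subseteq \supp\sigma$, whence $\supp(\rho - \sigma) \subseteq \supp\sigma = \supp\rho_t$ for every $t \in [0,1]$; each weighted norm is then finite by the support clause of Lemma \ref{wnorm}, and the double integral converges. Within that support the computation above goes through verbatim.
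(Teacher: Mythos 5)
Your proof is correct and takes essentially the same route as the paper: the paper's justification for this lemma is precisely that it "follows directly from taking a well-known integral representation of the second derivative of relative entropy $D(\rho_t \| \sigma)$ with respect to $t$," which is exactly the computation $f''(t) = \|\rho - \sigma\|_{\rho_t^{-1}}^2$ you carry out, followed by double integration against the vanishing boundary data $f(0) = f'(0) = 0$. Your added attention to differentiation under the $r$-integral and to the non-invertible-$\sigma$ case is extra rigor beyond what the paper records, but not a different argument.
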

We also use the following ``key" Lemma of \citen{gao_complete_2021}: 
\begin{lemma}[Lemma 2 from \citen{gao_complete_2021}] \label{keylem}
Let $\rho \leq c \sigma$ for $c > 0$ and strictly positive densities $\rho$ and $\sigma$. Then
\[ \frac{c (\ln c - 1) + 1}{(c-1)^2} \| \rho - \sigma \|_{\sigma^{-1}}^2  \leq D(\rho \| \sigma) \leq \|\rho - \sigma \|^2_{\sigma^{-1}} \pl. \]
\end{lemma}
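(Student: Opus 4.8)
The plan is to read everything off the integral representation of Lemma \ref{normform},
\[
D(\rho \| \sigma) = \int_0^1 \int_0^s \|\rho - \sigma\|_{\rho_t^{-1}}^2 \, dt \, ds, \qquad \rho_t = (1-t)\sigma + t\rho,
\]
and to sandwich the integrand between two scalar multiples of the single, $t$-independent quantity $\|\rho - \sigma\|_{\sigma^{-1}}^2$. Both inequalities come from comparing the weight $\rho_t$ to $\sigma$ in the Loewner order and then invoking the norm-comparison of Lemma \ref{normcomp}. Since that lemma permits an arbitrary constant $c \in \RR^+$, it directly subsumes the homogeneity $\|X\|_{(\lambda\sigma)^{-1}}^2 = \lambda^{-1}\|X\|_{\sigma^{-1}}^2$ (itself immediate from the substitution $r \mapsto \lambda r$ in the defining integral of Lemma \ref{wnorm}), so no separate scaling step is needed.

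For the upper bound I would use only positivity of $\rho$: since $t\rho \geq 0$ we have $\rho_t \geq (1-t)\sigma$, equivalently $\sigma \leq (1-t)^{-1}\rho_t$. Applying Lemma \ref{normcomp} to the densities $\sigma$ and $\rho_t$ with constant $(1-t)^{-1}$ gives $\|\rho - \sigma\|_{\rho_t^{-1}}^2 \leq (1-t)^{-1}\|\rho - \sigma\|_{\sigma^{-1}}^2$. Feeding this into the double integral leaves the constant $\int_0^1\int_0^s (1-t)^{-1}\,dt\,ds = \int_0^1 -\ln(1-s)\,ds = 1$, which is exactly the claimed upper bound. For the lower bound I would bring in the hypothesis $\rho \leq c\sigma$, which yields $\rho_t = (1-t)\sigma + t\rho \leq (1 + t(c-1))\sigma$; Lemma \ref{normcomp} applied in the other direction (densities $\rho_t$ and $\sigma$, constant $1 + t(c-1)$) gives $\|\rho - \sigma\|_{\rho_t^{-1}}^2 \geq (1 + t(c-1))^{-1}\|\rho - \sigma\|_{\sigma^{-1}}^2$. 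The remaining computation is the double integral $\int_0^1\int_0^s (1 + t(c-1))^{-1}\,dt\,ds$: the inner integral is $(c-1)^{-1}\ln(1 + s(c-1))$, and the change of variables $v = 1 + s(c-1)$ converts the outer integral into $(c-1)^{-2}\int_1^c \ln v\,dv = (c-1)^{-2}(c(\ln c - 1) + 1)$, matching the stated constant.

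The steps are short, and the points needing attention are bookkeeping rather than a genuine obstacle. First, one must check that $\rho_t$ is a bona fide density for $t \in [0,1]$ (so that Lemma \ref{normcomp} applies with the claimed constants) and that the two Loewner comparisons above hold as operator inequalities. Second, when $\sigma$ is singular the quantities must be read on $\supp(\sigma)$, as already flagged for Lemma \ref{wnorm}; but $\rho \leq c\sigma$ forces $\supp(\rho) \subseteq \supp(\sigma)$, so the relative entropy and all inverse-weighted norms are finite there and the argument is unchanged. The genuinely delicate ingredients — the Hessian-of-relative-entropy identity and the fact that $\|\cdot\|_{\rho_t^{-1}}$ is a norm — have already been established in Lemmas \ref{normform} and \ref{wnorm}, so the present lemma reduces to the two one-line Loewner estimates and the two elementary integrals above.
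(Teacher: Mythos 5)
Your proof is correct, but there is nothing in the paper to compare it against: the paper does not prove this statement at all. It is imported verbatim as ``Lemma 2 from \cite{gao_complete_2021}'' and listed among the four results taken from the literature, so your derivation fills in a proof the author deliberately omitted. That said, your argument is exactly the natural one given the toolkit the paper sets up, and it is essentially the same mechanism the paper itself deploys inside the proof of Theorem \ref{revconv}: there the author uses the identical Loewner comparisons ($\sigma \leq (1-t)^{-1}(\rho,\sigma)_t$ for the cheap direction, and $(\rho,\omega)_t \leq (1+\zeta(c-1))(\rho,\sigma)_t$ for the $c$-dependent direction), the same invocation of Lemma \ref{normcomp} under the integral of Lemma \ref{normform}, and even the same computation $\int_0^1\int_0^s (1-t)^{-1}\,dt\,ds = 1$. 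Your two integral evaluations check out: the upper-bound constant is $1$, and the lower-bound constant $\int_0^1\int_0^s (1+t(c-1))^{-1}\,dt\,ds = (c-1)^{-2}\int_1^c \ln v\,dv = (c(\ln c -1)+1)/(c-1)^2$ matches the statement. Your bookkeeping points are also handled correctly: $\rho_t$ is a density, both applications of Lemma \ref{normcomp} compare genuine densities (so no separate homogeneity step is needed), and $\rho \leq c\sigma$ forces $\supp(\rho) \subseteq \supp(\sigma)$, so everything restricts to $\supp(\sigma)$ exactly as the paper flags after Lemma \ref{wnorm}. The only cosmetic remark: at $c=1$ the hypothesis forces $\rho = \sigma$ (a positive operator with zero trace vanishes), so the $0/0$ constant is harmless; stating this would make the lemma airtight for all $c > 0$ as written.
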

We make the simple observation:
\begin{rem} \label{inttrick}
For any pair of bounded, non-negative, Riemann integrable scalar functions $f(t), g(t)$ and any $s > 0$,
\begin{equation*}
\int_0^s f(t) g(t) dt \leq \frac{1}{2} \int_0^s (f^2(t) + g^2(t)) dt \pl.
\end{equation*}
This inequality follows from the inequality of arithmetic and geometric means.
\end{rem}
Finally, we will often use a well-known continuity argument to bypass the assumption of strict positivity:
\begin{rem} \label{rem:strictpos}
For all densities $\rho$ and $\sigma$, $D(\rho \| \sigma) = D(P_\rho \rho P_\rho \| P_\rho \sigma P_\rho)$, where $P_\rho$ is the projection to the support of $\rho$. Hence without loss of generality, we may restrict to the support of $\rho$, on which $\rho$ is strictly positive. Since $D(\rho \| \sigma)$ is finite if and only if the support of $\rho$ is contained in that of $\sigma$, we may assume it is finite if and only if $P_\rho \rho P_\rho$ and $P_\rho \sigma P_\rho$ are both strictly positive.
\end{rem}
Using these known results, we derive the new results in the rest of this Subsection.

\begin{prop} \label{ephi}
Let $\E,\Phi$ be quantum channels and $\rho$ be a density such that $\kappa \E(\rho) \leq \Phi(\rho) \leq c \E(\rho)$, and $\rho \leq c\E(\rho)$ for some $\kappa, c > 0$. Assume that $D(\Phi(\rho) \| \E(\rho)) \leq D(\rho \|\E(\rho)) < \infty$. Then
\begin{equation*}
D(\rho \| \Phi(\rho)) \leq \frac{a (c-1)^2}{\kappa (c (\ln c - 1) + 1)} D(\rho \| \E(\rho))
\end{equation*}
for constant $a$. In general, $a \leq 4$. If $\Phi = (1 - b) \E + b \tilde{\Phi}$ for some $b \in [0,1]$ and $\tilde{\Phi}$ such that $\tilde{\Phi} \E = \E$, then we may improve the constant to $a = 1 + 2\sqrt{b} + b$.
\end{prop}
\begin{proof}
Since $\rho$ and $\E$ are order-comparable, $\rho \not \in \text{supp}(\Phi(\rho))$ if and only if $\rho \not \in \text{supp}(\E(\rho))$.  If these conditions hold, then both sides of the inequality are infinite.

Otherwise, by Lemma \ref{keylem}, $D(\rho \| \Phi(\rho)) \leq \| \rho - \Phi(\rho)\|_{\Phi(\rho)^{-1}}^2$. Since $\E(\rho) \leq \kappa^{-1} \Phi(\rho)$, $\| \rho - \Phi(\rho)\|_{\Phi(\rho)^{-1}}^2 \leq \kappa^{-1} \| \rho - \Phi(\rho)\|_{\E(\rho)^{-1}}^2$ using Lemma \ref{normcomp}. By the triangle inequality (shown by Lemma \ref{wnorm}),
\[ \| \rho - \Phi(\rho)\|_{\E(\rho)^{-1}} \leq \| \rho - \E(\rho)\|_{\E(\rho)^{-1}} + \| \E(\rho) - \Phi(\rho)\|_{\E(\rho)^{-1}} \pl. \]
Again applying Lemma \ref{keylem}, $\| \rho - \E(\rho)\|_{\E(\rho)^{-1}} \leq \sqrt{g(c) D(\rho \| \E(\rho))}$, and $\| \E(\rho) - \Phi(\rho)\|_{\E(\rho)^{-1}} \leq \sqrt{g(c) D(\Phi(\rho) \| \E(\rho))}$, where $g(c)$ is given by Lemma \ref{keylem}. Hence,
\begin{equation*}
\begin{split}
\| \rho - \Phi(\rho)\|_{\E(\rho)^{-1}}^2 & \leq g(c) \big (D(\rho \| \E(\rho)) \\ & \pl + 2 \sqrt{D(\rho \| \E(\rho)) D(\Phi(\rho) \| \E(\rho))} + D(\Phi(\rho) \| \E(\rho)) \big ) \pl.
\end{split}
\end{equation*}
To complete the Proposition, we use the assumption that $D(\Phi(\rho) \| \E(\rho)) \leq D(\rho \|\E(\rho))$. For cases in which $\sigma$ is not strictly positive, we refer to Remark \ref{rem:strictpos}.
\end{proof}

\begin{proof}[Proof of Theorem \ref{revconv}]
By the assumptions of the Theorem, $\sigma, \omega$, and $\eta$ all have the same support. By the assumptions of the inequality and Remark \ref{rem:strictpos}, we may assume that $\rho, \sigma, \omega$, and $\eta$ are all strictly positive on this support.

If $\omega \geq (1-\zeta) \sigma$, then $\eta := (\omega - (1 - \zeta) \sigma)/\zeta$ is a positive semidefinite matrix. Furthermore, when $\omega, \sigma$ are densities,
\[ tr(\eta) = (\tr(\omega) - (1 - \zeta) \tr(\sigma)) / \zeta
	= (1 - 1 + \zeta) / \zeta = 1 \pl. \]
Hence $\eta$ is positive semidefinite with trace 1, the conditions for it to be a density matrix. It also then holds that
\[ \omega = (1-\zeta) \sigma + \zeta \eta \pl. \]
This convex combination form will be useful in understanding the proof.

First, let $X = \rho - \omega = (1 - \zeta) (\rho - \sigma) + \zeta (\rho - \eta)$, and we rewrite
\[ \rho - \sigma = \frac{1}{1 - \zeta} X + \frac{\zeta}{1 - \zeta} (\eta - \rho) \pl. \]
By the triangle inequality (Lemma \ref{wnorm}),
\begin{equation} \label{eq:tri1}
 \| \rho - \sigma \|_{\xi^{-1}} \leq 
	\frac{1}{1 - \zeta} \|X\|_{\xi^{-1}}
	+ \frac{\zeta}{1 - \zeta} \| \eta - \rho \|_{\xi^{-1}}
\end{equation}
for any density $\xi$ that is strictly positive on the support of $\eta$. Let $(\xi, \phi)_t := (1 - t) \phi + t \xi$ for any pair of densities $\xi, \phi$. In this particular situation,
\begin{equation}
\begin{split}
& D(\rho \| \sigma) = \int_0^1 \int_0^s  \|\rho - \sigma \|_{(\rho, \sigma)_t^{-1}}^2 dt ds \\
& \leq \frac{1}{(1-\zeta)^2} \int \int \Big ( \|\rho - \omega \|_{(\rho, \sigma)_t^{-1}}^2
	+ 2 \zeta \|\rho - \omega \|_{(\rho, \sigma)_t^{-1}} \|\rho - \eta \|_{(\rho, \sigma)_t^{-1}}
	+ \zeta^2 \|\rho - \eta \|_{(\rho, \sigma)_t^{-1}}^2 \Big ) dt ds \pl.
\end{split}
\end{equation}
By Remark \ref{inttrick},
\begin{equation} \label{eq:decomp1}
\begin{split}
D(\rho \| \sigma) & \leq \frac{1}{(1-\zeta)^2} \int \int \Big ( (1 + \zeta) \|\rho - \omega \|_{(\rho, \sigma)_t^{-1}}^2
	+ \zeta(1 + \zeta) \|\rho - \eta \|_{(\rho, \sigma)_t^{-1}}^2 \Big ) dt ds \pl.
\end{split}
\end{equation}
Remark \ref{rem:strictpos} allows us to assume that $\|\rho - \xi \|_{(\rho, \sigma)_t^{-1}}$ is finite over the range of integration for any $\xi \in \{ \sigma, \omega, \eta \}$.

By the Theorem's assumptions,
\[ (1 - \zeta) (\rho, \sigma)_t \leq (\rho, \omega)_t
	\leq (1 + \zeta(c - 1)) (\rho, \sigma)_t \]
for all $t \in [0,1]$. Hence via Lemma \ref{normcomp},
\begin{equation} \label{eq:rhocomp1}
\begin{split}
\|  \rho - \omega \|_{(\rho, \sigma)_t^{-1}}^2\leq (1 + \zeta(c - 1))
	\| \rho - \omega \|_{(\rho, \omega)_t^{-1}}^2 .
\end{split}
\end{equation}
Since $\eta \leq c \sigma$, $(\rho, \eta)_t \leq c (\rho, \sigma)_t$, and Lemma \ref{normcomp} yields that
\[ \| \rho - \eta \|_{(\rho, \sigma)_t^{-1}}^2 \leq c \| \rho - \eta \|_{(\rho, \eta)_t^{-1}}^2 \pl. \]
Returning to Equation \eqref{eq:decomp1} and applying Lemma \ref{normform} completes the proof of the first Equation in this Theorem.

For the second Equation, the triangle inequality and Remark \ref{inttrick} imply that
\[ \int \int \| \rho - \eta \|_{(\rho, \sigma)_t^{-1}}^2 dt ds \leq 
	2 \int \int \Big ( \| \rho - \sigma \|_{(\rho, \sigma)_t^{-1}}^2
+ \| \eta - \sigma \|_{(\rho, \sigma)_t^{-1}}^2 \Big ) dt ds \pl. \]
The first term integrates to $D(\rho \| \sigma)$. For the latter term, we note that $\sigma \leq (\rho, \sigma)_t/(1-t)$. Via Lemmas \ref{normcomp} and \ref{keylem},
\[ \| \eta - \sigma \|_{(\rho, \sigma)_t^{-1}}^2 \leq 
	(1-t)^{-1} \| \eta - \sigma \|_{\sigma}^2
	\leq \frac{(c-1)^2}{(1-t)(c(\ln c - 1) + 1)} D(\eta \| \sigma) \pl. \]
We calculate
\[ \int_0^1 \int_0^s \frac{dt ds}{1-t} = - \int_0^1 \ln(1-s) ds = 1 \pl. \]
 Hence
\[ \int \int \| \rho - \eta \|_{(\rho, \sigma)_t^{-1}}^2 dt ds
	\leq 2 \Big ( D(\rho \| \sigma) + \frac{(c-1)^2}{c(\ln c - 1) + 1} D(\eta \| \sigma) \Big )  \pl. \]
Returning to Equation \eqref{eq:decomp1},
\[ (1- 4 \zeta - \zeta^2) D(\rho \| \sigma) \leq (1 + \zeta c + 2 \zeta^2 (c-1) ) D(\rho \| \omega)
	+ 2 \zeta(1 + \zeta) \frac{(c-1)^2}{c(\ln c - 1) + 1} D(\eta \| \sigma) \pl. \]
\end{proof}

\begin{cor}[Multiplicative Perturbation of Subalgebra-relative Entropy] \label{cor:simpleG}
Let $\rho$ be a density and $\E,\Phi$ be quantum channels such that $(1-\zeta) \E(\rho) + \zeta \Phi(\rho) \leq (1 + \zeta(c-1)) \E(\rho)$ (when the 1st inequality is satisfied, the 2nd is equivalent to $\Phi(\rho) \leq c \E(\rho)$) for constants $\zeta \in (0,1)$ and $c \geq 1$). Assume $\rho \in \text{supp}(\E(\rho))$. Furthermore, assume $\Phi \E = \E$. Then
\begin{equation*}
D(\rho \| (1-\zeta) \E(\rho) + \zeta \Phi(\rho)) \geq \beta_{c, \zeta} D(\rho \| \E(\rho))
\end{equation*}
for
\[ \beta_{c, \zeta} := \frac{1}{1 + \zeta c + 2 \zeta^2 (c-1)}\Big ( 1 - 2 \zeta(1 + \zeta) \frac{(c-1)^2}{c(\ln c - 1) + 1} - 4 \zeta - \zeta^2 \Big ) = 1 - O(c \zeta)  \pl. \]
When $\E(\rho) = \id / d$ in dimension $d$, we may replace $\beta_{c, \zeta}$ by $(1-a)$ as in Theorem \ref{thm:relent}.
\end{cor}
\begin{proof}
We use the lower inequality of Theorem \ref{revconv} with $\sigma = \E(\rho)$, and $\omega = (1 - \zeta) \E(\rho) + \zeta \Phi(\rho)$. By the assumptions of the Theorem, $\Phi(\rho)$ is strictly positive on and has the same support as $\E(\rho)$, so we may assume strict positivity of both by Remark \ref{rem:strictpos}. The Corollary follows from the data processing inequality when $\Phi \E = \E$.
\end{proof}

For any conditional expectation $\E_\sigma$, there is a basis for which
\begin{equation} \label{eq:blockcondexp}
\E_{\sigma *}(\rho) = \oplus_l \tr_{B_l}(P_l \rho P_l) \otimes \sigma^{B_l} \pl,
\end{equation}
where $\sigma^B$ is a $\rho$-independent density on the subsystem $B_l$, and $P_l$ is a projector to the $l$th diagonal block. In particular, $\oplus_l P_l \rho P_l$ is a block diagonal matrix with entries from $\rho$, effectively removing all coherence between blocks. We may subsequently interpret each such block as a bipartite system $A_l \otimes B_l$, then trace out subsystem $B_l$ and replace it by the fixed state $\sigma^{B_l}$. This block diagonal form is applied commonly in operator algebras - see \citen{gao_unifying_2017} for discussion of the doubly stochastic case and \citen{gao_complete_2021} in general. It follows from the fundamental result of von Neumann \cite{von-neumann_rings_1949} that every von Neumann algebra is decomposable as a direct integral of factors, which are von Neumann algebras in which only the identity commutes with all elements.  In infinite dimensions, ``$\oplus_l$" may take the form of a direct integral rather than a sum.
\begin{prop} \label{near}
Let $\E_{\sigma*}$ be a stochastic conditional expectation weighted by normal, faithful density $\sigma$ and $\Phi$ a quantum channel such that $\E_{\sigma*} \Phi = \Phi \E_{\sigma*} = \E_{\sigma*}$, both defined on matrices of dimension $d$. Let $\ket{\psi}$ be a $d \times d$ Bell state given by $\ket{\psi} = \frac{1}{\sqrt{d}} \sum_{i=1}^d \ket{i} \otimes \ket{i}$ in the computational basis, which we may assume without loss of generality is compatible with the block diagonal form of $\E_{\sigma*}$ as in \eqref{eq:blockcondexp}. Let $d_l$ denote the dimension of the $l$th diagonal block of $\id \otimes \E_{\sigma*}(\rho)$ and $m_l$ denote the dimension of the traced subsystem in that block. Then
\begin{equation} \label{eq:comboall}
\Phi \geq_{cp} (1 - \zeta) \E_{\sigma*} \text{, or equivalently } \Phi = (1 - \zeta) \E_{\sigma*} + \zeta \tilde{\Phi}
\end{equation}
for some channel $\tilde{\Phi}$ such that $\tilde{\Phi} \E_{\sigma*} = \E_{\sigma*} \tilde{\Phi} = \E_{\sigma*}$ whenever
\[ 1 \geq \zeta \geq \max_l m_l d \| \Phi(\ketbra{\psi}) - \E_{\sigma*}(\ketbra{\psi}) \|_\infty / (d_l \lambda_{min}(\sigma^{B_l})) \pl. \]
If $\E_{\sigma*} = \E$ as a unital conditional expectation weighted by the trace, then the above condition reduces to:
\[1 \geq \zeta \geq \max_l m_l^2 d \| \Phi(\ketbra{\psi}) - \E(\ketbra{\psi}) \|_\infty / d_l \pl, \]
and $\tilde{\Phi}$ is assured to be unital. If $\E(\rho) = \id/d$ for all $\rho$, then we may replace the condition on $\zeta$ by
\[ 1 \geq \zeta \geq d \sup_\eta \| \Phi(\eta) - \E(\eta) \|_\infty \pl, \]
where the supremum is over normalized densities.
\end{prop}
\begin{proof}
For $k \in 1...d$ and a $d \times d$ matrix $X$, let $\lambda_k(X)$ denote $X$'s $k$th eigenvalue in non-increasing order and $\lambda_{max}(X) = \lambda_1(X), \lambda_{min}(X) = \lambda_d(X)$. Let $\rho = \ketbra{\psi}$, the $d \times d$ Bell state.

For a pair of Hermitian matrices $X,Y$ and any $k \in 1...d$, Weyl's inequality \cite{horn_topics_1991} states that
\[ \lambda_d(Y) \leq \lambda_k(X + Y) - \lambda_k(X) \leq \lambda_1(Y) \pl. \]
Via Weyl's inequality, for each value of $k$, setting $X = \Phi(\rho) - (1 - \zeta)\E_{\sigma *}(\rho)$ and $Y = \E_{\sigma*}(\rho) - \Phi(\rho)$,
\[ \lambda_k(\E_{\sigma*}(\rho) - (1 - \zeta)\E_{\sigma *}(\rho)) - \lambda_k(\Phi(\rho) - (1 - \zeta) \E_{\sigma*}(\rho)) 
	\leq \lambda_1(\E_{\sigma*}(\rho) - \Phi(\rho)) \pl. \]
The absolute value of the right hand side is upper-bounded by the $\infty$-norm distance. Re-arranging and combining terms,
\begin{equation} \label{eq:evalinfnorm}
\lambda_k(\Phi(\rho) - (1 - \zeta) \E_{\sigma*}(\rho)) \geq 
	\zeta \lambda_k(\E_{\sigma*}(\rho)) - \| \Phi(\rho) - \E_{\sigma*}(\rho) \|_\infty \pl.
\end{equation}
Invoking Choi's theorem on completely positive maps, we recall the assumption that $\rho$ is a $d \times d$ Bell pair that we may express in a basis that is compatible with the block diagonal form of Equation \eqref{eq:blockcondexp}. We estimate $\lambda_k(\E_{\sigma*}(\rho))$. Let us consider $\E_{\sigma*}$ to act on the outer-indexed subsystem of the Bell pair. We may decompose $\E_{\sigma *} = \tilde{\E}_{\sigma *} \E_{bl}$, where $\E_{bl}(\rho) = \oplus_l P_l \rho P_l$, and $\tilde{\E}_{\sigma *}$ traces and replaces the $B_l$ subsystem within each block by $\sigma^{B_l}$. Each block in $\E_{bl}(\rho)$ is then itself a maximally entangled state in dimension $d_l \times d_l$ times a factor of $d_l / d$. We then apply a partial trace with replacement by complete mixture of an $m_l$-dimensional subsystem from each $l$th block, which because of the maximal entanglement yields $m_l^2$ distinct eigenvalues of magnitude each at least $d_l \lambda_{min}(\sigma^{B_l}) / (d m_l)$. To ensure positivity of the right hand side of Equation \eqref{eq:evalinfnorm}, we may choose any $\zeta \geq \max_l m_l d / (d_l \lambda_{min}(\sigma^{B_l}))$. In the unital case, $\sigma^{B_l} = \id / m_l$, so $\lambda_{min}(\sigma^{B_l}) = 1 / m_l$. The map given by $\Phi - (1 - \zeta) \E_{\sigma*}$ is then completely positive by Choi's theorem. Since both $\Phi$ and $\E_{\sigma*}$ are trace-preserving, $tr((\Phi(\rho) - (1 - \zeta) \E_{\sigma *}(\rho))(X)) = \zeta tr(X)$ for any matrix $X$. Hence $\tilde{\Phi} = (\Phi - (1 - \zeta) \E_{\sigma *})/\zeta$ is a quantum channel. By linearity and the assumption that $\E_{\sigma *} \Phi = \Phi \E_{\sigma *} = \E_{\sigma *}$,
\[\E_{\sigma *} \tilde{\Phi}  = \tilde{\Phi} \E_{\sigma *} = \frac{1}{\zeta}(\E_{\sigma *} - (1 - \zeta) \E_{\sigma *}) = \E_{\sigma *} \pl. \]
 We may then rewrite
\[ \Phi = (1 - \zeta) \E_{\sigma *} + \zeta \tilde{\Phi}\]
as a convex combination of channels. To see that $\tilde{\Phi}$ is unital when $\sigma = \id / d$ and $\E_{\sigma *} = \E$, if $\tilde{\Phi}(\id) \neq \id$, then $\Phi(\id)$ cannot be $\id$, which would contradict the assumption that $\E$ is unital. 

When $\E(\eta) = \id/d$ for any $\eta$, we can simplify the above argument by noting that $\Phi(\eta)$ and $\id/d$ are always simultaneously diagonal. Whenever $(1 - \zeta) \leq d \lambda_k(\Phi(\eta))$ for all $k$, $\Phi(\eta) - (1 - \zeta) \id / d \geq 0$. Via the triangle inequality, we have the condition that $1/d - \|\Phi(\eta) - \E(\eta)\|_\infty - (1-\zeta)/d \geq 0$. Combining terms and re-arranging yields the final part of the Proposition.
\end{proof}
Even though Proposition \ref{near} depends on the dimension, this is the minimal dimension on which the conditional expectation may act, not including any extra, untouched subsystems. Hence the same bound applies for $\E \otimes \id^B, \Phi \otimes \id^B$ independently of $B$'s dimension, and any unitary embedding of such an extension preserves constants. Proposition \ref{near} is similar to \citen[Lemma VI.8]{bardet_group_2021}, which is attributed in that work to non-author Li Gao.

\section{Combining Conditional Expectations} \label{sec:asa}
We start this section by recalling basic facts about conditional expectations. A doubly-stochastic conditional expectation $\E_\N$ is a projector to matrices in algebra $\N$ that is self-adjoint under the trace. For this reason, we do not distinguish between $\E_\N$ and $\E_\N^*$ or $\E_{\N*}$, which respectively denote the dual and predual of $\E_\N$. For a subalgebra $\N$, the doubly-stochastic conditional expectation $\E_\N$ is unique. We also consider weighted conditional expectations with block diagonal decompositions of the form in Equation \eqref{eq:blockcondexp}, which we rewrite equivalently as
\begin{equation} \label{eq:weightblock}
\E_{\N, \sigma *}(\rho) = \oplus_i p_i (\rho^{A_i} \otimes \sigma_i^{B_i})
\end{equation}
for some probability distribution $(p_i)$. One could specify an overall weighting density $\sigma = \oplus_i (\id^{A_i}/|A_i| \otimes \sigma_i)$. Given an arbitrary $\sigma$ not necessarily in this form, we recall the self-adjoint conditional expectation to the commutant algebra,
\begin{equation}
\E_{\N'}(\rho) = \oplus_i (\id^{A_i} / |A_i| \otimes \rho^{B_i}) \pl.
\end{equation}
For any $\sigma$, $\E_{\N'}(\sigma)$ yields each $\sigma^{B_i}$ in the block decomposition of Equation \eqref{eq:weightblock}. Given a weighting state $\sigma$ and a set of conditional expectations $\{\E_j\}_{j=1}^J$, we may thereby construct the set $\{\E_{j, \sigma *} = \E_{j, \E_{\N'}(\sigma) *} \}$ unambiguously.

The projections denoted $\E_{\N, \sigma}$ and $\E_{\N, \sigma *}$ are respective adjoints under the trace. Let $\tilde{\sigma}_\N$ be the unnormalized density in finite dimension $d$ given by
\begin{equation} \label{eq:unnorm}
\tilde{\sigma}_\N := d \E_{\N'}(\sigma) \pl.
\end{equation}
It then holds for any density $\rho$ that
\[ \E_{\N, \sigma *}(\rho) = \tilde{\sigma}_\N^{1/2} \E_{\N}(\rho) \tilde{\sigma}_\N^{1/2} \]
and for any operator $X$ that
\[ \E_{\N, \sigma}(X) = \E_\N(\tilde{\sigma}_\N^{1/2} X \tilde{\sigma}_\N^{1/2}) \pl. \]
Via its block diagonal form, we see that $\E_{\N, \sigma *}$ is idempotent, as is its adjoint. Finally, $\E_{\N} = \E_{\N, \tau *} = \E_{\N, \tau}$, where $\tau$ is the normalized identity or trace. It is simple to observe using the block diagonal forms that if $\E \E_j = \E$ for unweighted conditional expectations $\E_j$ and $\E$, then $\E_{\sigma *} \E_{j, \sigma *} = \E_{\sigma *}$ for the $\sigma$-weighted versions.

\begin{lemma}[Chain Rule] \label{lem:chainexp}
Let $\omega$ be a density and $\E_{\sigma}$ be a conditional expectation such that $\E_{\sigma *}(\omega) = \omega$. Then for any density $\rho$,
\begin{equation*}
D(\rho \| \omega) = D(\rho \| \E_{\sigma *}(\rho)) + D(\E_{\sigma *}(\rho) \| \omega)
\end{equation*}
\end{lemma}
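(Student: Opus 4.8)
The plan is to reduce the identity to a single trace-orthogonality relation and then verify that relation from the defining properties of the weighted conditional expectation. Writing $\mu := \E_{\sigma *}(\rho)$ and expanding each term through $D(\alpha \| \beta) = \tr(\alpha \log \alpha) - \tr(\alpha \log \beta)$, the common $\tr(\rho \log \rho)$ contributions cancel, and the claimed chain rule collapses to the single statement $\tr\big((\mu - \rho)(\log \mu - \log \omega)\big) = 0$. So the entire proof amounts to showing that the ``error'' $\mu - \rho$ is trace-orthogonal to $\log \mu - \log \omega$.

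First I would isolate the hypothesis actually used, namely that $\omega$ is a fixed point of $\E_{\sigma *}$ (i.e. $\E_{\sigma *}(\omega) = \omega$), so that both $\mu$ and $\omega$ lie in the range of $\E_{\sigma *}$. Using the block-diagonal normal form in \eqref{eq:weightblock}, I would write $\mu = \oplus_i (\rho^{A_i} \otimes \sigma_i)$ and $\omega = \oplus_i (\omega^{A_i} \otimes \sigma_i)$ with a common weight $\sigma_i$ on each $E_i$ factor. Taking logarithms of these commuting tensor forms, the $\sigma_i$ pieces appear additively and cancel between $\mu$ and $\omega$, leaving $\log \mu - \log \omega = \oplus_i \big((\log \rho^{A_i} - \log \omega^{A_i}) \otimes \id^{E_i}\big)$, an operator acting only on the $A_i$ factors.

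Next I would invoke adjointness of the weighted conditional expectation under the trace. Since $Y := \log \mu - \log \omega = \oplus_i (Z_i \otimes \id^{E_i})$ lies in $\N$ (it is block diagonal and trivial on every $E_i$), it is fixed by $\E_\sigma$, so $\tr(\mu\, Y) = \tr(\E_{\sigma *}(\rho)\, Y) = \tr(\rho\, \E_\sigma(Y)) = \tr(\rho\, Y)$. Concretely, block by block $\tr\big((\rho^{A_i} \otimes \sigma_i)(Z_i \otimes \id)\big) = \tr(\rho^{A_i} Z_i) = \tr\big(\rho\,(Z_i \otimes \id)\big)$, using $\tr_{E_i}(P_i \rho P_i) = \rho^{A_i}$ and $\tr(\sigma_i) = 1$; summing over blocks gives $\tr(\mu\,Y) = \tr(\rho\,Y)$, which is exactly the required orthogonality. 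In the doubly stochastic special case this is immediate: both $\mu,\omega \in \N$, hence $\log\mu - \log\omega \in \N$, and self-adjointness of $\E_\N$ yields $\tr((\mu-\rho)Y)=0$ for every $Y \in \N$.

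The main obstacle is the non-tracial bookkeeping around the logarithm. In the tracial case one simply uses that $\log$ of a range element stays in $\N$; with a nontrivial weight $\sigma$ one must confirm that the weights cancel in $\log\mu - \log\omega$ so that this difference lands in the fixed-point space of $\E_\sigma$ rather than merely in the range of $\E_{\sigma *}$. This is precisely what the commuting $\oplus_i(\cdot \otimes \sigma_i)$ structure of \eqref{eq:weightblock} delivers, and it is the one place where that normal form is essential; some care is also needed with support conditions so that all three relative entropies are simultaneously finite.
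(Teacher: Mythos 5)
Your proof is correct and follows essentially the same route as the paper's: both expand the relative entropies into traces, invoke the trace-adjointness of the pair $\E_\sigma$, $\E_{\sigma *}$, and use the block-diagonal normal form \eqref{eq:weightblock} to cancel the $\log \sigma_i$ contributions (and you correctly read the hypothesis as $\E_{\sigma *}(\omega) = \omega$, which is what the paper's statement intends). The only difference is bookkeeping: you subtract $\log \omega$ from $\log \E_{\sigma *}(\rho)$ first, so the weight terms cancel and the identity reduces to the single orthogonality relation $\tr\big((\E_{\sigma *}(\rho)-\rho)(\log \E_{\sigma *}(\rho) - \log\omega)\big)=0$ with $\log \E_{\sigma *}(\rho) - \log\omega \in \N$, whereas the paper applies $\E_\sigma$ to each logarithm separately and cancels the resulting $\rho$-independent defect operators $\eta_\rho = \eta_\omega$ at the end.
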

This equality is well-known. Nonetheless, we include a simple proof for finite dimensions. Here the logarithm as denoted ``$\log$" is with respect to an arbitrary base.
\begin{proof}
Let $\E$ denote the doubly stochastic conditional expectation such that \[\E_\sigma(X) = \E(\tilde{\sigma}^{1/2} X \tilde{\sigma}^{1/2})\] for an operator $X$, where $\tilde{\sigma}$ is the unnormalized density as in Equation \eqref{eq:unnorm}. Then
\begin{equation} \label{eq:wchain1}
\begin{split}
D(\rho \| \E_{\sigma *}(\rho)) + D(\E_{\sigma *}(\rho) \| \omega)
& = \tr(\rho \log \rho - \rho \log (\E_{\sigma *}(\rho)) + \E_{\sigma *}(\rho) \log (\E_{\sigma *}(\rho)) - \E_{\sigma *}(\rho) \log \omega ) \\
& = \tr(\rho \log \rho - \rho \log (\E_{\sigma *}(\rho)) + \rho \E_{\sigma}(\log (\E_{\sigma *}(\rho))) - \rho \E_{\sigma}( \log \omega)) \pl.
\end{split}
\end{equation}
Examining the logarithm of the block diagonal form in Equation \eqref{eq:weightblock},
\[ \log \E_{\sigma *}(\rho) = \oplus_i (\log(\rho_i) \otimes \id^{B_i} + \id^{A_i} \otimes \log(\sigma_i)) \pl, \]
and
\[ \E_{\sigma} (\log \E_{\sigma *}(\rho)) = \oplus_i (\log(\rho_i) \otimes \id^{B_i}
	+ \id^{A_i} \otimes \id^{B_i} \tr(\tilde{\sigma}_i^{1/2} \log(\sigma_i) \tilde{\sigma}_i^{1/2})) \pl. \]
Let
\[ \eta_\rho := \E_{\sigma} (\log \E_{\sigma *}(\rho)) - \log \E_{\sigma *}(\rho)
	= \oplus_i (\id^{A_i} \otimes \id^{B_i} \tr(\tilde{\sigma}_i^{1/2} \log(\sigma_i)\tilde{\sigma}_i^{1/2})
	- \id^{A_i} \otimes \log(\sigma_i)) \pl. \]
Since $\eta_\rho$ has no dependence on $\rho$, we define $\eta_\omega = \eta_\rho$ analogously. Comparing to Equation \eqref{eq:wchain1},
\[ D(\rho \| \omega) =  D(\rho \| \E_{\sigma *}(\rho)) + D(\E_{\sigma *}(\rho) \| \omega)
	+ \tr(\rho(\eta_\omega - \eta_\rho)) \pl, \]
and $\eta_\omega - \eta_\rho = 0$.
\end{proof}

\begin{lemma} \label{lem:subadd}
Let $\E_\sigma$ be a conditional expectation and $\rho, \omega$ be densities. Then
\begin{equation*}
D(\rho \| \E_{\sigma *}(\rho)) + D(\rho \| \omega) \geq D(\rho \| \E_{\sigma *}(\omega))
\end{equation*}
\end{lemma}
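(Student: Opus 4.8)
The plan is to reduce the inequality to the chain rule of Lemma \ref{lem:chainexp} combined with the data processing inequality, applied in the right order. The crucial observation is that $\E_{\sigma *}$ is idempotent, so $\E_{\sigma *}(\omega)$ is a fixed point of $\E_{\sigma *}$; this is precisely the situation in which the Pythagorean-type chain rule holds, and it lets us rewrite the quantity on the right-hand side of the claim, $D(\rho \| \E_{\sigma *}(\omega))$, as a sum whose two terms we can then bound separately.

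First I would apply Lemma \ref{lem:chainexp} with $\E_{\sigma *}(\omega)$ playing the role of the reference density. Since $\E_{\sigma *}(\E_{\sigma *}(\omega)) = \E_{\sigma *}(\omega)$ by idempotence, the chain rule yields
\[ D(\rho \| \E_{\sigma *}(\omega)) = D(\rho \| \E_{\sigma *}(\rho)) + D(\E_{\sigma *}(\rho) \| \E_{\sigma *}(\omega)) \pl. \]
Next, because the predual $\E_{\sigma *}$ of a conditional expectation is a trace-preserving, completely positive map (a quantum channel), the data processing inequality gives
\[ D(\E_{\sigma *}(\rho) \| \E_{\sigma *}(\omega)) \leq D(\rho \| \omega) \pl. \]
Substituting this bound for the second term of the chain-rule identity produces
\[ D(\rho \| \E_{\sigma *}(\omega)) \leq D(\rho \| \E_{\sigma *}(\rho)) + D(\rho \| \omega) \pl, \]
which is exactly the asserted inequality.

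The only point that requires care is confirming that the chain rule is legitimately applicable to the reference state $\E_{\sigma *}(\omega)$: Lemma \ref{lem:chainexp} presumes the reference density is fixed by $\E_{\sigma *}$, and idempotence of $\E_{\sigma *}$ guarantees this automatically, so no extra hypothesis on $\omega$ is needed. Beyond this, the argument is a pure rearrangement, so I expect no genuine computational obstacle; the entire content lies in recognizing that $D(\rho \| \E_{\sigma *}(\omega))$ should be split via the chain rule and that the resulting cross term $D(\E_{\sigma *}(\rho) \| \E_{\sigma *}(\omega))$ is controlled by $D(\rho \| \omega)$ through contractivity of $\E_{\sigma *}$ under relative entropy.
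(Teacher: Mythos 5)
Your proof is correct and is essentially the paper's own argument: both rest on exactly the same two ingredients, namely data processing under $\E_{\sigma *}$ and the chain rule of Lemma \ref{lem:chainexp} applied to the reference state $\E_{\sigma *}(\omega)$, which is legitimate by idempotence. The only difference is the order of presentation (you split $D(\rho \| \E_{\sigma *}(\omega))$ first and then contract, while the paper contracts first and then recombines), which is immaterial.
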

\begin{proof}
By data processing on the 2nd term,
\begin{equation*}
D(\rho \| \E_{\sigma *}(\rho)) + D(\rho \| \omega)
	\geq D(\rho \| \E_{\sigma *}(\rho)) + D(\E_{\sigma *}(\rho) \| \E_{\sigma *}(\omega))  \pl.
\end{equation*}
By Lemma \ref{lem:chainexp} and the idempotence of conditional expectations, we obtain the Lemma.
\end{proof}

In the rest of this Section, we may sometimes drop the explicit subscript of the weighting state, e.g. writing $\E_{j *}$ for a weighted conditional expectation. This is to reduce the verbosity of notation when considering sets of potentially weighted conditional expectations.
\begin{lemma} \label{lem:asawithf}
Let $\{(\N_j, \E_{j *} ) : j = 1...J \in \NN, \N_j \subseteq \M\}$ be a set of von Neumann algebras and associated (predual) conditional expectations within von Neumann algebra $\M$ weighted respectively by densities $(\sigma_j)$. Let $\E$ be a channel such that $\E \E_j = \E_j \E = \E$ for each $\E_j$.

Let $S = \cup_{m \in \NN} \{1...J\}^{\otimes m}$ be the set of sequences of indices. For any $s \in S$, let $\E^s$ denote the composition of conditional expectations $\E_{j_1 *} ... \E_{j_m *}$ for $s = (j_1, ..., j_m)$. Let $\mu : S  \rightarrow [0,1]$ be a probability measure on $S$ and $k_{j,s}$ upper bound the number of times $\E_{j *}$ appears in each sequence $s$. If
\[ \sum_{s \in S} \mu(s) \E^s(\rho) = (1-\epsilon) \E(\rho) + \epsilon \omega \]
for densities $\rho$ and $\omega$, then
\[ \sum_{s \in S} \mu(s) \sum_{j=1}^J k_{s,j} D(\rho \| \E_j(\rho)) \geq D(\rho \| (1-\epsilon) \E(\rho) + \epsilon \omega) \pl. \]
\end{lemma}
\begin{proof}
For each $s \in S$, we apply Lemma \ref{lem:subadd} iteratively, finding that
\[ \sum_{j=1}^J k_{s,j} D(\rho \| \E_{j *}(\rho)) \geq D(\rho \| \E^s(\rho)) \pl. \]
By convexity, we move the weighted average over $s$ inside the relative entropy, completing the Lemma.
\end{proof}

\begin{theorem}[Restatement of Theorem \ref{asa2}]
Let $\{(\N_j, \E_{j *} ) : j = 1...J \in \NN, \N_j \subseteq \M\}$ be a set of $J \in \NN$ von Neumann algebras and associated (predual) conditional expectations within von Neumann algebra $\M$ and weighted respectively by densities $(\sigma_j)$. Let $\E$ be a channel such that $\E \E_j = \E_j \E = \E$ for each $\E_j$.

Let $S = \cup_{m \in \NN} \{1...J\}^{\otimes m}$ be the set of finite sequences of indices. For any $s \in S$, let $\E^s$ denote the composition $\E_{j_1 *} ... \E_{j_m *}$ for $s = (j_1, ..., j_m)$. Let $\mu : S  \rightarrow [0,1]$ be a probability measure on $S$ and $k_{j,s}$ upper bound the number of times $\E_{j *}$ appears in each sequence $s$. If
\begin{equation*}
 (1-\zeta) \E \leq_{cp} \sum_{s \in S} \mu(s) \E^s \leq_{cp} (1 + \zeta(c-1)) \E  \pl,
\end{equation*}
then $\E$ is a projection, and for $\beta_{c,\zeta}$ given in Corollary \ref{cor:simpleG} and all input densities $\rho$ (including those with arbitrary extensions to auxiliary systems),
\[ \sum_{s \in S} \mu(s) \sum_j k_{s,j} D(\rho \| \E_{j *}(\rho)) \geq \beta_{c, \zeta} D(\rho \| \E(\rho)) \pl. \]
\end{theorem}
\begin{proof}[Proof of Theorem \ref{asa2}]
Note that Equation \eqref{eq:kzetacond} in the Theorem implies that
\begin{equation} \label{eq:comboform}
\sum_{s \in S} \mu(s) \E^s(\rho) = (1-\zeta) \E(\rho) + \zeta \Phi(\rho)
\end{equation}
for some channel $\Phi$ and constant $c$ such that $\Phi(\rho) \leq c \E(\rho)$.

We see that $\E$ is a projection taking $\lim_{m \rightarrow \infty} (\sum_{s \in S} \mu(s) \E^s)^m$. Via the assumption of Equation \eqref{eq:kzetacond} and that $\E_j \E = \E \E_j = \E$, this limit is equal to $\E$ and clearly idempotent.

We have by the assumptions of the Theorem and Lemma \ref{lem:asawithf} that
\[ \sum_{s \in S} \mu(s) \sum_{j=1}^J k_{s,j} D(\rho \| \E_j(\rho)) \geq D(\rho \| (1-\zeta) \E(\rho) + \zeta \Phi(\rho)) \pl. \]
We then note that since
\[ ((1 - \zeta) \E + \zeta \Phi)  \E =
	(1 - \zeta) \E + \zeta \Phi \E = \E \pl, \]
it must hold that $\E \Phi = \E$, and similarly, $\Phi \E = \E$. We then use Corollary \ref{cor:simpleG}.
\end{proof}
One may substitute Equation \eqref{eq:comboform} for Equation \eqref{eq:kzetacond} for particular states and may find that in some cases, doing so yields better constants than would the full cp-order inequality. The Theorem is nonetheless stated in cp-order form for its easier interpretability and connection to other results in this paper.

\begin{cor} \label{index}
Let $\M \subseteq \BB(A)$ denote a von Neumann algebra containing $\rho$ on quantum system $A$, $\N_j$ denote the subalgebra projected to by each doubly stochastic conditional expectation $\E_j$, and $\N$ denote the intersection algebra projected to by $\E$. We may extend $\rho$ to a bipartite density $\rho^{AB}$, where $\E_j$ act on $A$ as does $\E$. Let $\tilde{\E}_j = \E_j \otimes \id^B$, and $\tilde{\E} = \E \otimes \id^B$. Let $\mu(s), (k_{s,j}), \zeta$ be as in Theorem \ref{asa2}. If $|B| \leq |A|$ (including if $B$ is a trivial system of dimension 1), and
\[ \sum_{s \in S} \mu(s) \E^s(\rho) = (1-\zeta) \E + \zeta \Phi \pl, \]
for some channel $\Phi$, then for any $\epsilon \leq \zeta$,
\begin{equation} \label{eq:indexbound}
\frac{\lceil \log_{\zeta} \epsilon \rceil}{\beta_{c, \epsilon}} \sum_{s \in S} \mu(s) \sum_j k_{s,j} D(\rho \| \tilde{\E}_j(\rho)) \geq
	D(\rho \| \tilde{\E}(\rho))
\end{equation}
with $c = C(\M \otimes \B : \N \otimes \B)$, and $\B$ being the algebra of bounded operators on $B$. If $|B| \geq |A|$, then we may set $c = C_{cb}(\M : \N)$ for a complete, strong quasi-factorization.

For any $j \in 1...J$, we may replace $C(\M : \N)$ or $C_{cb}(\M : \N)$ respectively by $C(\N_j : \N)$ or $C_{cb}(\N_j : \N)$ at the cost of taking $k_{s,j} \rightarrow k_{s,j} + 1$.

For weighted conditional expectations when $\M = \BB(A)$, Equation \eqref{eq:indexbound} holds respectively with $c = \lambda_{min}(\sigma) C(\M \otimes \B : \N \otimes \B) / |A|$ or $\lambda_{min}(\sigma) C_{cb}(\M : \N) /|A|$, where $\lambda_{min}(\sigma)$ is the minimal eigenvalue.
\end{cor}
\begin{proof}
First, $\E$ projects to a fixed point subspace of $\{\E_j\}$ for all $j$, so for any sequence $s$ of $k$-many $j$ indices, $\E$ also projects to a fixed point subspace of $\prod_{j = 1...k} \E_j$. For all densities $\rho$, it must then hold that $\E \Phi(\rho) = \Phi \E(\rho) = \E(\rho)$, so $\Phi$ also has $\N$ as a fixed point subalgebra.

Note that $((1 - \zeta) \E + \zeta \Phi)^\kappa(\rho) = (1 - \zeta^\kappa) \E(\rho) + \zeta^\kappa \Phi^\kappa(\rho)$. As in  Lemma \ref{lem:asawithf} and for any $\kappa \in \NN$, we may apply Lemma \ref{lem:subadd} and convexity of relative entropy iteratively to obtain that
\begin{equation} \label{eq:cork}
\begin{split}
\kappa \sum_{s \in S} \mu(s) \sum_j k_{s,j} D(\rho \| \E_j(\rho))
	& \geq D\Big (\rho \Big \| \Big ( \sum_{s \in S} \mu(s) \E^s(\rho) \Big )^{\kappa} (\rho) \Big )
	\\ & \geq D(\rho \| (1 - \zeta^\kappa) \E(\rho) + \zeta^\kappa \Phi^\kappa(\rho))
\end{split}
\end{equation}
To achieve a desired $\epsilon$, we may take $\kappa = \lceil \log_{\zeta} \epsilon \rceil$. We then apply Corollary \ref{cor:simpleG} with $\zeta$ as above and $c$ as below.

For convenience of notation, we denote $C := C(\M : \N)$, and $C_{cb} := C_{cb}(\M : \N)$. It follows from the definitions of $C$ and $C_{cb}$ that
\begin{enumerate}
	\item $\Phi(\rho) \leq C \E(\Phi(\rho)) = C \E(\rho)$ for all $\rho$.
	\item $\Phi \otimes \id^B(\rho) \leq C_{cb} \E \otimes \id^B(\Phi \otimes \id^B(\rho)) = C_{cb} \E \otimes \id^B(\rho)$ for $\rho^{AB}$.
\end{enumerate}

Recall that via the chain rule (Lemma \ref{lem:chainexp}),
\[ D(\rho \| \E(\rho)) = D(\rho \| \E_{j}(\rho)) + D(\E_{j}(\rho) \| \E(\rho)) \pl. \]
A single term of $D(\rho \| \E_j(\rho))$ cancels the first term on the right hand side of the above, so for any $\alpha \geq 1$,
\begin{equation*}
\begin{split}
& \alpha \sum_{s \in S} \mu(s) \Big ( \sum_j  (k_{s,j} + 1) D(\rho \| \E_{j}(\rho)) + \sum_{l \neq j} k_{s,l} D(\rho \| \E_{l}(\rho)) \Big )
	 - D(\rho \| \E(\rho))
	\\ & \geq \alpha \sum_{s \in S} \mu(s) \sum_l k_{s,l} D(\rho \| \E_{l}(\rho)) - D(\E_{j}(\rho) \| \E(\rho)) \pl.
\end{split}
\end{equation*}
By Equation \eqref{eq:cork} and with an additional round of the data processing inequality,
\begin{equation*}
\begin{split}
& \alpha \sum_{s \in S} \mu(s) \sum_l k_{s,l} D(\rho \| \E_{l}(\rho)) - D(\E_{j}(\rho) \| \E(\rho)) \\
	& \pla \geq \alpha D(\rho \| (1 - \zeta^\kappa) \E(\rho) + \zeta^\kappa \Phi^\kappa(\rho)) - D(\E_{j}(\rho) \| \E(\rho)) \\
	& \pla \geq \alpha D(\E_j(\rho) \| (1 - \zeta^\kappa) \E(\rho) + \zeta^\kappa \E_j(\Phi^\kappa(\rho))) - D(\E_{j}(\rho) \| \E(\rho)) \pl.
\end{split}
\end{equation*}
Since $\E_j \Phi^\kappa \E = \E \E_j \Phi^\kappa = \E$, we may use \ref{cor:simpleG} as before. Hence we may replace the index of $\N$ in $\M$ by that of $\N$ in $\N_j$ at the cost of taking $k_{s,j} \rightarrow k_{s,j} + 1$ for each $s$ with non-zero measure.

For weighted conditional expectations, the block diagonal form in Equation \eqref{eq:weightblock} still yields an upper bound on $c$ such that $\rho \leq c \E_{\sigma *}(\rho)$ for all $\rho$. Whenever $\Phi \E_{\sigma *} = \E_{\sigma *} \Phi = \E_{\sigma *}$, the same value of $c$ suffices for $\Phi(\rho)$.
\end{proof}

\begin{proof}[Proof of Remark \ref{tightness}]
Proposition \ref{near} shows that for a channel $\Phi$ and conditional expectation $\E_{\sigma *}$ such that $\Phi \E_{\sigma *} = \E_{\sigma *} \Phi = \E_{\sigma *}$ with $\Phi(\rho)$ sufficiently close to $\E_{\sigma *}(\rho)$ in operator norm distance, $\Phi$ is a non-trivial convex combination of $\E_{\sigma *}$ with another channel $\tilde{\Phi}$ such that $\tilde{\Phi} \E_{\sigma *} = \E_{\sigma *} \tilde{\Phi} = \E_{\sigma *}$. Via \citen[Theorem 6.7]{junge_noncommutative_2007}, an iterated product of doubly-stochastic conditional expectations $(\E{1} ..., \E_{J})^n$ converges to the intersection algebra's conditional expectation as $n \rightarrow \infty$. Together, these suffice to show that a quasi-factorization inequality with some constant holds.

Corollary \ref{cor:simpleG} shows a correction factor scaling as $1 + O(c \zeta)$ for small $\zeta$. Hence as $\zeta \rightarrow 0$ with constant $c$, the correction factor approaches 1. This shows asymptotic tightness.
\end{proof}
The caveat to Remark \ref{tightness} is that as $\zeta \rightarrow 0$, $\E$ must not change. For example and as discussed in Section \ref{sec:ucr}, if we take two incompatible measurement bases at arbitrarily small angle to each other, they will converge to the same basis, though for any finite angle $\E$ will be a trace of the subsystem.

\section{Applications} \label{sec:applications}
Well-known uses of quasi-factorization, powered by Proposition \ref{prop:mlsimerge} and similar, include strengthening quantum uncertainty principles and estimating decay rates to thermal equilibrium in many-body systems \cite{bardet_approximate_2022, capel_modified_2020}. MLSI and similar estimates also bound decoherence times and quantum capacities \cite{bardet_group_2021, gao_relative_2020}. Taking this idea a step further, quasi-factorization gives strong bounds on certain combinations of quantum resources, such as the relative entropy of coherence in incompatible bases \cite{winter_operational_2016} or the asymmetry with respect to overlapping symmetry groups \cite{vaccaro_tradeoff_2008, gour_measuring_2009, marvian_extending_2014}. Similarly, combining MLSI is a powerful way to estimate decoherence or resource decay for systems undergoing several noise processes simultaneously. In Subsection \ref{sec:ucr} we show how the improved quasi-factorization yields strong, asymptotically tight, uncertainty-like bounds for mixtures.
\subsection{Uncertainty Relations} \label{sec:ucr}
Let $\S$ and $\T$ correspond to bases of subsystem $A$ within $A \otimes B$ such that $|A| = d$, not necessarily mutually unbiased. Let $\{\ket{i_S} : i = 1...d \}$ and $\{\ket{i_T} : i = 1...d\}$ be the states of these bases. Corresponding to each basis is a conditional expectation that is the pinching map for that basis. In particular,
\[ \E_\S(\rho) = \sum_{i=1}^d \braket{i_S | \rho | i_S} \ketbra{i_S} \otimes \rho^B_{i, \S}, \text{ and }
	\E_\T(\rho) = \sum_{i=1}^d \braket{ i_T |  \rho | i_T} \ketbra{i_T} \otimes \rho^B_{i, \T} \]
for any input density $\rho$ and suitable $B$-system densities $(\rho^B_{i, \S})$ and $(\rho^B_{i, \T})$. These forms easily extend to bipartite $\rho^{AB}$ with $|A| = d$ and conditional expectations acting on $A$. Hence
\[ (\E_\S \E_\T \otimes \id^B)(\rho) = \sum_{i=1}^d \ketbra{i_S} \Big ( \sum_{j=1}^d \braket{j_T | \rho^A | j_T} |\braket{i_S | j_T}|^2 \Big ) \otimes \rho^B_{i} \pl, \]
If $\S$ and $\T$ correspond to mutually unbiased bases, then $\E_\S \E_\T(\rho) = \id/d \otimes \rho^B$. When the bases are not mutually unbiased, $\E_\S \E_\T \neq \E$. In full generality, $\E_\S$ and $\E_\T$ might leave mutual subspaces invariant. We will however assume that $0 < |\braket{i_S|j_T}| < 1$ for all $i,j \in 1...d$, excluding cases that leave subspaces of $A$ invariant. Let $\xi = \min_{i,j} |\braket{i_S | j_T}|^2$, which by our assumptions is larger than zero. Note that $\xi \in [0, 1/d]$. Then
\[ (\E_\S \E_\T \otimes \id^B)(\rho) \geq \xi \sum_{i=1}^d \ketbra{i_S} \Big ( \sum_{j=1}^d \braket{j_T | \rho^A | j_T} \Big ) \otimes \rho^B_{i} = d \xi  \frac{\id}{d} \otimes \rho^B \pl. \]
Repeated applications increasingly replace a density by one that is completely mixed on $A$. Hence $\E_\S \E_\T \geq_{cp} (d \xi) \E$, and $\E$ is a completely depolarizing channel on the $A$ subsystem. Since $|\braket{i_S | j_T}|^2 \leq 1$, we obtain using the Choi matrix that $\E_\S \E_\T \leq_{cp} d \E$. For any input $\rho$, if $|B|=1$, then $\E_\S \E_\T(\rho) \leq d \E(\rho)$. These observations will allow us to derive entropic uncertainty-like bounds from quasi-factorization.

First, we recall some known bounds for comparison. The conventional Maassen-Uffink uncertainty relation \cite{maassen_generalized_1988} states that
\begin{equation} \label{eq:maucr}
D(\rho \| \E_\S(\rho)) + D(\rho \| \E_\T(\rho)) \geq  D(\rho \| \id / d) - \log (d \max_{i,j} |\braket{i_S | j_T}|^2) \pl,
\end{equation}
which may be extended by an auxiliary system \cite{berta_uncertainty_2010}. When $\rho \approx \id/d$ and/or when there is high overlap between bases, the right hand side of the conventional uncertainty relation becomes negative, and the bound becomes trivial. In contrast, $\alpha$-(C)SQF still gives a positive, non-trivial bound on the sum of basis entropies. We also recall the result of \citen[Corollary 2]{bardet_approximate_2022},
\begin{equation} \label{eq:bardet}
D(\rho \| \E_\S(\rho)) + D(\rho \| \E_\T(\rho)) \geq (1 - d \max_{i,j} \big |\pl |\braket{i_S|j_T}|^2 - 1/d \pl \big | )  D(\rho \| \id / d) \pl.
\end{equation}
This bound remains non-trivial as $\rho \rightarrow \id/d$ but fails if $d \max_{i,j} \big |\pl |\braket{i_S|j_T}|^2 - 1/d \pl \big |$ is too large. The quasi-factorization from \citen{gao_complete_2021} also yields an uncertainty-like bound with extendibility by an auxiliary system, in this case stable under extension by auxiliary systems.

Now we apply quasi-factorization to derive bounds that are tighter in some circumstances. Letting $\zeta = 1 - d \xi$ in Corollary \ref{index},
\begin{equation} \label{eq:asaucr}
D(\rho \| \tilde{\E}_\S(\rho)) + D(\rho \| \tilde{\E}_\T(\rho)) \geq \frac{\beta_{d, \epsilon}}{\lceil \log_{\zeta} \epsilon \rceil} D(\rho \| \id/d \otimes \rho^B)
\end{equation}
for any $\epsilon < \zeta$, where $\tilde{\E} = \E \otimes \id^B$, and $\tilde{\E}_\S, \tilde{\E}_\T$ are defined analogously. Expanding the relative entropies in terms of von Neumann entropies, this is equivalent to
\[ H(\S \otimes B)_\rho + H(\T \otimes B)_\rho \geq 2 H(\rho) + \frac{\beta_{d, \epsilon}}{\lceil \log_{\zeta} \epsilon \rceil} (
\ln d + H(B)_\rho - H(\rho)) \pl, \]
where $H(\S \otimes B)_\rho, H(\T \otimes B)_\rho$, and $H(B)_\rho$ are defined respectively as the entropies of the outputs of $\E_\S$, $\E_\T$, and $\E$ on $\rho$. When $|B| = 1$, we may use Corollary \ref{cor:approx} and Theorem \ref{asa2} to replace $\beta_{d, \epsilon}$ by
\[ \tilde{\beta} = 1 - \frac{32(d+1) \zeta}{15 + (7 d - 24) \zeta} \]
as long as $\zeta \leq 15/(25 d + 46)$, and $|B| = 1$. When $\zeta$ is not this small, we can still obtain via the final Equation from Corollary \ref{cor:approx} that
\begin{equation} \label{eq:asaucr1}
D(\rho \| \E_\S(\rho)) + D(\rho \| \E_\T(\rho)) \geq \frac{D(\rho \| \id/d)}{2 \lceil \log_{\zeta} (15/(57 d + 88))\rceil} \pl.
\end{equation}
As the two bases approach mutual unbias, $\zeta$ approaches 0, and both forms of the inequality approach the entropic uncertainty relation implied by strong subadditivity (see Petz's version as theorem \ref{thm:petzssa}).

Finally, we consider two bases such that $|\braket{i_S|j_T}|^2 = \delta_{i,j}(1 - \theta) + \theta/d$ for $\theta \in [0, 1]$. This situation may arise, for instance, when taking a partial rotation into a Fourier transform of the original basis. As $\theta \rightarrow 1$, the bases approach mutual unbias, and CSQF approaches the bound given by Petz's subalgebra SSA as in Equation \eqref{eq:relcondexpsubadd}. As $\theta \rightarrow 0$, $\xi$ approaches 0. In this latter regime, $1 - d \xi$ is close to 1, so to achieve a sufficiently small $\epsilon$ for the inequality to be non-trivial, $\log_{1 - d \xi} \epsilon$ must be large. Unlike equation \eqref{eq:bardet}, Theorem \ref{asa2} does not become completely trivial until the bases become the same basis. When $\E_\S = \E_\T$, the intersection conditional expectation ceases to be $\E$, instead becoming $\E_\S = \E_\T$. As noted in Corollary \ref{tightness}, the bases approach a different intersection algebra. SSA for two of the same bases reduces to the trivial statement that $2 D(\rho \| \E_\S(\rho)) \geq D(\rho \| \E_\S(\rho))$. Meanwhile, when $\xi << 1/d$ but is still finite, quasi-factorization still compares to the same intersection algebra. If we take for instance the pure state $\ketbra{0_S}$ as a test density, we see that
\[ D(\ketbra{0_S} \| \E_\S(\ketbra{0_S})) + D(\ketbra{0_S} \| \E_\T(\ketbra{0_S})) = D(\ketbra{0_S} \| \E_\T(\ketbra{0_S})) \rightarrow 0\]
as $\theta \rightarrow 0$. In contrast, $D(\ketbra{0_S} \| \E(\ketbra{0_S})) = D(\ketbra{0_S} \| \id / d) = \log d$ for arbitrarily small but finite $\theta$. Hence we should not expect to find $\alpha$-(C)SQF without $\alpha \rightarrow \infty$ as $\theta \rightarrow 0$.

\newcommand{\ad}{\text{ad}}
\subsection{Finite Groups and Transference} \label{sec:groups}
A common form of conditional expectation is
\begin{equation} \label{eq:groupform}
\E_{\G *} = \frac{1}{|\G|} \sum_{u \in \pi(\G)} \ad_u \pl,
\end{equation}
where $\G$ is a finite group, $\pi(\G)$ is a unitary representation in some Hilbert space, and $\ad_u(X) := u X u^\dagger$ for any unitary $u$ and matrix $X$. Since this conditional expectation is self-adjoint with respect to the trace, we need not distinguish between $\E_{\G}$ and its adjoint. Groups may induce collective channels as considered in \citen{bardet_group_2021}. We build on a simplified version of some ideas from that paper and \citen{gao_fisher_2020}.

Let the map $\Phi_{\vec{p}}$ be given by
\begin{equation} \label{eq:convex}
\Phi_{\vec{p}}(\rho) = \sum_{g \in \G} p_g u(g) \rho u(g)^\dagger
\end{equation}
for any probability vector $\vec{p} \in l_1(\G)$, where $u(g)$ is an element of a particular unitary representation of group $\G$. We may think of $\Phi_{\vec{p}}$ as a quantum channel parameterized by $\vec{p}$. We may also think of $\Phi$ as a map that takes a probability vector $\vec{p}$ as input and outputs a quantum channel that applies a correspondingly weighted convex combination of unitary conjugations. There is an analogous classical channel $\Psi_{\vec{p}} : l_1(\G) \rightarrow l_1(\G)$ given in the left regular representation by
\begin{equation} \label{eq:transfer}
\Psi_{\vec{p}}(\vec{q}) = \sum_{g \in \G} p_g g (\vec{q}) = \sum_{g \in \G} p_g \sum_{h \in \G} q_{g^{-1} h} \mathbf{h} \pl,
\end{equation}
and $\mathbf{h}(f) = \delta_{h,f}$ for any $f \in \G$. In this formulation, $g(\cdot)$ denotes the action of $\G$ promoted to probability distributions in $l_1(\G)$, and $p_g$ denotes the probability of group element $g$ as given by $\vec{p}$. Here $l_1(\G)$ is the 1-normed vector space on group elements.

A key insight for quasi-factorization of finite groups (and graphs as considered in Subsection \ref{sec:graphs}) is that of transference. Let $u(g)$ be a unitary representation of a finite group group $\G$ on $|\G|$-dimensional Hilbert space with basis $\{\ket{g} : g \in \G\}$ given by $u_g \ket{h} = \ket{g h}$. This is a Hilbert space version of the left regular representation of $\G$ on itself. For any pair of probability distributions $\vec{p}, \vec{q} \in l_1(\G)$ and any input density $\rho$,
\begin{equation} \label{eq:transference}
(\Phi_{\vec{p}} \circ \Phi_{\vec{q}}) (\rho) = \sum_{g \in \G} \sum_{h \in \G} p_g q_h u(g) u(h) \rho u(h)^\dagger u(g)^\dagger =  \Phi_{\Psi_{\vec{p}}(\vec{q})}(\rho) \pl.
\end{equation}
Given a sequence of composed quantum channels $\Phi_{\vec{p}^{(1)}} \circ ... \circ \Phi_{\vec{p}^{(k)}}$ for some $k \in \NN$, we can calculate the final unitary weights via $\Psi_{\vec{p}^{(1)}} \circ ... \circ \Psi_{\vec{p}^{(k)}}$. In many cases, the latter will be easier to handle, as it is a composition of classical rather than quantum channels. Furthermore, there are many circumstances in which mixing processes on groups or graphs have strong, known results that were previously not known to extend to quantum analogs. The principle of transference was used in \citen{gao_fisher_2020, bardet_group_2021}. Following the relative entropy inequalities established in this paper, we are able to extend the technique to yield tensor-stable relative entropy comparisons.

Let $\vec{1} / |\G|$ denote the classical probability vector weighting each finite group element equally. If
\begin{equation} \label{eq:boundtotrans}
\Psi_{\vec{p}} = (1 - \zeta) \vec{1} / |\G| + \zeta \Psi_{\vec{q}}
\end{equation}
for some probability distribution $\vec{q}$ and $\zeta \in (0,1)$, then
\[ \Phi_{\vec{p}} = (1 - \zeta) \E_\G + \zeta \Phi_{\vec{q}} \pl. \]
The main idea of this Section is to transfer bounds of the form in Equation \eqref{eq:boundtotrans} to those on the corresponding quantum channel $\Phi_{\vec{p}}$. In doing so, we derive Loewner order bounds on quantum representations from classical vector order bounds that are often simpler to calculate. From these order inequalities, the techniques of this paper yield relative entropy inequalities. Also, the convex combination form of equation \eqref{eq:convex} allows us to upper bound the constant $c$ in Theorem \ref{revconv} by $|\G|$.

A nice case of quasi-factorization with transference involves subgroups $\G_1, ..., \G_J \subseteq \G$. If $\cup_{j=1}^J \G_j$ contains generators of the entire group, then we may conclude that at least some chain of conditional expectations from the set $\{\G_j\}_{j=1}^J$ of length $|\G|$ or shorter will include $\E_\G$ in convex combination. (C)SQF follows with good constants depending on the specific structure of the group. The highlighted example appears in the next section, where we consider transference analogs on conditional expectations and semigroups derived from finite graphs.

Though we highlight a particular group representation for clarity, the techniques of this Section apply to other representations. One may for instance consider both left and right regular representations, in which case nearly equivalent results hold. It is not always valid, however, to mix one classical representation with a different quantum representation. It is also not assured that distinct representations will always yield the same optimal constants.

\begin{exam}[Symmetric Group of Degree 3] \label{symmgroup} \normalfont
The group of permutations on 3 indices, known both as the symmetric group of degree 3 ($S_3$) and as the dihedral group of degree 3, is the smallest non-abelian group. It contains 6 elements, which we may represent on 6-dimensional probability space or on 6-dimensional Hilbert space as above. We may describe convex combinations of unitaries from this group by
\[ \Phi_{\vec{p}} = \sum_{j=1}^6 p_j \ad_{u_j} \pl, \]
where $\vec{p} = (p_1, ..., p_6)$ is a probability vector, and each $u_j$ is a unitary conjugation representing a distinct element of $S_3$. When $p_1 = ... = p_6 = 1/6$, the channel becomes a conditional expectation to the invariant subalgebra of the group. Repeated applications of two non-redundant generators of $S_3$ will eventually generate every element of the entire group. Hence a sufficiently long chain of applications of $\Phi_{\vec{p}}$ approaches the fixed point conditional expectation as long as $\vec{p}$ is non-zero on at least two non-redundant generators. Mathematically, such a channel $\Phi_{\vec{p}}$ will have that $\lim_{k \rightarrow \infty} \Phi_{\vec{p}}^k = \E_{S_3}$.

As in Equation \ref{eq:transfer}, we may construct the channel $\Psi_{\vec{p}}(\cdot) : l_1(S_3) \rightarrow l_1(S_3)$, where $l_1(S_3)$ is the 6-dimensional, 1-normed vector space on elements of $S_3$. Via Equation \eqref{eq:transference},
\[\Phi_{\vec{p}}^k = \Phi_{(\Psi_{\vec{p}})^k(\mathbf{1})} \pl,\]
where $\mathbf{1}$ denotes the probability vector corresponding to the group's identity element (not to be confused with $\vec{1}$, which equally weights all elements). Hence we may determine $\zeta$ and $c$ as in Corollary \ref{cor:simpleG} precisely for any $\vec{p}$ by finding the distribution induced by $k$ repeated applications of $\Psi(\vec{p})$ as discrete classical Markov chain to a probability vector. This process may allow us to bypass the potentially harder problem of calculating $\Phi_{\vec{p}}^k$ for arbitrary quantum states. Applying Theorem \ref{asa2}, for sufficiently large $k$ and when $\vec{p}$ has non-zero weight on at least some pair of non-redundant generators,
\[D(\rho \| \Phi_{\vec{p}}^k(\rho)) \geq \beta_{c, \min(\Psi_{\vec{p}}^k(\mathbf{1}))} D(\rho \| \E_{S_3}(\rho))\]
where $\min(\Psi_{\vec{p}}^k(\mathbf{1}))$ is the minimum element of $\Psi_{\vec{p}}^k(\mathbf{1})$. We see via the convex combination form of $\E_{S_3}$ that $c \leq 6$.

Finally, we may consider extensions of the Hilbert space by an arbitrary, finite-dimensional, auxiliary system that is untouched by $\Phi_{\vec{s}}$ for any $\vec{s} \in l_1(S_3)$. The same transference as above nearly holds, but we note the possibility to define a maximally entangled (pure) state on the two-copy space.
\end{exam}

\begin{rem} \label{rem:groupindex}
Let $\G$ be a finite group of cardinality $|\G|$. For any unitary representation, there is a natural conditional expectation given by Equation \eqref{eq:groupform}. Since the representation of the identity element is the identity unitary, $\E_{\G}(\rho)$ contains $(1/|\G|) \rho$ as a term. For any input density $\rho$, $\rho = |\G| (1/|\G|) \rho \leq |\G| \E_{\G}(\rho)$, so the first Pimsner-Popa index as in Equation \eqref{eq:indices} is upper-bounded by $|\G|$. In general, Corollary \ref{index} yields quasi-factorization for subgroups' conditional expectations with constants depending on the group structure but not otherwise on the Hilbert space dimension of the representation.
\end{rem}
The relative entropy of frameness or asymmetry of a density $\rho$ takes the form $D(\rho \| \E_{\G*}(\rho))$, where $\E_{\G*}$ is the conditional expectation to the invariant subspace of some group in some representation \cite{vaccaro_tradeoff_2008, gour_measuring_2009, marvian_extending_2014}. In many of these cases, however, the group is compact but not finite. Nonetheless, replacing the sum in Equation \eqref{eq:convex} by an integral, it may still be possible to apply transference and related techniques. See \citen{bardet_group_2021} for examples.

\subsection{Finite, Connected, Undirected Graphs} \label{sec:graphs}
In this section, we will use the symbol $G = (V,E)$ to denote a graph with vertex set $V$ and edge set $E \subseteq V \times V$. We reserve $\G$ for a group.

The group and graph scenarios relate closely. An undirected, $n$-vertex graph $G$ will have a naturally corresponding group $\G$ with action on $1...n$ in which each edge $(i, j) \in V \times V$ corresponds to the swap operation $i \leftrightarrow j$. This association is not unique - we may for instance identify a cyclic graph with a one-generator cyclic group, or with a multi-generator group of self-inverse swaps. Conversely, a finite group $\G$ has a corresponding Cayley graph.

For simplicity and to facilitate concrete calculations, we here define a representation. In the computational basis $\{\ket{i} : i \in 1...n\}$, let the unitary representation of undirected edge $(l,j) \in E$ be given on a bipartite system $A \otimes B$ by
\begin{equation} \label{eq:graphu}
u_\theta(l,j) := \Big ( e^{i \theta} \ket{l}\bra{j} + e^{-i \theta} \ket{j}\bra{l} + \sum_{r \neq l,j} \ketbra{r} \Big ) \otimes \id^B \pl.
\end{equation}
Note the extra parameter $\theta$: in addition to swapping the $l$th and $j$th basis states, such a representation may apply a relative phase on the switched elements. Here the graph representation acts on a system $A$ of dimension $n$, which we may extend by an arbitrary, finite-dimensional system $B$ on which the graph acts as the identity. Correspondingly, one may naturally define rank 1 basis vectors $\{\vec{v}_j : j \in 1...n\}$ on the classical probability space in $l_1^n$. We will use the classical representation of an edge $(l,j)$ given by the exchange $\vec{v}_l \leftrightarrow \vec{v}_j$. Under the identification $\vec{v}_j \leftrightarrow \ketbra{j}$ between $l_1^n$ and the diagonal densities on dimension $n$, then $u_\theta(l,j)$ acts as this swap operation regardless of $\theta$. In the literature \cite{diaconis_logarithmic_1996, gao_fisher_2020, li_graph_2020}, it is actually common to restrict to the diagonal or matrix-valued probability space in which $A \cong l_1^n$, while $B$ is a space of densities.

Indeed, the parameter $\theta$ suggests an ambiguity in the representation and complicates the analogy with the classical space. To restore this analogy, we work with channels of the form
\begin{equation} \label{eq:edgechan}
\begin{split}
& \Phi_{l,j}(\rho) = \int_0^{2 \pi} u_\theta(l,j) \rho u_{\theta}(l,j) d \theta \\
	& \pl = \ket{l}\braket{j | \rho | j} \bra{l} \otimes \rho_j^B + \ket{j} \braket{l | \rho | l} \bra{j} \otimes \rho_l^B
		+ \Big (\sum_{s \neq l,j} \ketbra{s} \otimes \id^B \Big ) \rho \Big (\sum_{r \neq l,j} \ketbra{r} \otimes \id^B \Big ) \pl.
\end{split}
\end{equation}
The channel $\Phi_{l,j}$ dephases the $i$th and $j$th matrix elements as well as swapping. Hence it is non-unitary but in some ways more closely analogous to the classical swap. Acting on $l_1^n \otimes B$, $\Phi_{l,j}$ is equivalent to $u_\theta(l,j)$ for any $\theta$.

In \citen{li_graph_2020}, it was shown that connected graphs as represented above on $l_1^n \otimes B$ have CMLSI with constant at least $1/O(n^2)$. This result arises by comparing all graphs to the broken cycle, the slowest-decaying of connected graphs. Similarly, \citen{gao_geometric_2021, gao_complete_2021} showed general CMLSI, resolving at least the existence of CMLSI for graphs. In \citen{gao_fisher_2020}, complete graphs including two-vertex, single-edge graphs were shown to have CMLSI with constant of $O(1)$, as the channels given by these graphs are already convex combinations that include $\E_G$. Missing so far have been explicit estimates of the CMLSI constants for expanders and similar graphs, which are often expected to be stronger than $\Omega(1/n^2)$ but worse than $O(1)$.

We define the diagonal projection conditional expectation:
\[ \E_{\text{diag}}(\rho) = \braket{j | \rho^A | j } \otimes \rho^B_j +  
    \Big (\sum_{s \neq j} \ketbra{s} \otimes \id^B \Big ) \rho \Big (\sum_{r \neq j} \ketbra{r} \otimes \id^B \Big ) \pl. \]
For each edge there is a natural conditional expectation given by
\begin{equation} \label{eq:eedge}
\begin{split}
\E_{l,j}(\rho) = &  \frac{1}{2} \big ( \ket{l}\braket{j | \rho | j} \bra{l} \otimes \rho_j^B + \ket{j} \braket{l | \rho | l} \bra{j} \otimes \rho_l^B
	\\ & +  \ket{l}\braket{l | \rho | l} \bra{l} \otimes \rho_l^B + \ket{j} \braket{j | \rho | j} \bra{j} \otimes \rho_j^B \big )
	\\ & + \Big (\sum_{s \neq l,j} \ketbra{s} \otimes \id^B \Big ) \rho \Big (\sum_{r \neq l,j} \ketbra{r} \otimes \id^B \Big ) \\
& = \frac{1}{2} (\E_{\text{diag}(l)} \E_{\text{diag}(j)}(\rho) + \Phi_{l,j}(\rho)) \pl.
\end{split}
\end{equation}
which projects any density $\rho$ to the invariant subspace of $\Phi_{l,j}$ and of $u_\theta(l,j)$ for all $\theta$. We note that $\E_{l,j} = \E_{l,j} \E_{\text{diag}(j)} = \E_{\text{diag}(j)} \E_{l,j}$, and similarly with $\E_{\text{diag}(l)}$. Also, $[\E_{l,j}, \E_{\text{diag}(k)}] = [\E_{\text{diag}(j)}, \E_{\text{diag}(k)}] = 0$ for $k \neq l,j$. Let $\E_{\text{diag}} = \prod_{j=1}^n \E_{\text{diag}(j)}$ denote the conditional expectation to the computational basis. There is also a natural conditional expectation to the invariant subspace of $\{ \Phi_{i,j} : (i,j) \in E\}$ given by
\begin{equation} \label{eq:egraph}
\E_G(\rho) = \Big ( \frac{1}{n} \sum_{i, j \in 1...n} \ket{i}\bra{j} \rho \ket{j}\bra{i} \otimes \rho^B_j \Big )
	= \frac{\id^A}{n} \otimes \tr_{A}(\rho) \pl.
\end{equation}
\begin{figure}[h!] \scriptsize \centering
	\begin{subfigure}[b]{0.22\textwidth}
		\includegraphics[width=0.95\textwidth]{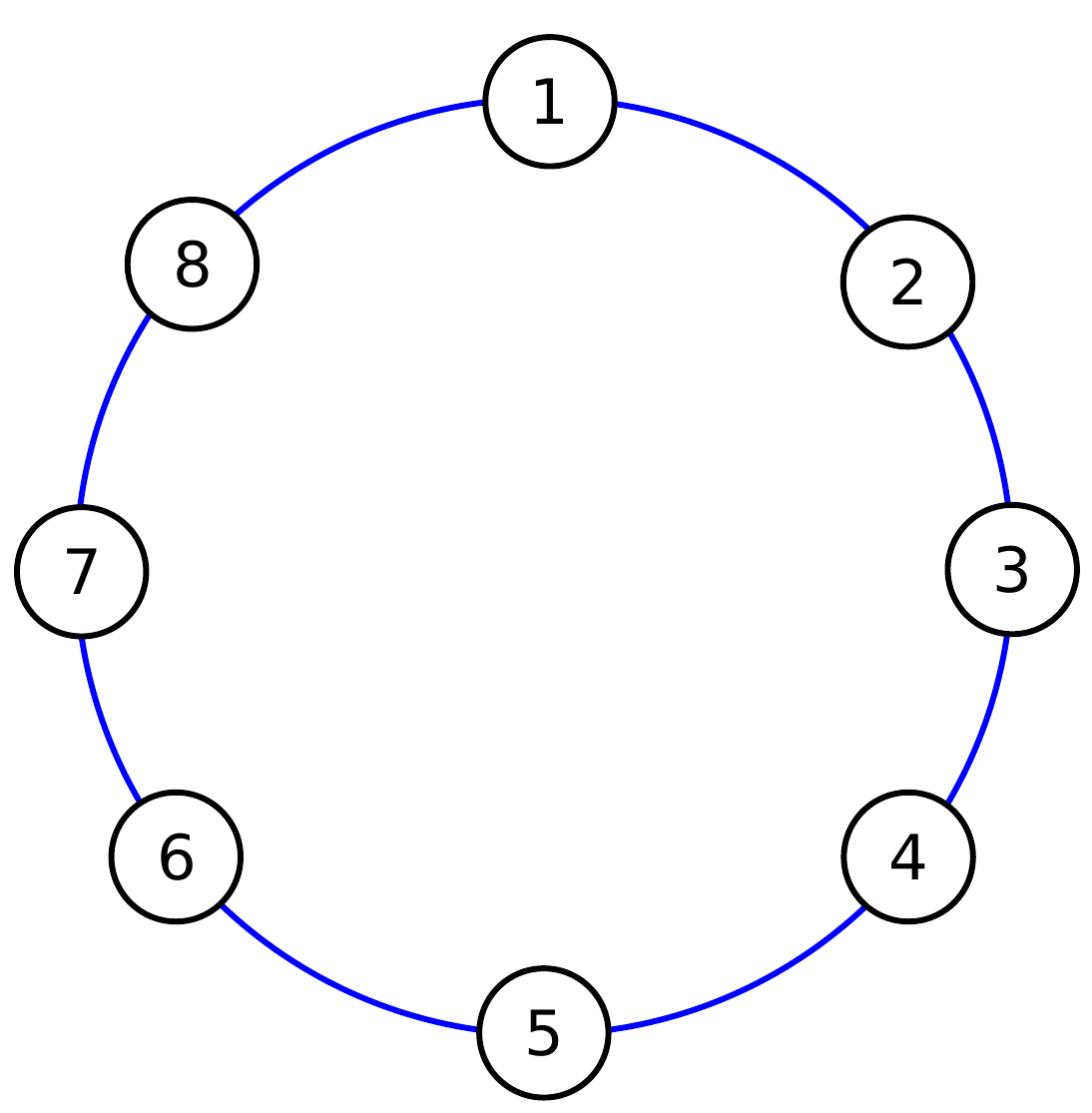}
	\end{subfigure}
	\begin{subfigure}[b]{0.22\textwidth}
		\includegraphics[width=0.95\textwidth]{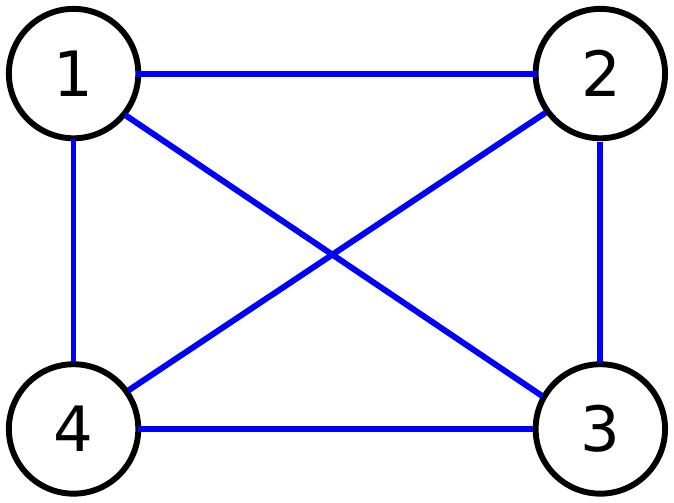}
		\vspace{5mm}
	\end{subfigure}
	\begin{subfigure}[b]{0.22\textwidth}
		\includegraphics[width=0.95\textwidth]{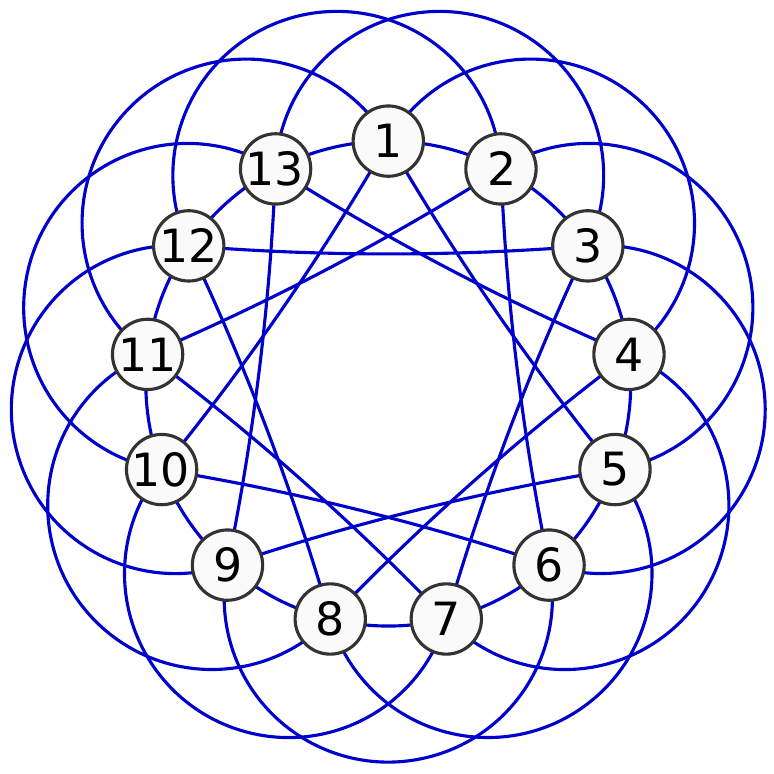}
	\end{subfigure}
	\caption{Visualizations of a cyclic graph, complete graph, and Paley graph (a member of a family that is Ramanujan for sufficiently many vertices) as reproduced from Wikipedia Page ``Paley graph" January 2021, Wikipedia Page Version ID: 1000144544, author David Eppstein (released to the public domain). \label{fig:graphs}}
\end{figure}
Any $k$-fold composition $\Phi_{i_1, j_1} \circ ... \circ \Phi_{i_k, j_k}$ can be written as a sum of ketbra conjugations $\rho \rightarrow \ket{i}\bra{j} \rho \ket{j} \bra{i}$ for $i,j \in 1...n$. Furthermore, as no $\Phi_{i,j}$ escapes the diagonal basis, neither does any product of them, nor any convex combination of such products. Hence any quantum channel $\Phi_{\vec{p}}$ that is a convex combination of composed $\Phi_{i,j}$ following or followed by $\E_{\text{diag}}$ has the form
\begin{equation} \label{eq:graphchanquantum}
\Phi_{\vec{p}}(\rho) = \sum_{i,j = 1}^n p_{i,j} \ket{i}\bra{j} \rho \ket{j}\bra{i} \otimes \rho_j^B
\end{equation}
for a probability vector $\vec{p} \in l_1(n) \otimes l_1(n)$ that weights transitions between basis states. The form of channel defined here allows a greater range of processes than expressed by symmetric graphs, such as those in which not every transition is balanced by its inverse with equal weight. $\Phi$ is not as general as an arbitrary combination of $\ket{i}\bra{j}$ terms, as properties such as normalization are implicit in its structure. Hence not all values of $\vec{p}$ are valid. We need not explicitly check these constraints as long as the weights arise from a composition of valid physical processes.

We say that an undirected graph $G$ is $m$-regular when each vertex has $m$ incoming/outgoing edges. We denote by $A$ the normalized adjacency matrix of a graph $G$, which for an $m$-regular graph is the adjacency matrix divided by $m$. We recall an important result on classical graphs:
\begin{theorem}[Theorem 3.3 from \citen{hoory_expander_2006}] \label{expanderthm}
Let $G$ be a connected, undirected, $m$-regular graph with $n$ vertices. Let the eigenvalues of $G$'s normalized adjacency matrix $A$ be denoted $1 = \lambda_1 \geq ... \geq \lambda_n$, and $\gamma = \max\{|\lambda_n|,|\lambda_2|\}$. Then for any normalized probability vector $\vec{x} \in l_1^n$ and $t \in \NN$,
\[ \|A^t \vec{x} - \vec{1}/n \|_2 \leq \|\vec{x} - \vec{1}/n \|_2 \gamma^t \leq \gamma^t \pl. \]
\end{theorem}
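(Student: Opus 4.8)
The plan is to diagonalize $A$ and track how $A^t$ acts on the deviation of $\vec{x}$ from the uniform distribution. First I would note that because $G$ is undirected and $m$-regular, the normalized adjacency matrix $A$ is real symmetric and doubly stochastic, and hence admits an orthonormal eigenbasis $\{v_1, \dots, v_n\}$ with real eigenvalues $1 = \lambda_1 \geq \lambda_2 \geq \dots \geq \lambda_n$. Double stochasticity forces the all-ones vector to be fixed, so $v_1 = \vec{1}/\sqrt{n}$ with $\lambda_1 = 1$; connectivity (Perron--Frobenius) guarantees this top eigenvalue is simple and that $\gamma = \max\{|\lambda_2|, |\lambda_n|\} < 1$. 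In particular $|\lambda_i| \leq \gamma$ for every $i \geq 2$, since each interior eigenvalue lies between $\lambda_2$ and $\lambda_n$.

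Next I would expand $\vec{x} = \sum_{i=1}^n c_i v_i$ in this basis. The component along $v_1$ is $c_1 = \langle \vec{x}, \vec{1}/\sqrt{n} \rangle = (\sum_i x_i)/\sqrt{n} = 1/\sqrt{n}$, because $\vec{x}$ is a probability vector, so $c_1 v_1 = \vec{1}/n$ and therefore $\vec{x} - \vec{1}/n = \sum_{i \geq 2} c_i v_i$ is exactly the projection of $\vec{x}$ onto the orthogonal complement of the uniform direction. Applying $A^t$ and using $\lambda_1 = 1$ gives $A^t \vec{x} - \vec{1}/n = \sum_{i \geq 2} \lambda_i^t c_i v_i$. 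By orthonormality, $\|A^t \vec{x} - \vec{1}/n\|_2^2 = \sum_{i \geq 2} \lambda_i^{2t} c_i^2 \leq \gamma^{2t} \sum_{i \geq 2} c_i^2 = \gamma^{2t} \|\vec{x} - \vec{1}/n\|_2^2$, and taking square roots yields the first inequality.

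For the second inequality I would bound $\|\vec{x} - \vec{1}/n\|_2 \leq 1$. Expanding, $\|\vec{x} - \vec{1}/n\|_2^2 = \|\vec{x}\|_2^2 - 1/n$, and since $\vec{x}$ has nonnegative entries summing to $1$ we have $\|\vec{x}\|_2^2 = \sum_i x_i^2 \leq (\sum_i x_i)^2 = 1$, so $\|\vec{x} - \vec{1}/n\|_2^2 \leq 1 - 1/n \leq 1$. This step is entirely elementary; the only points demanding care are the spectral facts in the first step --- symmetry and double stochasticity of the normalized adjacency matrix, the identification of the Perron eigenvector with the uniform distribution, and the observation that $\gamma$ simultaneously dominates all the nontrivial eigenvalues. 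There is no genuine obstacle beyond assembling these standard ingredients: the bound is a direct consequence of the spectral gap, which is precisely why connectivity (forcing $\gamma < 1$) is the hypothesis that makes the estimate contractive.
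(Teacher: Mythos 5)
This theorem is not proved in the paper at all: it is imported verbatim as Theorem 3.3 of the cited Hoory--Linial--Wigderson survey, and your spectral-decomposition argument is precisely the standard proof given there, correct in all the steps that matter (orthonormal eigenbasis, identification of $c_1 v_1$ with $\vec{1}/n$, the bound $|\lambda_i| \leq \gamma$ for $i \geq 2$, and the elementary estimate $\|\vec{x}\|_2^2 \leq 1$). One caveat: your parenthetical claim that connectivity forces $\gamma < 1$ is false for bipartite graphs, where $\lambda_n = -1$ (e.g.\ $K_2$ or any even cycle), so Perron--Frobenius gives only $\lambda_2 < 1$; this does not affect the validity of your proof, since the displayed inequality uses only the definition $\gamma = \max\{|\lambda_2|,|\lambda_n|\}$ and holds whether or not $\gamma < 1$, but the concluding remark about contractivity should be qualified to exclude the bipartite case.
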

From this point on, we focus on $m$-regular, undirected graphs. Theorem \ref{expanderthm} is a form of spectral gap condition analogous to that considered in \citen[Lemma 2.6]{gao_complete_2021}. In the graph literature, however, $1 - \lambda_2$ is often referred to as the spectral gap for a graph's Laplacian or normalized adjacency matrix. To avoid confusion, we define $\gamma$ explicitly rather than refer to it as the spectral gap.
\begin{rem} \label{graphconvex}
Let $\Psi : l_1^n \rightarrow l_1^n$ be a classical channel on the probabilities over $n$ elements. Let  $\tilde{\Psi}(\vec{x}) := \Psi(\vec{x}) - (1 - n \| \Psi(\vec{x}) - \vec{1}/n \|_\infty) \vec{1} / n$. Since $\Psi(\vec{x}) \geq \vec{1}/n - \| \Psi(\vec{x}) - \vec{1}/n \|_\infty \vec{1}$ in vector order, $\tilde{\Psi}(\vec{x}) \geq 0$ for all $\vec{x}$. We then write
\begin{equation}
\Psi(\vec{x}) = (1 - n \| \Psi(\vec{x}) - \vec{1}/n \|_\infty) \vec{1} / n + n \| \Psi(\vec{x}) - \vec{1}/n \|_\infty \tilde{\Psi}(\vec{x}) \pl,
\end{equation}
suggesting transference to quantum channels as in Equation \eqref{eq:boundtotrans}. This Remark yields a similar result to Proposition \ref{near}, but the argument is simplified in the classical setting.
\end{rem}
The following Lemma formalizes the notion of transference for a graph:
\begin{lemma} \label{graphtransf}
Let $G$ be a graph with edge set $E$ having cardinality $|E|$. Let $\Phi_{\vec{p}}, \Phi_{\vec{q}}$ be quantum channels of the form in Equation \eqref{eq:graphchanquantum} for probability vectors $\vec{p}, \vec{q} \in l_1(E)$. Let $\Psi_{\vec{p}}, \Psi_{\vec{q}}$ be the respectively corresponding classical channels with the same weighting defined by applying the quantum channel on diagonal densities with the identification $\vec{v}_i \leftrightarrow \ketbra{i}$ for vertex basis vector $\vec{v}_i$. Then for any $\zeta \in (0,1)$,
\begin{equation} \label{eq:graphcompquantum}
\Phi_{\vec{p}}
    \geq_{cp} (1 - \zeta) \Phi_{\vec{q}}
\end{equation}
if and only if
\begin{equation} \label{eq:graphcompclassical}
\Psi_{\vec{p}}(\vec{x}) \geq (1 - \zeta) \Psi_{\vec{q}}(\vec{x})
\end{equation}
for all input vectors $\vec{x}$.
\end{lemma}
\begin{proof}
That Equation \eqref{eq:graphcompquantum} implies Equation \eqref{eq:graphcompclassical} follows immediately from diagonality preservation and the classical-quantum identification.

One may interpret the identification $\vec{v}_i \leftrightarrow \ketbra{i}$ as a pair of mutually inverse maps $\iota, \iota^{-1}$, such that $\Psi_{\vec{p}}(\vec{x}) = \iota^{-1}(\Phi_{\vec{p}}(\iota(\vec{x})))$. Whenever Equation \eqref{eq:graphcompclassical} holds for $\Psi_{\vec{p}}$, Equation \eqref{eq:graphcompquantum} holds on diagonal densities in the vertex basis. What remains to be shown is that it extends to all densities. When $\Phi_{\vec{p}}(\ketbra{i}) \geq (1 - \zeta) \Phi_{\vec{q}}(\ketbra{i})$ for each $i \in 1...n$, $p_{j,i} \geq (1-\zeta) q_{j,i}$ for every $i,j \in 1...n$ in the explicit form given by Equation \eqref{eq:graphchanquantum}. Hence $\vec{p} \geq (1-\zeta) \vec{q}$, and $\Phi_{\vec{p}} \geq_{cp} (1 - \zeta) \Phi_{\vec{q}}$.
\end{proof}

To transfer the result of Theorem \ref{expanderthm} to the setting of \ref{graphconvex}, we compare repeated applications of single edge conditional expectations as in Equation \eqref{eq:eedge} to a random walk. One step of a random walk on a graph $G$ with edge set $E$ is represented by
\begin{equation} \label{eq:phirw}
\Phi_{RW(G)}:= \frac{1}{m} \sum_{(i,j)} \Phi_{i,j}
\end{equation}
The corresponding classical channel $\Psi_{RW(G)}$ via the identification of Lemma \ref{graphtransf} is equivalent to left multiplication by the normalized adjacency matrix of $G$ as in Theorem \ref{expanderthm}.

\begin{lemma} \label{lem:randwalk}
Let $G$ be an $m$-regular graph with $n$ vertices. Then for any $k \in \NN$, $t < k m / 2$, and $\rho$ diagonal in the basis of $G$, and for $\Phi_{RW(G)}$ defined by Equation \eqref{eq:phirw},
\begin{equation} \label{eq:graphgeneral}
\begin{split}
& k \sum_{(i,j) \in E} D(\rho \| \E_{i,j}(\rho)) \geq
	\big (1 - \exp(- k m n D(\nu / n \| 1 / n ) / 2) \big ) D(\rho \| \Phi_{RW(G)}^{(\geq t)}(\rho)) \\
& \pla	\geq \big (1 - \exp(- k m (1 - \nu(1 + \ln(1/\nu) + 1/n - \nu/n )) / 2) \big ) D(\rho \| \Phi_{RW(G)}^{(\geq t)}(\rho)) \pl,
\end{split}
\end{equation}
 where $\nu = 2 t / k m$, and
 \[ \Phi_{RW(G)}^{(\geq t)} = \sum_{s \in t...|E| k} \mu(s) \Phi_{RW(G)}^s \] 
for some probability distribution $\mu$ on $t ... |E| k$. Here $D(p \| q)$ is defined for $p, q \in [0,1]$ as the relative entropy of the two-outcome probability distribution $(p,1-p)$ to $(q,1-q)$.
\end{lemma}
\begin{proof}
Let $|E| = m n / 2$ denote the number of edges in $G$. Via Lemma \ref{lem:subadd} and the convexity of relative entropy,
\[ \frac{1}{|E|} \sum_{(i,j) \in E} D(\rho \| \E_{i,j}(\rho)) + D(\rho \| \Theta(\rho))
	\geq D \Big ( \rho \Big \| \frac{1}{|E|} \sum_{(i,j) \in E} \E_{i,j} (\Theta(\rho)) \Big ) \]
for any channel $\Theta$ from the input space to itself. Iterating $|E| k$ times starting with $\Theta = \id$ yields that
\begin{equation} \label{eq:rwform}
k \sum_{(i,j) \in E} D(\rho \| \E_{i,j}(\rho)) \geq D \Big ( \rho \Big \| \Big ( \frac{1}{|E|} \sum_{(i,j) \in E} \E_{i,j} \Big )^{k |E|} (\rho) \Big ) \pl.
\end{equation}

Each edge conditional expectation $\E_{(i,j)}$ has probability $1/2$ to apply $\Phi_{i,j}$. Let $\Phi_G := (1/|E|) \sum_{i,j} \E_{i,j}$. As in Lemma \ref{graphtransf}, we define a corresponding classical channel $\Psi_G^{(k)} : l_1(n) \rightarrow l_1(n)$. For the rest of the proof, we analyze $\Psi_G^k$. Let $\vec{v}_i$ denote the $i$th element in the canonical basis of $l_1^n$, which has probability fully concentrated at the $i$th vertex. Since $\Psi_G$ applies the conditional expectation corresponding to each edge with equal probability, and each edge touches two nodes, it has a probability of $2 / n$ to apply a conditional expectation that would affect $\vec{v}_i$. If it does so, then with probability 1/2 it applies $\Psi_{i,j}$ for some $(i,j) \in E$, which we may regard as one step in a random walk. Since the action on $\{\ketbra{i}\}$ defines the action on diagonal vectors, $\Psi_G$ applies $\Psi_{RW(G)}$ with probability $1/n$ and identity otherwise.

Let $t$ denote the number of steps in a random walk. Since each application of $\Psi_G$ is equivalent to applying at least one step in a random walk with probability at least $1/n$, we obtain a binomial distribution to take $t$ steps in $k |E|$ trials. By the Chernoff bound (see Theorem 1 in \citen{arratia_tutorial_1989}), for any $\nu \in [0, 1)$,
\[ p(t \leq \nu k |E| / n) \leq \exp(- k |E| D(\nu / n \| 1 / n )) \pl. \]
Since the relative entropy is positive and convex, for any $t \leq \nu k |E| / n = \nu k m / 2 $,
\begin{equation} \label{eq:cher1}
D(\rho \| \Psi_G^{k |E|} (\rho)) \geq (1 - \exp(- k |E| D(\nu / n \| 1 / n ))) D(\rho \| \Psi_{RW(G)}^{( \geq t)}(\rho)) \pl.
\end{equation}
Combining the above with Equation \eqref{eq:rwform} completes the first inequality of the Lemma.

For the 2nd inequality, we expand and estimate the binary relative entropy.
\begin{equation*}
\begin{split}
D(\nu / n \| 1 / n) = \frac{\nu}{n} \ln \nu
	+ \Big (1 - \frac{\nu}{n} \Big ) \ln \Big ( \frac{1 - \nu / n}{1 - 1 / n} \Big ) \pl.
\end{split}
\end{equation*}
By elementary properties of the natural logarithm,
\begin{equation*}
\begin{split}
 \ln \Big ( \frac{1 - \nu / n}{1 - 1 / n} \Big ) & \geq 1 - \frac{1 - 1 / n}{1 - \nu / n}
  = \frac{1}{1 - \nu / n}  \Big ( \frac{1}{n} - \frac{\nu}{n} \Big )
 \geq \frac{1}{n} \Big ( 1 - \nu\Big ) \pl.
\end{split}
\end{equation*}
Hence
\[ D(\nu / n \| 1 / n) \geq \frac{1}{n} (1 - \nu(1 + \ln(1/\nu) + 1/n - \nu/n )) \pl. \]
To complete the Lemma, we substitute in Equation \eqref{eq:cher1}.
\end{proof}
Via transference, we obtain relative entropy bounds on the quantum representation of a random walk on the partly diagonal algebra $l_1^n \otimes S_1^n$:
\begin{lemma} \label{lem:rwbound}
For any $t \in \NN$ and $m$-regular graph $G$ with $\gamma$ as defined in Theorem \ref{expanderthm}, $\E_G$ as in Equation \eqref{eq:egraph}, and $\Phi_{RW(G)}^{(\geq t)}$ as defined in Lemma \ref{lem:randwalk},
\[ (1-n \gamma^t) \E_G \leq_{cp} \Phi_{RW(G)}^{(\geq t)} 
\circ \E_{\text{diag}} \leq_{cp} (1+n \gamma^t) \E_G \pl, \]
and
\[ D(\E_{\text{diag}}(\rho) \| \Phi_{RW(G)}^{(\geq t)}(\E_{\text{diag}}(\rho))) \geq \beta_{2, n \gamma^t} D(\E_{\text{diag}}(\rho) \| \E_G(\rho)) \pl. \]
\end{lemma}
\begin{proof}
Let $\Psi_{RW(G)}$ be the transferred classical analog of $\Phi_{RW(G)}$ from Equation \eqref{eq:phirw}. We apply Theorem \ref{expanderthm} to $\Psi_{RW(G)}$ with $\gamma$ as defined therein, recalling that $\Psi_{RW(G)}$ is equivalent to left multiplication by $G$'s adjacency matrix. Since the 2-norm is an upper bound for the $\infty$-norm,
\begin{equation} \label{eq:graphinf}
\sup_{\vec{x}} \| \Psi_{RW(G)}^{(\geq t)} (\vec{x}) - \vec{1}/n \|_\infty \leq \gamma^t \pl.
\end{equation}
Via Remark \ref{graphconvex},
\begin{equation} \label{eq:atform}
\Psi_{RW(G)}^{(\geq t)}(\vec{x}) = (1 - n \gamma^t) \vec{1} / n + n \gamma^t  \Theta (\vec{x})
\end{equation}
for any input probability vector $\vec{x}$ and some classical channel $\Theta$. By Lemma \ref{graphtransf} and the properties of the cp-order,
\[ \Phi_{RW(G)}^{(\geq t)} \circ \E_{\text{diag}} = (1 - n \gamma^t) \E_G + n \gamma^t \tilde{\Phi} \]
for some quantum channel $\tilde{\Phi}(\rho)$. Furthermore, one may confirm by applying $\E_G$ to $\Phi_{RW(G)}^t$ that $\E_G \tilde{\Phi} = \tilde{\Phi} \E_G = \E_G$. Hence
\[ D(\E_{\text{diag}}(\rho) \| \Phi_{RW(G)}^{(\geq t)}(\E_{\text{diag}}(\rho))) = D(\rho \| (1 - n \gamma^t) \E_G(\rho) + n \gamma^t \tilde{\Phi}(\rho)) \]
for any input density $\rho$.

A possible next step would be to apply Corollary \ref{cor:simpleG} with $\zeta = n \gamma^t$ and $c = n$. Instead, we obtain a better estimate of the constant $c$. Returning to Equation \eqref{eq:graphinf}, the maximum element of $\Psi_{RW(G)}^{(\geq t)}(\vec{x})$ is at most $\gamma^t + 1/n$ for any $\vec{x}$. Recalling Equation \eqref{eq:atform} and applying Lemma \ref{graphtransf} again, $ \Theta(\vec{x}) \leq 2 \times \vec{1} / n$, so $\tilde{\Phi}(\rho) \leq 2 \E_G(\rho)$. We conclude that $c \leq 2$. Using Corollary \ref{cor:simpleG} with $\zeta = n \gamma^t$ and $c \leq 2$ completes the Lemma.
\end{proof}
Combining results of this Subsection, we obtain the technical version of Theorem \ref{graphthm}:
\begin{theorem} \label{graphthmtech}
Let an $m$-regular graph with $n$ vertices $G$ have $\gamma$ as defined in theorem \ref{expanderthm}. Then for any $k,t \in \NN$ such that $\max\{\lceil \log_\gamma (1/n) \rceil, 2/m \} \leq t \leq (1 - 1/2^{m-1}) k $,
\begin{equation*}
\begin{split}
 & k \sum_{(i,j) \in E} D(\rho \| \E_{i,j}(\rho))
	\geq \beta_{2, n \gamma^t} \big (1 - \exp(- n m k D( t/k \| 1/n) / 2 ) \big ) D(\rho \| \E_G(\rho)) \pl,
\end{split}
\end{equation*}
where $D(p \| q)$ is defined for $p, q \in [0,1]$ as the relative entropy of the two-outcome probability distribution $(p,1-p)$ to $(q,1-q)$.
Hence the Lindbladian given by
\[ \L_G = \sum_{(i,j) \in E} (\rho - \E_{i,j}(\rho)) \]
has CMLSI with the same constant.
\end{theorem}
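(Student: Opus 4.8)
The plan is to prove this by transference, reducing the quantum composition of edge conditional expectations to a classical lazy random walk on the $n$ vertices, and then feeding the resulting convex-combination structure into Theorem \ref{asa2}. First I would use the observation of Section \ref{sec:groups}: since each $\E_{i,j} = \tfrac12(\id + \ad_{u(i,j)})$ is a convex combination of unitary conjugations, a product-measure average of compositions $\E^s$ equals the composition of averages, and the averaged single step transfers to a stochastic matrix on $l_1^n$. Concretely, the average of the edge conditional expectations corresponds to a transition matrix $P = (1-q)I + q A$, a convex combination of the identity and the normalized adjacency matrix $A$, for an appropriate laziness parameter $q$. Because $I$ and $A$ commute, a length-$\ell$ walk expands by the binomial theorem as $P^\ell = \sum_r \binom{\ell}{r}(1-q)^{\ell-r} q^r A^r$, so that conditioned on making $r$ genuine moves the tracked token is distributed as $A^r \vec{x}$.

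Next I would invoke Theorem \ref{expanderthm}: for $r \geq t = \lceil \log_\gamma(1/n)\rceil$ the distribution $A^r \vec{x}$ is within $\gamma^r \leq 1/n$ of the uniform vector $\vec{1}/n$ in the $2$-norm, hence in the $\infty$-norm, uniformly over all $\vec{x}$. Remark \ref{graphconvex} then extracts from every such well-mixed term a convex combination of the uniform channel, i.e.\ of $\E_G$ after transferring back to the quantum side, with a residual channel. Splitting the binomial sum at $r = t$, the total weight on the mixed part is the probability that the number of moves reaches $t$, and a lower-tail Chernoff estimate bounds the complementary weight by $\exp(-N\, D(t/N \,\|\, p))$ with $N = (m-1)nk$ and $p = 1/(2n)$; this is where the hypothesis $t \leq (m-1)k/2$ enters, placing $t$ below the binomial mean so the large-deviation bound applies. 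The result is a decomposition $\sum_s \mu(s)\E^s(\rho) = (1-\zeta)\E_G(\rho) + \zeta\Phi(\rho)$ to which Theorem \ref{asa2} applies with each edge used at most $k$ times, yielding the stated inequality; the index bound $c \leq (n^2-n)/2 + 1$ follows from Corollary \ref{index}, the worst case being the complete graph with $(n^2-n)/2$ generating reflections. The CLSI claim then follows by combining the single-edge conditional expectations, each a two-point/complete-graph conditional expectation with $O(1)$ (C)MLSI in the sense of \cite{gao_fisher_2020}, through the merge of Proposition \ref{prop:mlsimerge} with $\alpha$ equal to the quasi-factorization constant just obtained.

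I expect the main obstacle to be the combinatorial bookkeeping that pins down the exact binomial parameters $N = (m-1)nk$ and $p = 1/(2n)$ while simultaneously respecting the per-edge usage cap $k$ demanded by Theorem \ref{asa2}. Two points require care: identifying ``a move of the single tracked token'' with an independent success of probability $1/(2n)$, reconciling the edge-incidence count, the laziness factor, and the $(m-1)$ that reflects the fresh-move structure of an $m$-regular walk; and verifying that for $r \geq t$ the residual $A^r\vec{x} - \vec{1}/n$ stays inside the tolerance that Remark \ref{graphconvex} needs in order to produce a genuine channel, so that $\zeta$ can be taken as the Chernoff tail. Once these accounting steps are secured, the remaining estimates are routine applications of the cited results.
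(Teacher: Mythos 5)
Your proposal is correct and follows essentially the same route as the paper: average the edge conditional expectations into a lazy walk channel, iterate the chain/subadditivity estimate, use a lower-tail Chernoff bound on the number of genuine moves, invoke Theorem \ref{expanderthm} with Remark \ref{graphconvex} to extract the convex-combination form $(1-\zeta)\E_G + \zeta\tilde\Phi$, and finish with Proposition \ref{prop:mlsimerge}. The only cosmetic difference is that you route the final step through Theorem \ref{asa2}, whereas the paper applies Lemma \ref{lem:subadd} and Theorem \ref{revconv} directly, which is the same argument unpacked; even the bookkeeping subtleties you flag (the $1/(2n)$ move probability, the $(m-1)$ factor, and the per-edge count $k$) are exactly the points the paper treats loosely.
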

\begin{proof}
To use Lemmas involving $\Phi_{\vec{p}}$, we must first reduce the desired results to those on densities that are diagonal on the subsystem on which the graph acts. Invoking the chain rule of relative entropy and data processing inequality,
\begin{equation*}
\begin{split}
D(\rho \| \E_{l,j}(\rho)) & =
    D(\rho \| \E_{\text{diag}(l)}(\rho))
    + D(\E_{\text{diag}(l)}(\rho) \| \E_{l,j}(\rho))
\\ & \geq D(\rho \| \E_{\text{diag(l)}}(\rho))
    + D(\E_{\text{diag}}(\rho) \| \E_{l,j}(\E_{\text{diag}}(\rho))) \pl .
\end{split}
\end{equation*}
Similarly,
\[ D(\rho \| \E_G(\rho)) = D(\rho \| \E_{\text{diag}}(\rho)) + D(\E_{\text{diag}}(\rho) \| \E_G(\rho)) \pl . \]
To estimate $D(\rho \| \E_G(\rho))$ in terms of $\sum_{l,j} D(\rho \| \E_{l,j}(\rho))$, we handle the relative entropy to the diagonal subalgebra, then handle the diagonal case.

To estimate $D(\rho \| \E_{\text{diag}}(\rho))$ in terms of $\sum_l D(\rho \| \E_{\text{diag(l)}}(\rho))$, we use  Lemma \ref{lem:subadd} to combine diagaonlizing conditional expetations. Since each vertex appears $m$ times on each side of an edge, but we only use one vertex in each edge, we double the sum over edges to count each vertex exactly $m$ times. Hence
\[ 2 \sum_{l,j} D(\rho \| \E_{\text{diag}(l)}(\rho)) \geq
    m D \Big (\rho \| \prod_l \E_{\text{diag}(l)}(\rho) \Big ) \]
Since $[\E_{\text{diag(l)}}, \E_{\text{diag(j)}}] = 0$ for all $l,j \in 1...n$,
\[ \sum_{l,j} D(\rho \| \E_{\text{diag}(l)}(\rho)) \geq (m/2) D(\rho \| \E_{\text{diag}}(\rho) \pl. \]
Therefore, for any coefficient $1 \geq \alpha > 0$,
\[ \sum_{l,j} D(\E_{\text{diag}}(\rho) \| \E_{i,j}(\E_{\text{diag}}(\rho)))
    \geq \alpha D(\E_{\text{diag}}(\rho) \| \E_G(\rho))
    \implies
\sum_{l,j} D(\rho \| \E_{i,j}(\rho)) \geq \alpha D(\rho \| \E_G(\rho)) \pl. \]
Without loss of generality we may henceforth assume that $\rho = \E_{\text{diag}}(\rho)$.

Lemma \ref{lem:randwalk} implies that
\begin{equation} \label{eq:graphgeneralthm1}
k \sum_{(i,j) \in E} D(\rho \| \E_{i,j}(\rho)) \geq \big (1 - \exp(- n m k D( t/k \| 1/n) / 2 ) \big ) D(\rho \| \Phi_{RW(G)}^{(\geq t)}(\rho)) \pl.
\end{equation}
Then, Lemma \ref{lem:rwbound} implies (assuming that $\rho = \E_{\text{diag}}(\rho))$) that
\[ k \sum_{(i,j) \in E} D(\rho \| \E_{i,j}(\rho)) \geq
	\beta_{1 + n^2 \gamma^{2 t}, n \gamma^t} 
	\big (1 - \exp(- n m k D( t/k \| 1/n) / 2 ) \big ) D(\rho \| \E_G(\rho)) \pl. \]
A re-arrangement of this Equation yields the quasi-factorization part of the Theorem.

To obtain CMLSI, we construct the Lindbladian $\L_{i,j *}(\rho) := \rho - \E_{i,j}(\rho)$. Since $\exp(- t \L_{i,j *}) = (1 - \exp(-t)) \E_{i,j}(\rho) + \exp(-t)\rho$, convexity of relative entropy implies 1-CMLSI for $\L_{i,j}$. We then rewrite $\L_G = \sum_{i,j} \L_{i,j}$
and apply Proposition \ref{prop:mlsimerge} to complete the CMLSI part of the Theorem.
\end{proof}
For the simplified Theorem \ref{graphthm} as in the introduction, we replace the binary entropy by its estimate as in Lemma \ref{lem:randwalk}. To simplify the expression, we choose $t = k / 2$. We then choose $k = 2 \lceil \log_\gamma(1/n) \rceil$. Doing so, we arrive at the estimate that
\begin{equation*}
\begin{split}
& \sum_{(i,j) \in E} D(\rho \| \E_{i,j}(\rho)) \\ & \pl \geq \frac{\beta_{2, 1/n}}{2 \lceil \log_\gamma (1/n) \rceil}
	(1 - \exp(- m \lceil \log_\gamma (1/n) \rceil (1 - (\ln m + 1) / m - O(1/n)))) D(\rho \| \E_G(\rho)) \pl. 
\end{split}
\end{equation*}
One may confirm by calculating its derivative and value at $m=2$ that $(\ln m + 1) / m < 1$, so for sufficiently large $n$, the subtracted exponential asymptotes to zero. This Equation yields the simplified Theorem \ref{graphthm}.

Theorem \ref{graphthmtech} is especially powerful when $\gamma = O(1)$ in $n$, in which case it obtains the expected logarithmic mixing time for Ramanujan and similar graphs. Graphs with $\gamma = O(1)$ are commonly known as fast expanders. These graphs have the fastest mixing times possible with fixed degree.

In contrast to graphs with constant $\gamma$ is the cyclic graph, which is 2-regular and has the slowest mixing time up to constants of any connected, regular graph. In \citen{li_graph_2020}, it was shown that Lindbladians corresponding to cyclic graphs having CMLSI with constant $O(1/n^2)$. Based on classical mixing times, we expect this dependence to be optimal. It is well-known that the eigenvalues of a cycle's normalized adjacency matrix take values $\cos(2 \pi l / n )$ for $l \in 0...n-1$, so $\gamma = \cos(2 \pi / n) \approx 1 - 4 \pi^2 / n^2$. Hence $\log_\gamma(1/n) = O(n^2 \ln n)$, which is larger than the expected $O(n^2)$ time for a random walk to converge on a cyclic graph. We may however obtain better estimates through a more case-specific, fine-grained analysis:
\begin{exam}[Cyclic Graph] \normalfont
Here we will not go through Theorem \ref{expanderthm} but will instead directly calculate bounds on the minimum and maximum probability after $O(n^2)$ steps. For any $a\in \NN$, a random walker on an infinite, one-dimensional lattice has probability of landing $s$ steps away from its original position after $a n^2$ is given by a binomial distribution approximated as
\begin{equation} \label{eq:cyclepeak1}
\binom{a n^2}{a n^2/2+ s} \Big ( \frac{1}{2} \Big )^{a n^2}
	\approx \sqrt{\frac{2}{a \pi}} \Big ( \frac{1}{1 - 4 s^2 / a^2 n^4} \Big )^{a n^2 + \frac{1}{2}}
		\Big ( \frac{1 - 2 s / a n^2}{1 + 2 s / a n^2} \Big )^s \frac{1}{n}
\end{equation}
within multiplicative error $\exp(1/(144 a n^2 + 12 n))$. The approximation follows from Robbins's precise form of Stirling's approximation \cite{robbins_remark_1955}. Any location's probability on the cycle will be at least as large as its corresponding probability on the infinite lattice, because all of the possibilities for the walker to escape the cycle instead return and contribute positively.

For a lower bound on the probability that the walker is at its furthest and therefore least likely point at the cycle, $s = n / 2$, we start with the $a=1$ case. For sufficiently large $n$, we coarsely overestimate this probability as at least $1/(\sqrt{a \pi} n)$ by bounding the $s$-dependent factors in Equation \eqref{eq:cyclepeak1} as greater than $1/\sqrt{2}$. Hence at $a=1$,
\[ \Psi^{n^2}_{RW(G)}(\vec{x}) \geq \frac{1}{\sqrt{\pi}} \frac{\vec{1}}{n} + \Big (1 - \frac{1}{\sqrt{\pi}} \Big )\Psi'(\vec{x}) \ \]
for some classical channel $\Psi'$. Since $\vec{1}/n$ is invariant under the application of random walks, we then have for any $a \in \NN$ that
\[ \Psi^{a n^2}_{RW(G)}(\vec{x}) = \Big (1 - \Big (1-\frac{1}{\sqrt{\pi}} \Big )^a \Big ) \frac{\vec{1}}{n}
	+ \Big (1 - \frac{1}{\sqrt{\pi}} \Big)^a \Psi_a'(\vec{x}) \]
for some classical channel $\Psi_a'$. Hence the minimum location probability is lower bounded as $p_{min}(a) \geq (1 - (1-1/\sqrt{\pi}^a))/n$.

For an upper bound on the largest possible probability, we must account for the chance that the walker loops around and returns to its original position. We must count the contributions of $a n^2/2 + r n$ for many values of $r \in \mathbb{Z}$. The $r=0$ term is bounded by $\sqrt{2 / \pi a} / n$ times a factor that can be made arbitrarily close to 1 for large enough $n$, so we may use the simplified overestimate $2 / (n \sqrt{ \pi a})$. On an infinite line, the walker's probability of landing on a location $+1$ steps away from its original location is less than that of landing $n$ steps away, which is upper bounded by $1/2n$. Since left and right paths lead to the same location on the cycle, the probability of $r=1$ is upper bounded by $1/n$. To simplify this problem, we use Hoeffding's inequality \cite{hoeffding_probability_1963}, obtaining
\[ p(|s| \geq |r| n) \leq 2 \exp(- 2 a n^2 ({}^-_+ r/a n)^2 ) = 2 \exp(- 2 r^2 / a) \pl, \]
where $p(|s| \geq |r| n)$ is the probability of taking a net $s$ steps in either direction. The total probability of being at any location declines with distance from the origin, and there are $n$ total positions. The probability to get past $r=1$ is in total less than $2 \exp(-2/a)$, bounding the total probability for $r$ to remain between 2 and 3. Hence the least likely $s \in [1 n, 2 n)$ has probability at most $2 \exp(-2/a) / n$, which upper bounds the probability of $r = {}^+_- 2$ by $2 \exp(-2/a) / n$. Iterating, the total probability of $r \geq 2$ is bounded by the series $\sum_{l=1}^\infty \exp(-l^2/a) / n$. This series is not easily expressed in closed form via elementary functions, but it is upper bounded by the geometric series $\sum_l \exp(- l / a) = 1/(e^{1/a} - 1)$. Since $e^{1/a} \geq 1 + 1/a$, the series's total is upper bounded by $a$, yielding a contribution of $a / n$. Adding the $r=0$, $r=1$, and $r > 1$ cases, we find a total of $p_{max}(a) \leq 2 / (n \sqrt{\pi a}) + 1/n + a/n$.

Using the transference technique from Lemma \ref{graphtransf} and a simple calculation,
\[ \Phi_{RW(G)}(\vec{x}) \geq (1 -\zeta) \E_G + \zeta \Theta (\vec{x}) \]
for $\zeta = 1 - n p_{min}(a)$, and $\Theta \leq_{cp} n(p_{min} + p_{max}/(1 - n p_{min})) \E_{G}$. For large enough $a$ that is constant in $n$ and sufficiently large $n$, a non-trivial bound follows from \ref{cor:simpleG}. Using Lemma \ref{lem:randwalk},
\[ b n^2 \sum_{i,j} D(\rho \| \E_{i,j}(\rho)) \geq \kappa D(\rho \| \E_G(\rho)) \]
for a large enough constant $b$ with $\kappa$ lower bounded above zero independently from $n$. Hence the cycle has quasi-factorization constant $O(n^2)$. Via Theorem \ref{asa2}, the corresponding Lindbladian has $O(1/n^2)$-CMLSI. This Example shows that for the cycle, quasi-factorization and multiplicative relative entropy comparison suffice to obtain bounds of the expected best asymptotic order.
\end{exam}

Another graph that is very unlike the expander or cycle is the complete graph, which is $n$-regular. CMLSI for complete graphs was studied in \citen{gao_fisher_2020}. For a complete graph,
\begin{equation} \label{eq:completeconvex}
 \frac{1}{|E|} \sum_{(i,j) \in E} \E_{i,j} =
	\frac{1}{2} \Big ( 1 - \frac{1}{|E|} \Big ) \id + \frac{1}{2} \Big (1 + \frac{1}{|E|} \Big ) \E_G \pl.
\end{equation}
Using the convexity of relative entropy, $\sum_{i,j} D(\rho \| \E_{i,j}(\rho)) \geq |E| D(\rho \| \frac{1}{|E|} \sum_{(i,j) \in E} \E_{i,j}(\rho))$. The right hand side of Equation \eqref{eq:completeconvex} is already a convex combination of $\E_G$ with another channel, so we simply use convexity of relative entropy to obtain that
\begin{equation*}
\begin{split}
\sum_{i,j} D(\rho \| \E_{i,j}(\rho)) & \geq
	|E| D\Big (\rho \Big \| \frac{1}{2|E|} ((|E| - 1) \id + (|E| + 1) \E_G)(\rho) \Big ) \pl.
\end{split}
\end{equation*}
With iteration,
\begin{equation*}
\begin{split}
	& \geq O(n) D(\rho \| ((1 - O(2^{-n})) \E + O(2^{-n}) \id) (\rho)) \pl.
\end{split}
\end{equation*}
We find a quasi-factorization constant of $O(1/n)$. That the dependence is less than $o(1)$ in $n$ arises because we have not normalized by the degree - were we to construct such a jump process, the probability of making any jump in an infinitesimal time interval would grow proportionally to $n$. If we normalize with $1/m = 1/n$, this leads to an $O(1)$ CMLSI constant. As the Lindbladian constructed from a complete graph (again up to normalizing factors) already generates a convex combination including $\E_G$, earlier notions and methods \cite{bardet_estimating_2017, bardet_hypercontractivity_2018, gao_fisher_2020} already suffice to show CMLSI for complete graphs.

Finally, we note again that there are many ways to represent a graph or construct its corresponding Lindbladian. For example, \citen{li_graph_2020} uses a single unitary $u$ such that $u^n = \id$ to generate the edge transitions of a cyclic group, corresponding to a cyclic graph. In many of these cases, we can still use Theorem \ref{expanderthm} and the same methods of analysis as in Theorem \ref{graphthmtech} to similar effect, and via the classical Laplacian construction of Remark \ref{graphclassical}, we see that in many cases the results are comparable.

\section{Conclusions and Outlook} \label{sec:conclusions}
The underpinnings of this paper's results combine the entropy-geometry links from \citen{carlen_gradient_2017, gao_fisher_2020} that have continued to \citen{li_graph_2020, gao_geometric_2021}, functional calculus as in \citen{gao_complete_2021}, and iterative, computation-like techniques as detailed in sections \ref{sec:relent}, \ref{sec:asa}, and \ref{sec:graphs}. Using these methods, we derive comparisons between relative entropies of particular relevance to scenarios that combine subalgebra restrictions or decay processes. This combination of techniques may be useful in future studies.

It remains open what the best possible quasi-factorization constants are and whether these are calculable in a simple, closed form expression. In \citen{gao_complete_2021}, a two-sided bound is shown in terms of an $L_2 \rightarrow L_2$ norm difference between a composition of conditional expectations and their intersection, but the bound is not necessarily tight on either side.

A still open question is to what extent quasi-factorization and related inequalities hold for infinite dimensions. Proposition \ref{near} breaks down if the minimum dimension of a conditional expectation $\E$ is infinite. Theorem \ref{revconv} is expected to hold beyond finite dimensions and even in non-tracial von Neumann algebras, but it remains to check that assumptions and cited results hold. Once this is verified, most results of this paper will probably carry through to infinite-dimensional settings.

\section{Acknowledgments}
NL is supported by IBM as a Postdoctoral Scholar at UChicago \& the Chicago Quantum Exchange. NL was previously supported by the Department of Physics at the University of Illinois at Urbana-Champaign.

I thank Marius Junge for early feedback on these results. I also acknowledge interactions with Li Gao as helping to motivate this project and with \'{A}ngela Capel as inspiring some improvements.

\section{Author Declarations}
The authors have no conflicts to disclose.

\bibliographystyle{unsrt}
\bibliography{v2}

\appendix

\numberwithin{equation}{section}

\section{Technical Proofs and Calculations} \label{sec:approxrelent}
Note: within this section, we use the letter $n$ rather than $d$ for dimension to avoid possible confusion with the derivative.

\begin{lemma}[Restatement of Lemma \ref{lem:towardident}]
Let $\rho$ be given in a diagonal basis by $(\rho_i)_{i=1}^n$, where $n \geq 2$ is the dimension of the system. Let $a, b \in (0,1)$, $i,j \in 1...n$ such that $\rho_i \geq 1/n \geq \rho_j$, and let $\zeta \in \RR^+$ such that
\[ \zeta \leq a \min \Big \{\frac{1-b}{n + a(1-b) + 1}, \frac{b}{(1 - a b) n + a b + 1} \Big \} \pl. \]
If $\rho_j > 0$, then
\[ \Big ( \frac{\partial}{\partial \rho_i} - \frac{\partial}{\partial \rho_j} \Big ) \tr(\rho (a \ln (n \rho) - \ln ((1 - \zeta) \id + \zeta n \rho))) \geq 0 \pl. \]
If $\rho_j = 0$, then letting $\tilde{\rho} = \rho - \epsilon \hat{i} + \epsilon \hat{j}$,
\[ \tr(\rho (a \ln (n \tilde{\rho}) - \ln ((1 - \zeta) \id + \zeta n \tilde{\rho}))) > 
	\tr(\rho (a \ln (n \rho) - \ln ((1 - \zeta) \id + \zeta n \rho))) \]
for sufficiently small $\epsilon$, where $\hat{i}$ and $\hat{j}$ denote the rank 1 unit densities according to the respective $i$th and $j$th basis vectors in the chosen diagonal basis of $\rho$.
\end{lemma}
\begin{proof}[(Proof of Lemma \ref{lem:towardident})]
We first focus on the first inequality to be shown. Define $\delta \equiv \rho_i - \rho_j \geq 0$. Since $\tr(\ln (n \rho)) = \tr(\ln \rho) + \ln n$, derivatives with respect to $\rho$ ignore the difference between these expressions. We directly calculate,
\begin{equation}
\label{eq:entderiv}
\frac{\partial}{\partial \rho_k} (\rho_k \ln ((1 - \zeta) + \zeta n \rho_k)) = \ln ((1 - \zeta) + \zeta n \rho_k) + \frac{\zeta n \rho_k}{(1 - \zeta) + \zeta n \rho_k} \pl.
\end{equation}
for any $k \in 1...n$. By setting $\zeta = 1$,
\begin{equation*}
\Big ( \frac{\partial}{\partial \rho_i} - \frac{\partial}{\partial \rho_j} \Big ) \tr (\rho \ln \rho) = \ln (\rho_i / \rho_j) = \ln(1 + \delta/\rho_j) \pl.
\end{equation*}
We have assumed that $\rho_i \geq 1/n > 0$. In the limit as $\rho_j \rightarrow 0$, this difference of derivatives diverges toward positive infinity, while the subtracted term in the desired result remains finite. Because
\begin{equation*}
\frac{x}{1 + x} \leq \ln (1 + x) \leq x
\end{equation*}
for all $x \geq 0$, we have for any $b \in [0,1]$ that
\begin{equation*}
\ln (1 + \delta / \rho_j) \geq (1 - b) \ln (1 + \delta / \rho_j) + b \frac{\delta / \rho_j}{1 + \delta / \rho_j} \pl.
\end{equation*}
This will allow us to deal with the two terms in equation \eqref{eq:entderiv} individually.

First, we handle the logarithm term, $\ln ((1 - \zeta) + \zeta n \rho_k)$, by finding $\zeta$ such that
\begin{equation*}
a (1 - b) \ln (1 + \delta / \rho_j) \geq \ln \Big (\frac{(1 - \zeta) + \zeta n \rho_j + \zeta n \delta}{(1 - \zeta) + \zeta n \rho_j} \Big ) \pl.
\end{equation*}
We rewrite the right hand side as
\begin{equation*}
\ln \Big (1 + \frac{\zeta n \delta}{(1 - \zeta) + \zeta n \rho_j} \Big ) \pl.
\end{equation*}
On the left hand side, since $\rho_j \leq 1/n$, $\delta / \rho_j \geq n \delta$. We aim to show that
\begin{equation*}
a (1 - b) \ln (1 + n \delta) \geq \ln \Big (1 + \frac{\zeta}{(1 - \zeta) + \zeta n \rho_j} n \delta \Big ) \pl,
\end{equation*}
which by Lemma \ref{lem:logcomp1} is achieved when
\begin{equation*}
\frac{\zeta}{(1 - \zeta) + \zeta n \rho_j} \leq \frac{a(1-b)}{1 + n \delta} \pl.
\end{equation*}
We know $\zeta/((1 - \zeta) + \zeta n \rho_j) \leq \zeta/(1 - \zeta)$ for any $n \rho_j \geq 0$. Hence any
\begin{equation} \label{eq:zc1}
 \zeta \leq \frac{a(1-b)}{1 + n \delta + a(1-b)}
\end{equation}
is sufficiently small.

Next, we handle the fraction terms by finding $\zeta$ such that
\begin{equation*}
a b \frac{\delta / \rho_j}{1 + \delta / \rho_j} \geq \frac{\zeta n (\rho_j + \delta)}{(1 - \zeta) + \zeta n (\rho_j + \delta)}
	- \frac{\zeta n \rho_j}{(1 - \zeta) + \zeta n \rho_j} \pl.
\end{equation*}
One may observe from the Lemma's assumptions on $\zeta$ that $\zeta < 1$ via the first expression in the minimum. If $\zeta = 0$, then $(1 - \zeta) + \zeta n (\rho_j + \delta) = 1$ with no $\delta$ dependence and is thereby trivially analytic in $\delta$. Assuming $ 0 < \zeta < 1$, and $n > 0$, we may write rewrite $(1 - \zeta) + \zeta n (\rho_j + \delta) = r + s \delta$ for some $r,s > 0$. It is well-known in calculus that $\delta \mapsto 1/(r + s \delta)$ is analytic for $\delta > - r /s$, which includes $\delta > 0$ and any sufficiently small neighborhood around $\delta = 0$. Taylor expanding $\delta \mapsto 1/((1 - \zeta) + \zeta n (\rho_j + \delta))$ around $\delta = 0$,
\begin{equation*}
\begin{split}
& \frac{\zeta n (\rho_j + \delta)}{(1 - \zeta) + \zeta n (\rho_j + \delta)} - \frac{\zeta n \rho_j}{(1 - \zeta) + \zeta n \rho_j} \\
= & \frac{\zeta n (\rho_j + \delta)}{(1 - \zeta) + \zeta n \rho_j} \sum_{k=0}^\infty \Big ( \frac{- \zeta n \delta}{(1 - \zeta) + \zeta n \rho_j} \Big )^k 
	- \frac{\zeta n \rho_j}{(1 - \zeta) + \zeta n \rho_j}  \pl.
\end{split}
\end{equation*}
Canceling the 0th order term,
\begin{equation*}
\begin{split}
... = & \frac{\zeta n \rho_j}{(1 - \zeta) + \zeta n \rho_j} \sum_{k=1}^\infty \Big ( \frac{- \zeta n \delta}{(1 - \zeta) + \zeta n \rho_j} \Big )^k 
	+ \frac{\zeta n \delta}{(1 - \zeta) + \zeta n \rho_j} \sum_{k=0}^\infty \Big ( \frac{- \zeta n \delta}{(1 - \zeta) + \zeta n \rho_j} \Big )^k \\
 = & \frac{\zeta n \rho_j}{(1 - \zeta) + \zeta n \rho_j} \sum_{k=1}^\infty \Big ( \frac{- \zeta n \delta}{(1 - \zeta) + \zeta n \rho_j} \Big )^k 
	- \sum_{k=1}^\infty \Big ( \frac{- \zeta n \delta}{(1 - \zeta) + \zeta n \rho_j} \Big )^k \\
 = & \Big ( \frac{\zeta n \rho_j}{(1 - \zeta) + \zeta n \rho_j} - 1 \Big ) \sum_{k=1}^\infty \Big ( \frac{- \zeta n \delta}{(1 - \zeta) + \zeta n \rho_j} \Big )^k \pl.
\end{split}
\end{equation*}
Now to turn this back into the form of a fraction,
\begin{equation*}
\begin{split}
... = & \frac{- (1 - \zeta)} {(1 - \zeta) + \zeta n \rho_j} \sum_{k=1}^\infty \Big ( \frac{- \zeta n \delta}{(1 - \zeta) + \zeta n \rho_j} \Big )^k \\
= & \frac{(1 - \zeta) \zeta n \delta} {((1 - \zeta) + \zeta n \rho_j)^2} \sum_{k=0}^\infty \Big ( \frac{- \zeta n \delta}{(1 - \zeta) + \zeta n \rho_j} \Big )^k \\
= & \frac{(1 - \zeta) \zeta n \delta} {((1 - \zeta) + \zeta n \rho_j)^2} \frac{1}{1 + \frac{\zeta n \delta}{(1 - \zeta) + \zeta n \rho_j}} \\
= & \frac{1} {(1 - \zeta) + \zeta n \rho_j} \frac{\zeta n \delta}{1 + \frac{\zeta}{1 - \zeta} (\rho_j + \delta) n} \pl.
\end{split}
\end{equation*}
Since $0 \leq \rho_j \leq 1/n$,
\begin{equation*}
\begin{split}
... \leq \frac{\frac{\zeta}{1 - \zeta} n \delta}{1 + \frac{\zeta}{1 - \zeta} n \delta} \pl.
\end{split}
\end{equation*}
We must find a $\zeta$ for which
\begin{equation*}
a b \ln (1 + \delta / \rho_j) \geq a b \frac{\delta / \rho_j}{1 + \delta/\rho_j} \geq a b \frac{n \delta}{1 + n \delta} \geq \frac{\frac{\zeta}{1 - \zeta} n \delta}{1 + \frac{\zeta}{1 - \zeta} n \delta} \p.
\end{equation*}
One can easily check that this is satisfied by
\begin{equation*}
\frac{\zeta}{1 + \zeta} \leq \frac{a b}{(1 - a b)n \delta + 1} \pl,
\end{equation*}
which follows from
\begin{equation} \label{eq:zc2}
\zeta \leq \frac{a b}{(1 - a b) n \delta + a b + 1} \pl.
\end{equation}
Finally, any
\[ \zeta \leq \min \Big \{\frac{a(1-b)}{1 + n + a(1-b)}, \frac{a b}{(1 - a b) n + a b + 1} \Big \} \]
satisfies both inequalities \eqref{eq:zc1} and \eqref{eq:zc2} for all $\delta \in [0,1]$.

For the latter inequality, we restrict attention to the $i$th and $j$th basis vectors. The left hand side of the desired inequality becomes equivalent to
\[ (\rho_i - \epsilon) (a \ln (n (\rho_i - \epsilon) )- \ln((1-\zeta) + \zeta n (\rho_i - \epsilon)))
	+ \epsilon (a \ln (n \epsilon) - \ln((1-\zeta) + \zeta n \epsilon)) \pl. \]
The right hand side becomes
\[ \rho_i (a \ln (n (\rho_i) )- \ln((1-\zeta) + \zeta n \rho_i) \pl. \]
Using that $(x-1)/x \leq \ln x \leq x - 1$ for any $x > 0$, we may estimate that to leading polynomial orders in $\epsilon$ that the difference is $O(\epsilon \ln \epsilon) - O(\epsilon)$. Hence for small enough $\epsilon$, it is positive.
\end{proof}

\begin{cor}[Restatement of \ref{cor:approx}]
Given $a \in [0,1]$ and two densities $\rho,\sigma$ in dimension $d$ such that $\rho \succ \sigma$ ($\rho$ majorizes $\sigma$), 
\begin{equation*}
D(\rho \| (1 - \zeta) \id/d + \zeta \sigma) - (1-\frac{32(d+1) \zeta}{15 + (7 d - 24) \zeta}) D(\rho \| \id/d) \geq 0
\end{equation*}
is achieved whenever
\begin{equation*}
\zeta < \frac{15}{25 d + 56} \pl.
\end{equation*}
With $\zeta \rightarrow 0$ as $d$ is held fixed, we see that this expression is asymptotically tight. We may choose $a = 1 - 1/d$ and
\begin{equation*}
\zeta \leq \frac{15 d - 1}{32d^2 - 7 d^2 + 7 d + 32 d + 24 d - 24} = \frac{15 d - 1}{25 d^2 + 63 d - 24}
\end{equation*}
or $a = 1/2$ and
\begin{equation*}
\zeta \leq \frac{15}{57d + 88} \pl.
\end{equation*}
\end{cor}
\begin{proof}[Proof of \ref{cor:approx}]
Given that
\[\zeta \leq a \min \Big \{\frac{1-b}{n + a(1-b) + 1}, \frac{b}{(1 - a b) n + a b + 1} \Big \} \pl, \]
we seek
\begin{enumerate}
	\item A value of $b$ at which both expressions are equal, such that the minimum is maximized;
	\item The maximum possible constraint on $\zeta$;
	\item For a given $\zeta$, the corresponding best achievable $a$.
	\item A formula for $\zeta$ given a reasonable $a$.
\end{enumerate}
The optimal solution may use numerics. Here we derive an approximation.

First, we solve for an equalizing $b$,
\begin{equation*}
\begin{split}
& (1 - b) (n - a b n + a b + 1) = b (n + a (1 - b) + 1) \\ 
& n - a n b + 1 - b n + a n b^2 - b = b n + b \\
& a n b^2 - (a n + 2 n + 2) b + (n+1) = 0 \pl.
\end{split}
\end{equation*}
By the quadratic formula,
\begin{equation*}
b = \frac{1}{2 a n} \big ( a n + 2 n + 2 {}^+_- \sqrt{(a n + 2 n + 2)^2 - 4 a n (n + 1)} \big ) \pl.
\end{equation*}
Since $a n + 2 n + 2 > 1$, and $b \in [0,1]$, only the ``$-$" root is relevant. Examining the expression in the square root,
\begin{equation*}
\begin{split}
& (a n + 2 n + 2)^2 - 4 a n (n + 1) \\
= & (2 + a)^2 n^2 + 4 (a + 2) n + 4 - 4 a n^2 - 4 a n \\
= & 4 n^2 + 4 a n^2  + a^2 n^2 + 4 a n + 8 n + 4 - 4 a n^2 - 4 a n \\
= & 4 n^2 + a^2 n^2 + 8 n + 4 \\
= & 4(n+1)^2 + a^2 n^2 \pl.
\end{split}
\end{equation*}
Via the binomial approximation, which is a reasonable if rough estimate when $a < 4$,
\[ \sqrt{4(n+1)^2 + a^2 n^2} \approx 2(n+1) \Big (1 + \frac{a^2 n^2}{8 (n+1)^2} \Big ) \pl. \]
We then calculate
\begin{equation*}
\begin{split}
b \approx \frac{1}{2 a n} \Big (a n - \frac{2 (n+1) a^2 n^2}{8(n+1)^2} \Big ) = \frac{1}{2} \Big (1 - \frac{a n}{4(n+1)} \Big )
	\approx \frac{1}{2} \Big (1 - \frac{a}{4} \Big )  \pl.
\end{split}
\end{equation*} \small
Plugging this $b$ approximation back into the original minimization,
\[\zeta \leq a \min \Big \{\frac{4 + a}{8 n + 4 a + a^2 + 8}, \frac{4-a}{(8 - 4 a + a^2) n + 4 a - a^2 + 8} \Big \} \pl, \]
We may find a common numerator by multiplying the left by $4 - a$ and the right by $4 + a$. The left denominator becomes
\[ 32 n + 16 a + 4 a^2 + 32 - 8 n a - 4 a ^2 - a^3 - 8 a = 32 n - 8 n a + 32 + 8 a - a^3 \pl. \]
The right becomes
\begin{equation*}
\begin{split}
& 32 n - 16 n a + 4 n a^2 + 16 a - 4 a ^2 + 32
	+ 8 n a - 4 n a^2 + n a^3 + 4 a^2 - a^3 + 8 a \\
= & 32 n - 8 n a + n a^3 + 32 + 24 a - a^3 \pl.
\end{split}
\end{equation*}
To eliminate nonlinear terms in $a$, we approximate $0 \leq a^3 \leq a^2 \leq a \leq 1$, and overestimate denominators. We conclude that
\begin{equation}
\zeta \leq \zeta_+(a) := \frac{15 a}{32 n - 7 n a + 32 + 24 a} \pl.
\end{equation}
We directly calculate the derivative with respect to $a$,
\begin{equation*}
\frac{d}{da} \zeta_+(a) = \frac{15}{32 n - 7 n a + 32 + 24 a} + \frac{15 a (7 n - 24)}{(32 n - 7 n a + 32 + 24 a)^2} \geq 0
\end{equation*}
to see that $\zeta_+(a)$ is increasing with $a$ on the interval $[0,1]$. Hence by comparing to $a=1$, it is sufficient to find
\begin{equation*}
\zeta \leq \zeta_+(a) \leq \frac{15}{25 n + 56} \pl.
\end{equation*}
When $\zeta$ satisfies this condition, we may find an attainable $a$ by solving a linear equation, yielding
\begin{equation*}
a = \frac{32(n+1) \zeta}{15 + (7 n - 24) \zeta} \pl.
\end{equation*}
For a non-trivial inequality, we must take the inequality on $\zeta$ to be strict, otherwise we solve for $a=1$ and the right hand side becomes 0.

Fixing $a = 1 - 1/n$, we approximate
\begin{equation*}
\begin{split}
\zeta \leq \frac{15 n - 1}{32n^2 - 7 n^2 + 7 n + 32 n + 24 n - 24} = \frac{15 n - 1}{25 n^2 + 63 n - 24}  \pl.
\end{split}
\end{equation*}
This approximation is expected to be good for large $n$. Alternatively, if $a = 1/2$,
\begin{equation*}
\zeta \leq \frac{15}{57n + 88} \pl.
\end{equation*}
\end{proof}
For asymptotically large $n$, our estimated maximal $\zeta$ approaches $3/(5n)$.

Here we show a proof of Proposition \ref{prop:mlsimerge} that does not rely on Fisher information. While the proof is more involved, this version gives some intuition that might be useful in future work. To facilitate this proof, we recall a continuity bound for subalgebra-relative entropy and its weighted generalizations. Let
\begin{equation}
\text{wsb}(\epsilon, \N \subseteq \M, \sigma) := 2 \epsilon \kappa + (1 + 2 \epsilon) h \Big ( \frac{2 \epsilon}{1 + 2 \epsilon}\Big ) \pl,
\end{equation}
where $h$ is the binary entropy, $\N \subset \M$ is a von Neumann subalgebra, and
\begin{equation} \label{eq:indexboundlast}
\kappa := \sup_{\rho, \omega} D(\rho \| \E_{\N, \sigma *}(\rho)) - D(\omega \| \E_{\N, \sigma *}(\omega)) \leq \sup_\rho D(\rho \| \E_{\N, \sigma *}(\rho)) \pl.
\end{equation}
When $\sigma = \id / d$ in dimension $d$, $\kappa = D(\M \| \N) = \sup_\rho D(\E_\M(\rho) \| \E_\N(\rho))$ as described in \citen{gao_relative_2020} and applied to estimate $\text{wsb}$ as Proposition 3.7 therein. In general, Lemma 7 in \citen{winter_tight_2016} implies that
\begin{equation} \label{eq:wsbbound}
|D(\rho \| \E_{\N, \sigma *}(\rho) ) - D(\omega \| \E_{\N, \sigma *}(\omega))| \leq \text{wsb}(\| \rho - \omega\|_1, \N \subseteq \M, \sigma) \pl.
\end{equation} As long as $\sigma$ is faithful, $\kappa$ is finite.
\begin{prop}[Restatement of \ref{prop:mlsimerge}]
Let $\{\Phi^t_j : j \in 1...J \in \NN\}$ be self-adjoint quantum Markov semigroups such that $\Phi_j^t = \exp(- \L_j t)$ with fixed point conditional expectation $\E_{j *} = \lim_{t \rightarrow \infty} \Phi_j^t$ for each $j$ weighted respectively by $(\sigma_j)$. Let $\E_{\sigma *}$ be the weighted intersection fixed point conditional expectation, assuming $\E_{j *}$ are compatibly weighted so that it exists. Let $\Phi^t$ be the semigroup generated by $\L = \sum_j \alpha_j \L_j + \L_0$, where $\L_0$ generates $\Phi_0^t$ such that $\Phi_0^t \E_{\sigma *} = \E_{\sigma *} \Phi_0^t = \E_{\sigma *}$. If $\{\Phi_j^t\}$ has $\{ \alpha_j \}$-(C)SQF, and $\Phi^t_j$ has $\lambda$-(C)MLSI for each $j$, then $\Phi^t$ has $\lambda$-(C)MLSI.
\end{prop}
Though this proposition is stated for particular weightings of the Lindbladians that effectively cancel the contributions of potentially different quasi-factorization constants, one can always adjust these weightings and correspondingly adjust the (C)MLSI constants. One may for instance obtain (C)MLSI for given weightings this way or optimize the weightings to effectively reweight the given (C)MLSI constants.
\begin{proof}
First, assume $\L_0 = 0$. Let $t \in \RR^+$,  $\tau \in (0, t)$, and $\hat{\rho} := \Phi^{t - \tau}(\rho)$ such that $\Phi^t(\rho) = \Phi^\tau(\hat{\rho})$. Let $\N$ be the fixed point algebra of $\Phi^t$ and $\N_j$ the fixed point algebra of $\Phi^t_j$. Let $\M$ be the original algebra containing $\rho$.

By the Suzuki-Trotter expansion $\|\Phi^\tau(\tilde{\rho}) - \Phi^{\alpha_1 \tau}_1 ... \Phi^{\alpha_J \tau}_J (\tilde{\rho})\|_1 \leq O(\tau^2)$. By Equation \eqref{eq:wsbbound},
\begin{equation} \label{eq:trotter}
|D(\Phi^\tau(\hat{\rho}) \| \E_{\sigma *}(\hat{\rho})) - D(\Phi^{\alpha_j \tau}_j ... \Phi^{\alpha_J \tau}_J (\hat{\rho}) \| \E_{\sigma *}(\hat{\rho}))| \leq \text{wsb}(O((J-j) \tau^2), \N \subseteq \M, \sigma).
\end{equation}
for any $j \in 1...J$. Let
\begin{equation} \label{eq:gammaassump}
\gamma_j : = \frac{D(\Phi^{\alpha_{j+1} \tau}_{j+1} ... \Phi^{\alpha_J \tau}_J(\hat{\rho}) \| \E_{j *}(\Phi^{\alpha_{j+1} \tau}_{j+1}
	... \Phi^{\alpha_J \tau}_J(\hat{\rho})))}{D(\Phi^\tau(\hat{\rho}) \| \E_{\sigma *}(\hat{\rho}))}
\end{equation}
when $\hat{\rho} \neq \E_{\sigma *}(\hat{\rho})$. Since both entropies in the ratio are subalgebra-relative, they are both finite. The denominator is zero only if $\Phi^\tau(\hat{\rho})$ is in the intersection subalgebra, which for sufficiently small $\tau$ is true only if $\hat{\rho}$ was in the subalgebra. If $\hat{\rho}$ is in the intersection subalgebra, then it is also in larger subalgebras, so $D(\Phi^{\alpha_{j+1} \tau}_{j+1} ... \Phi^{\alpha_J \tau}_J(\hat{\rho}) \| \E_{j *}(\Phi^{\alpha_{j+1} \tau}_{j+1} ... \Phi^{\alpha_J \tau}_J(\hat{\rho}))) = 0$, and the desired exponential decay holds trivially. Otherwise,
\begin{equation*}
\begin{split}
& D(\Phi^{\alpha_{j} \tau}_{j} ... \Phi^{\alpha_J \tau}_J(\hat{\rho}) \| \E_{\sigma *}(\hat{\rho})) \\
& = D(\Phi^{\alpha_{j} \tau}_{j} ... \Phi^{\alpha_J \tau}_J(\hat{\rho})
		\| \E_{j *} (\Phi^{\alpha_{j} \tau}_{j} ... \Phi^{\alpha_J \tau}_J(\hat{\rho}))) 
	+ D(\E_{j *} (\Phi^{\alpha_{j} \tau}_{j} ... \Phi^{\alpha_J \tau}_J(\hat{\rho}))) \| \E_{\sigma *}(\hat{\rho})) \\
& = D(\Phi^{\alpha_{j} \tau}_{j} ... \Phi^{\alpha_J \tau}_J(\hat{\rho})
		\| \E_{j *} (\Phi^{\alpha_{j} \tau}_{j} ... \Phi^{\alpha_J \tau}_J(\hat{\rho}))) 
	+ D(\E_{j *}(\Phi^{\alpha_{j+1} \tau}_{j+1} ... \Phi^{\alpha_J \tau}_J(\hat{\rho})) \| \E_{\sigma *}(\hat{\rho})) \\
& \leq (1 - \alpha_j \lambda \tau + O(\tau^2)) D(\Phi^{\alpha_{j+1} \tau}_{j+1} ... \Phi^{\alpha_J \tau}_J(\hat{\rho}) \| \E_{j *} (\Phi^{\alpha_{j+1} \tau}_{j+1} ... \Phi^{\alpha_J \tau}_J(\hat{\rho}))) \\ & \pla
	+ D(\E_{j *}(\Phi^{\alpha_{j+1} \tau}_{j+1} ... \Phi^{\alpha_J \tau}_J(\hat{\rho})) \| \E_{\sigma *}(\hat{\rho})) \pl ,
\end{split}
\end{equation*}
where the first equality follows from Lemma \ref{lem:chainexp}, the second equality from $\E_{j *}$ projecting to a fixed point of and thereby absorbing $\Phi_j^{\alpha_j \tau}$, and the first inequality from assumed (C)MLSI of $\Phi_j^{\alpha_j \tau}$. As relative entropies to conditional expectations, both of the final terms are bounded, so we may extract the $O(\tau^2)$ to an additive correction. Next, we recall Equation \eqref{eq:gammaassump} to substitute the subtracted term, yielding
\begin{equation*}
\begin{split}
... \leq & D(\E_{j *}(\Phi^{\alpha_{j+1} \tau}_{j+1} ... \Phi^{\alpha_J \tau}_J(\hat{\rho})) \| \E_{\sigma *}(\hat{\rho}))
	+ D(\Phi^{\alpha_{j+1} \tau}_{j+1} ... \Phi^{\alpha_J \tau}_J(\hat{\rho}) \| \E_{j, \sigma *} (\Phi^{\alpha_{j+1} \tau}_{j+1} ... \Phi^{\alpha_J \tau}_J(\hat{\rho})))
	\\ & - \alpha_j \lambda \tau \gamma_j D(\Phi^\tau(\hat{\rho}) \| \E_{\sigma *}(\hat{\rho})) + O(\tau^2) \pl.
\end{split}
\end{equation*}
Via the chain rule of relative entropy (Lemma \ref{lem:chainexp}),
\begin{equation} \label{eq:jthstep}
 ... = D(\Phi^{\alpha_{j+1} \tau}_{j+1} ... \Phi^{\alpha_J \tau}_J(\hat{\rho}) \| \E_{\sigma *}(\hat{\rho}))
	- \alpha_j \lambda \tau \gamma_j D(\Phi^\tau(\hat{\rho}) \| \E_{\sigma *}(\hat{\rho})) + O(\tau^2) \pl.
\end{equation}
Note that $\kappa$ as in Equation \eqref{eq:indexboundlast} is finite for each $\N_j \subseteq \M$, and $J$ is assumed finite. Iterating Equation \eqref{eq:jthstep},
\begin{equation} \label{eq:iterated}
\begin{split}
& D(\Phi^{\alpha_{1} \tau}_{1} ... \Phi^{\alpha_J \tau}_J(\hat{\rho}) \| \E_{\sigma *}(\hat{\rho})) \leq
\bigg (1 - \lambda \tau \sum_{j=1}^J \alpha_j \gamma_j  \bigg ) D(\hat{\rho} \| \E_{\sigma *}(\rho) ) + O(\tau^2) \pl.
\end{split}
\end{equation}

Now we must lower bound $\sum_j \alpha_j \gamma_j$. Via Equations \eqref{eq:wsbbound} and \eqref{eq:trotter},
\begin{equation*}
\begin{split}
& D(\Phi^{\alpha_{j+1} \tau}_{j+1} ... \Phi^{\alpha_J \tau}_J(\hat{\rho}) \| \E_{j *}(\Phi^{\alpha_{j+1} \tau}_{j+1} ... \Phi^{\alpha_J \tau}_J(\hat{\rho}))) 
 \\ & \pla \geq D(\hat{\rho} \| \E_{j *}(\hat{\rho})) - \text{wsb}((J-j + 1) O(\tau), \N \subseteq \M, \sigma) \pl.
\end{split}
\end{equation*}
Hence via assumed $\{ \alpha_j \}$-(C)SQF and Equation \eqref{eq:gammaassump},
\begin{equation*}
\begin{split}
&  \sum_j \alpha_j \gamma_j D(\hat{\rho} \| \E_{\sigma *} (\hat{\rho})) = \sum_j \alpha_j D(\Phi^{\alpha_{j+1} \tau}_{j+1} ... \Phi^{\alpha_J \tau}_J(\hat{\rho}) \| \E_{j *} \Phi^{\alpha_{j+1} \tau}_{j+1} ... \Phi^{\alpha_J \tau}_J(\hat{\rho})) 
 \\ & \geq \sum_j \alpha_j (D(\hat{\rho} \| \E_{j *}(\hat{\rho})) - \text{wsb}((J-j + 1) O(\tau), \N \subseteq \M, \sigma))
\\ & \geq \min_j D(\hat{\rho} \| \E_{\sigma *} (\hat{\rho}))
		- \sum_{j=1}^J \alpha_j \text{wsb}((J-j + 1) O(\tau), \N \subseteq \M, \sigma) \pl.
\end{split}
\end{equation*}
By data processing, $D(\hat{\rho} \| \E_{\sigma *}(\hat{\rho})) \geq D(\Phi^t(\hat{\rho}) \| \E_{\sigma *}(\hat{\rho}))$, so either the right hand side is zero, or the left hand side is greater than zero. If $D(\Phi^t(\hat{\rho}) \| \E_{\sigma *}(\hat{\rho})) = 0$, then the Theorem is trivially complete. Otherwise, returning to equations \eqref{eq:iterated} and \eqref{eq:trotter},
\begin{equation} \label{eq:taulsi}
\begin{split}
& D(\Phi^\tau(\hat{\rho}) \| \E_{\sigma *}(\hat{\rho}))
	\leq \Big (1 - \tau \lambda  + O(\tau^2 \ln \tau) \Big ) D(\hat{\rho} \| \E_{\sigma *}(\hat{\rho})).
\end{split}
\end{equation}
Since $\Phi^t = (\Phi^\tau)^{t / \tau}$,
\begin{equation}
D(\Phi^t (\rho) \| \E_{\sigma *}(\rho)) \leq (1 - \lambda \tau + O(\tau^2 \ln \tau))^{t/\tau} D(\rho \| \E_{\sigma *}(\rho)) \pl.
\end{equation}
Finally, we observe that
\begin{equation*}
\lim_{\tau \rightarrow 0} ( 1 - \lambda \tau + O(\tau^2 \ln \tau) )^{t/\tau} = e^{- \lambda t} \pl.
\end{equation*}
This limit follows from the fact that any increase in the linear order term in $\tau$ fully absorbs the superlinear order term. Hence this absorbing increase is taken to zero in the limit. This limit completes the proof when $\L_0 = 0$.

Finally, we show that adding $\L_0$ does not reduce the (C)MLSI constant. Let $\tilde{\L} = \sum_{j=1}^J \alpha_j \L_j$. Here we assume that $\tilde{\L}$ has $\lambda$-(C)MLSI. Via the Suzuki-Trotter formula,
\[ \| \Phi^t - ( \exp(- \L_0 t / k) \exp(- \tilde{\L} t / k) )^k \|_1 \leq O(1/k) \]
for any $k \in \NN$. Hence via Equation \eqref{eq:wsbbound},
\[ | D(\Phi^t(\rho) \| \E_{\sigma *}(\rho)) - D(( \exp(- \L_0 t / k) \exp(- \tilde{\L} t / k) )^k \| ) | \leq O(\ln(1/k) / k) \pl. \]
Since we will take $k \rightarrow \infty$, we henceforth work with the Trotter expanded term. For any $l \in 1...k$,
\begin{equation*}
\begin{split}
& D(( \exp(- \L_0 t / k) \exp(- \tilde{\L} t / k) )^l (\rho) \| \E_{\sigma *}(\rho))
	\\ & \pla \leq  D( \exp(- \tilde{\L} t / k) (\exp(- \L_0 t / k) \exp(- \tilde{\L} t / k) )^{l-1} (\rho) \| \E_{\sigma *}(\rho))
	\\ & \pla \leq e^{- \lambda t / k} D( (\exp(- \L_0 t / k) \exp(- \tilde{\L} t / k) )^{l-1} (\rho) \| \E_{\sigma *}(\rho)) \pl,
\end{split}
\end{equation*}
where the 1st inequality follows from data processing and the assumption that $\exp(- \L_0 \tau) \circ \E_{\sigma *} = \E_{\sigma *}$ for all $\tau \in \RR^+$ and the second inequality from the assumed (C)MLSI of $\tilde{\L}$.
\end{proof}

\end{document}